\def \ra {\rightarrow}
\def \E {\mathbb{E}}
\def \be {\beta}
\newtheorem{definition}{\bf Definition}
\newtheorem{defn}[definition]{\bf Definition}
	\newtheorem{remark}{\bf Remark}
	\newtheorem{rmk}[remark]{\bf Remark}
	\newtheorem{theorem}{\bf Theorem}
	\newtheorem{prop}[theorem]{\bf Proposition}
	\newtheorem{lem}[theorem]{\bf Lemma}
	\newtheorem{cor}[theorem]{\bf Corollary}
        \newtheorem{as}[theorem]{\bf Assumption}
\begin{document}

  \title{\bfseries Impact of random monetary shock: a Keynesian case}

  \author{
Paramahansa Pramanik\footnote{Corresponding author, {\small\texttt{ppramanik@southalabama.edu}}}\; \footnote{Department of Mathematics and Statistics,  University of South Alabama, Mobile, AL, 36688,
United States.}
\and
Lambert Dong \footnotemark[2]
}

\date{\today}
\maketitle

\begin{abstract}
This study investigates the optimal strategy for a firm operating in a dynamic Keynesian market setting. The firm's objective function is optimized using the percent deviations from the symmetric equilibrium of both its own price and the aggregate consumer price index (CPI) as state variables, with the strategy in response to random monetary shocks acting as the control variable. Building on the Calvo framework, we adopt a mean field approach to derive an analytic expression for the firm’s optimal strategy. Our theoretical results show that greater volatility leads to a decrease in the optimal strategy. To assess the practical relevance of our model, we apply it to four leading consumer goods firms. Empirical analysis suggests that the observed decline in strategies under uncertainty is significantly steeper than what the model predicts, underscoring the substantial influence of market volatility.
\end{abstract}

{\bf Keywords:} Keynesian economy, monetary shocks, path integral control.

\section{Introduction:}

Although recent advancements have deepened our understanding of general equilibrium models with price rigidity, many existing frameworks still sacrifice realism for analytical ease, often neglecting how firms' pricing strategies influence one another. These interactions are not merely theoretical curiosities, they are essential to explaining how nominal shocks can produce outsized effects on the real economy, as emphasized by \cite{klenow2016real} and \cite{nakamura2018high}. Nonetheless, a large share of the literature either sidesteps these complexities by relying on computational methods \citep{nakamura2018high, klenow2016real, mongey2021market}, oversimplifies timing mechanisms \citep{wang2022dynamic}, or omits firm-specific randomness altogether \citep{caplin1991state}. As a result, the mutual feedback between individual firms’ behavior and overall economic dynamics is often underrepresented in existing theoretical models.

To address these gaps, this study presents a novel analytical model that embeds both strategic pricing complementarities and firm-level shocks within a state-dependent sticky price setting. By building on the driftless Calvo pricing structure and incorporating a mean field approach, the model yields a closed-form characterization of optimal firm behavior in response to macroeconomic conditions. This setup allows firms’ decisions to be shaped by, and to simultaneously influence-aggregate variables, capturing the reciprocal relationship between microeconomic heterogeneity and macro-level fluctuations. The resulting framework not only enhances our understanding of how sticky prices behave in general equilibrium but also establishes a versatile analytical foundation for future work in macroeconomic modeling, including monetary policy analysis and business cycle research \citep{alvarez2023price}.

Analyzing strategic complementarities within a comprehensive equilibrium framework introduces significant challenges: individual decisions depend on aggregate conditions, which themselves are shaped by those decisions. This interdependence creates a fixed-point problem, particularly in models with abrupt transitions, where optimal choices are nonlinear and evolve over time. Wang et al. (2022) \cite{wang2022dynamic} explores this complexity by providing analytical solutions for a dynamic oligopoly model, achieving tractability by assuming exogenous timing of firm price adjustments, similar to the Calvo framework \citep{calvo1983staggered}. Our approach aligns with the goals of \cite{caplin1991state} and \cite{wang2022dynamic} in pursuing analytical tractability. However, it diverges by examining a setting where firm-level decisions are endogenously determined by prevailing conditions and where idiosyncratic shocks play a significant role in shaping firm behavior.

Our treatment of strategic complementarities builds on the foundational work of \cite{caplin1991state}, where a firm’s profit function depends on both its own markup and the average markup across firms. A key distinction in our approach is the inclusion of idiosyncratic shocks, a feature absent in their framework. While Caplin and Leahy analyzed equilibrium under aggregate nominal shocks modeled as a drift-less Brownian motion, our focus shifts to studying the impulse response to a one-time shock with a non-zero drift. This adaptation allows for greater flexibility in exploring a more generalized economic system \citep{pramanik2016,pramanik2021thesis}.

Our work is closely related to \cite{klenow2016real} and \cite{nakamura2018high}, both of which develop Dynamic Stochastic General Equilibrium (DSGE) models with an input-output framework. These models incorporate the sticky prices of other industries into each industry's costs, highlighting \emph{macro strategic complementarities}. Like these studies, our analysis includes a frictionless labor market, firm-level idiosyncratic shocks, and menu costs associated with price adjustments. Notably, \cite{nakamura2018high} account for random menu costs, a feature also present in our model, while \cite{klenow2016real} incorporate variable demand elasticity at the firm level, referred to as \emph{micro-strategic complementarities}, which we also consider. We establish that, to a second-order approximation, the combined influence of micro and macro complementarities can be summarized by a single parameter. While \cite{klenow2016real} and \cite{nakamura2018high} rely on numerical methods to analyze the effects of monetary shocks on aggregate output, our research provides analytical insights into these dynamics.

Our study shares relevance with the work of \cite{wang2022dynamic}, who examine shock propagation in a sticky-price economy characterized by strategic complementarities. Their analytical solution is based on firms following a time-dependent pricing rule similar to the Calvo framework. Several elements of their model align with ours, such as the role of factors like variable demand elasticity, diminishing returns, and non-zero Frisch elasticity, which are summarized by a single parameter. However, key differences distinguish the two approaches. First, \cite{wang2022dynamic} analyze a dynamic oligopoly without idiosyncratic shocks, whereas our model focuses on oligopolistic competition with idiosyncratic shocks—a feature that improves consistency with observed price change distributions in empirical data. Second, their framework assumes exogenous timing of price adjustments, while ours allows firms to endogenously determine both the timing and magnitude of their price changes. The absence of idiosyncratic shocks and the simplification of exogenous timing in their model facilitate connections to the New Keynesian Phillips curve and enable an exploration of the role of strategic complementarities. 

\cite{bertucci2018optimal} adds to the mean field game (MFG) literature by examining an impulse control problem. However, his study focuses on a more straightforward scenario involving a decision-maker who considers a single adjustment with a fixed adjustment target. Furthermore, his work primarily centers on proving the existence and uniqueness of solutions, using a slightly different concept of what constitutes a solution.

\section{Preliminaries:}
\subsection{Basic Calvo Model:}
This section explores an issue concerning strategic complementarities and pricing, as originally discussed by \cite{calvo1983staggered}. This particular scenario garners frequent attention in studies on sticky prices owing to its practical significance. The model offers a clear-cut framework for elucidating the core elements of the analysis and for scrutinizing crucial outcomes such as the existence, uniqueness, and non-monotonic characteristics of impulse response profiles, which also bear relevance to the state-dependent problem \citep{hua2019}.

Let at time $s$, $Z(s)$ be the consumer price index (CPI), $V_i(s)$ be a consumer preference shock corresponding to  $i^{th}$ variety and the price set by a firm on consumer good of $i^{th}$ variety be $\hat Z(s)$ such that $z(s)\equiv\hat z(s)/V_i(s)$. Define $\tilde x(s):=\frac{ z(s)-\hat Z(s)}{Z(s)}$ and $\tilde X(s):=\frac{Z(s)-\hat Z(s)}{Z(s)}$ as percent deviation from symmetric equilibrium of a firm's own and the aggregate price (CPI), respectively. The economy comprises a range of \emph{atomistic} individual firms, each operating independently. Each firm operates under the assumption of a consistent fluctuation in markup averages, denoted by $\tilde X(s)\in\mathcal X$ for all times $s\in[0,t]$, where $\mathcal X$ is a functional space taking the values from $\mathbb R^n$. The firm has the ability to adjust its pricing only at specific, randomly occurring times denoted by $\{\xi_i\}$, which follows a Poisson process characterized by a parameter $\theta$. These instances of adjustment are referred to as \emph{adjustment opportunities}, and the state of the firm's pricing at these times is termed the \emph{optimal reset value}. Following a price adjustment at time s, the difference in markup, $\tilde x(s)$, jumps according to a Brownian motion without a drift component but with a variance of $\sigma^2$. Additionally, the markup experiences abrupt jumps immediately after a price adjustment at $s=\xi_i$, with each jump in markup denoted by $\mathcal U_i$. Therefore, the markup gap evolves as
\begin{equation}\label{1}
\tilde X(s)=\tilde X(0)+\sigma\left[{\bf \mathcal W}(s)-{\bf \mathcal W}(0)\right]+\sum_{\xi_i\leq s} \mathcal \mathcal U_i,\ \ \text{for all $s\in[0,t]$}, 
\end{equation}
where $\bf \mathcal W$ is a $d$-dimensional standard Brownian motion. Furthermore, in the absence of any markup jump the continuous version of Equation (\ref{1}) becomes,
\[
\tilde X(t)=\tilde X(0)+\int_0^t\sigma\ d\mathcal W(s)+\sum_{\xi_i\leq s} \mathcal \mathcal U_i.
\]
Throughout our analysis we use the following form of the SDE
\begin{equation}\label{1.1}
d\tilde X(s)=\mu[s,m(s),\tilde X(s)]ds+\sigma[s,m(s),\tilde X(s)]d\mathcal W(s)+\sum_{\xi_i\leq s} \mathcal \mathcal U_i,    
\end{equation}
where $\mu:[0,t]\times\mathcal X\times\mathcal M\mapsto \mathbb R^{n\times n}$ and $\sigma:[0,t]\times\mathcal X\times\mathcal M\mapsto \mathbb R^{n\times d}$ are the drift and diffusion components, respectively, and $m(s)\in\mathcal M$ is the complementarity strategy of a firm due to random monetary shock at time $s$, where an element of the  adaptive control space $\mathcal M$ takes value from $\mathbb R^n$. Throughout the analysis we denote $m(s)$ as adaptive control \citep{polansky2021motif}. An unexpected shift in monetary policy, known as a monetary shock, can lead to strategic complementarities among firms when their responses to the shock mutually reinforce each other. Such shocks, which may involve abrupt changes in the money supply or interest rates, affect key economic factors such as aggregate demand and inflation. Strategic complementarities emerge when the optimal reaction of a firm to the shock is positively influenced by the decisions of other firms.

Monetary shocks create strategic complementarities among firms by\\
(i). Pricing decisions: A positive monetary shock (i.e., lower interest rates or higher money supply) increases aggregate demand. If one firm raises its prices in anticipation of higher demand or inflation, others may follow, reinforcing each other's price-setting behavior. On the other hand, a negative monetary shock (e.g., higher interest rates or reduced money supply) lowers aggregate demand. If one firm lowers prices to maintain competitiveness, others may feel pressured to do the same, creating strategic complementarity in price adjustments. This type of environment is useful in sticky price model where firms' pricing decisions depend on expectations of other firms' price-setting behavior.\\
(ii). Production and investment decisions: A monetary expansion reduces borrowing costs and increases expected demand. If a firm increases production or invests in capacity, it signals higher confidence in future demand, encouraging others to follow.\\
(iii). Hiring and wage decisions: Under monetary expansion, if a firm hires more workers or raises wages in response to an expected demand increase, others would implement the same strategy to compete for labor or to prepare for higher future demand, while a firm cutting jobs or wages during a monetary contraction may make it optimal for other firms to reduce costs similarly to remain competitive.\\
(iv). Consumer expectations: A monetary shock influences consumer spending and saving behavior. Firms responding to these changes strategically may amplify each other's actions by prompting firms to increase advertising or launch new products, creating a positive feedback loop. Contrarily, a monetary contraction leads to firms simultaneously scaling back operations or delaying investments.
The strategic complementarities via monetary shocks are based on conditions like, a firms must operate under a setting where their actions influence each other (e.g., oligopolistic markets, input-output linkages), firms must form expectations about the monetary shock and its effects based on other firms' behavior, and price and wage rigidities, coordination problems, or incomplete information make the complementarities more pronounced.

\subsection{Probalistic Foundation:}
Let $t>0$  be a fixed, finite time. Consider $\mathcal W(s)$, a d-dimensional Wiener process observed during a markup dynamics at time \( s \in [0, t] \). This process is defined on a complete probability space \( (\Omega, \mathcal{F}, \mathcal{P}) \), where \( \Omega \) is the sample space, \( \mathcal{F} \) is the \(\sigma\)-algebra, and \( \mathcal{P} \) is a probability measure.

\begin{defn}\label{d0}
     Let \(\left\{\mathcal{F}_t\right\}_{t \in I}\) be a collection of sub-\(\sigma\)-fields of \(\mathcal{F}\), where \(I\) is an ordered index set satisfying the condition \(\mathcal{F}_s \subset \mathcal{F}_t\) for all \(s < t\) with \(s, t \in I\). This collection, \(\left\{\mathcal{F}_t\right\}_{t \in I}\), is referred to as a markup filtration of the process generated due to strategic complementarities.
  \end{defn}

If we simply consider  the markup dynamics $\{\tilde x\}_{t\in I}$ or simply $\tilde x(t)$, then this implies that the choice of markup filtration corresponding to a strategic complementarities due to monetary shock 
\[
\mathcal F_t:=\mathcal F_t^{\tilde x}:=\sigma\left\{\tilde x(s)\bigg|s\leq t;\  s,t\in I\right\},
\] 
which is termed as canonical or natural markup filtration of ${\tilde x}_{t\in I}$. In our case, the Wiener process is associated with the canonical markup filtration,
\[
\mathcal F_t^{\mathcal W}:=\sigma\left\{\mathcal W(s)\bigg|0\leq s\leq t\right\},\ \ t\in[0,\infty).
\]

\begin{defn}\label{d1}
    A set $\{(\tilde X(t),\mathcal F_t\}_{t\in I}$ with filtration $\left\{\mathcal{F}_t\right\}_{t \in I}$ and a family of $\mathbb R^n$-valued markup $\{\tilde X(t)\}_{t\in I}$ with $\tilde X(t)$ being $\mathcal F_t$-measurable  is defined a stochastic process with markup filtration $\left\{\mathcal{F}_t\right\}_{t \in I}$.
\end{defn}

\begin{defn}\label{d2}
  Let $(\Omega,\mathcal F,\mathcal P)$ be a probability space with filtration \(\left\{\mathcal{F}_t\right\}_{t \in I}\). A real valued adapted stochastic markup process $\tilde X(t)$ is a martingale with respect to \(\left\{\mathcal{F}_t\right\}_{t \in I}\) if $\E|\tilde X(t)|<\infty$ for all $t$ and for all $s\leq t$ we have $\E\left\{\tilde X(t)|\mathcal F_s\right\}=\tilde X(s)$.
\end{defn}
 A martingale is a completely random process, characterized by the property that, based on its past behavior, the expected value at any future point in time equals its current value. Note that, $\E\{\tilde X(t)\}=\E\{X(0)\}$ for all $t\in[0,\infty)$.

 \begin{lem}\label{l0}
 (i). The Wiener process $\{\mathcal W(t)\}_{t\in[0,\infty)}$ is an $\mathcal F_t^\mathcal W$ martingale.\\
 (ii). The Process $\left\{\mathcal W^2(t)-t\right\}_{t\in[0,\infty)}$ is an $\mathcal F_t^\mathcal W$ martingale.\\
 (iii). $\left\{\exp\left[\sigma\mathcal W(t)-\frac{1}{2}\sigma^2 t\right]\right\}_{t\in[0,\infty)}$ is an $\mathcal F_t^\mathcal W$ martingale.
 \end{lem}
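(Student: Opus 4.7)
The plan is to verify, for each of the three processes, the two clauses of Definition \ref{d2} --- finiteness of the first moment and the conditional expectation identity --- by exploiting the three defining properties of the Wiener process: Gaussianity, stationarity of increments, and independence of the increment $\mathcal W(t)-\mathcal W(s)$ from $\mathcal F_s^{\mathcal W}$ for every $s\le t$. The unifying technical device will be the decomposition $\mathcal W(t)=[\mathcal W(t)-\mathcal W(s)]+\mathcal W(s)$, so that $\mathcal W(s)$ can be pulled outside any conditional expectation by $\mathcal F_s^{\mathcal W}$-measurability, while the conditional expectation of any function of the increment $\mathcal W(t)-\mathcal W(s)$ collapses to its unconditional expectation by independence.

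For part (i), integrability will follow from $\mathbb{E}|\mathcal W(t)|\le\sqrt{\mathbb{E}[\mathcal W(t)^2]}=\sqrt t$, and the martingale identity will reduce to the observation that the conditional expectation of a mean-zero Gaussian increment is zero. For part (ii), I would expand $\mathcal W(t)^2=[\mathcal W(t)-\mathcal W(s)]^2+2\mathcal W(s)[\mathcal W(t)-\mathcal W(s)]+\mathcal W(s)^2$ and then take $\mathbb{E}[\,\cdot\mid\mathcal F_s^{\mathcal W}]$ term by term: the first term contributes $t-s$ by stationarity and independence, the cross term vanishes because $\mathcal W(s)$ factors out and the increment has mean zero, and the third term equals $\mathcal W(s)^2$ by measurability. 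Subtracting $t$ from both sides yields $\mathbb{E}[\mathcal W(t)^2-t\mid\mathcal F_s^{\mathcal W}]=\mathcal W(s)^2-s$, which is exactly the desired identity.

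For part (iii), the starting point will be the multiplicative factorization
\[
\exp\!\left[\sigma\mathcal W(t)-\tfrac12\sigma^2 t\right]=\exp\!\left[\sigma\mathcal W(s)-\tfrac12\sigma^2 s\right]\cdot\exp\!\left\{\sigma[\mathcal W(t)-\mathcal W(s)]-\tfrac12\sigma^2(t-s)\right\}.
\]
The first factor is $\mathcal F_s^{\mathcal W}$-measurable and comes out of the conditional expectation; the conditional expectation of the second factor, by independence, equals its unconditional expectation, which is precisely $1$ by the Gaussian moment generating function identity $\mathbb{E}\exp(\sigma Z)=\exp(\sigma^2\tau/2)$ for $Z\sim\mathcal N(0,\tau)$ with $\tau=t-s$. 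Integrability for each finite $t$ follows from the same Gaussian MGF computation taken at $s=0$.

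None of the three parts presents a genuine obstacle, as they are textbook consequences of the independent-increments structure. The only step that deserves explicit care is the passage from conditional to unconditional expectation in part (iii), where the independence of $\mathcal W(t)-\mathcal W(s)$ from $\mathcal F_s^{\mathcal W}$ is exactly what permits replacing the conditional moment generating function by its unconditional Gaussian value; everything else is direct algebra followed by standard Gaussian moment computations.
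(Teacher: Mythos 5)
Your proposal is correct and follows essentially the same route as the paper's proof: independence of increments for (i), the term-by-term expansion of $\mathcal W^2(t)$ around $\mathcal W(s)$ for (ii), and the multiplicative factorization plus the Gaussian moment generating function for (iii). The only difference is that you explicitly verify integrability in each part, which the paper omits but which is a welcome (and minor) addition rather than a change of method.
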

 \begin{proof}
    See in the Appendix.
\end{proof}

\begin{rmk}\label{r0}
    Lemma \ref{l0} highlights the martingale properties of the Wiener process and its related transformations. It shows that the Wiener process, its squared process adjusted for time, and a specific exponential transformation all satisfy the martingale property under the natural filtration. 
\end{rmk}
\begin{lem}\label{l1}
   Let the markup $\tilde X(t)$ for all $t\in[0,\infty)$ be a super martingale. Then for a constant non-negative monetary shock $\eta$, following inequality holds
   \begin{equation*}
       \eta P\left[\inf_t\tilde X(t)\leq-\eta\right]\leq\sup_t\E\bigg\{\max\left(-\tilde X(t),0\right)\bigg\},
   \end{equation*}
   for each $\eta\geq 0$.
\end{lem}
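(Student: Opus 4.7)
The plan is to recognize the claim as the supermartingale version of Doob's maximal inequality and reduce it to the submartingale form applied to $Y(s) := -\tilde X(s)$. Since $\tilde X$ is an $\mathcal F_s$-supermartingale on $[0,\infty)$, the process $Y$ is a submartingale, and because $y \mapsto \max(y,0)$ is convex and non-decreasing, $Y^+(s) := \max(-\tilde X(s), 0)$ is again a submartingale with respect to $\{\mathcal F_s\}$. Since $\{\inf_s \tilde X(s) \leq -\eta\} = \{\sup_s Y(s) \geq \eta\}$, the stated inequality is equivalent to
\[
\eta \, \mathcal P\left[\sup_s Y(s) \geq \eta\right] \leq \sup_s \E[Y^+(s)].
\]

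To prove the latter, I would fix a finite horizon $T > 0$ and introduce the hitting time $\tau_\eta := \inf\{s \in [0,T] : Y(s) \geq \eta\}$, with the convention $\tau_\eta = +\infty$ whenever the defining set is empty. Working with a right-continuous (cadlag) modification of $Y$, one has $Y(\tau_\eta) \geq \eta$ on $\{\tau_\eta \leq T\}$. Applying the optional sampling theorem to the submartingale $Y$ and to the bounded stopping times $\tau_\eta \wedge T \leq T$ yields $\E[Y(\tau_\eta \wedge T)] \leq \E[Y(T)]$. Splitting the left-hand side according to whether $\tau_\eta \leq T$ and cancelling the common term $\E\bigl[Y(T)\mathbbm{1}_{\{\tau_\eta > T\}}\bigr]$ from both sides produces
\[
\eta\, \mathcal P[\tau_\eta \leq T] \leq \E\bigl[Y(T)\mathbbm{1}_{\{\tau_\eta \leq T\}}\bigr] \leq \E[Y^+(T)] \leq \sup_s \E[Y^+(s)].
\]
Since $\{\tau_\eta \leq T\} \uparrow \{\sup_s Y(s) \geq \eta\}$ as $T \uparrow \infty$, monotone convergence on the left-hand side delivers the required bound.

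The step I expect to be the main obstacle is justifying the optional sampling invocation in continuous time: it requires a cadlag modification of $Y$ so that $\tau_\eta$ is a genuine stopping time with respect to $\{\mathcal F_s\}$ and so that the pathwise inequality $Y(\tau_\eta) \geq \eta$ actually holds on $\{\tau_\eta \leq T\}$. Given that $\tilde X$ is described in Equation (\ref{1.1}) as a diffusion with compound-Poisson jumps on an augmented filtration satisfying the usual conditions, such a modification is available, and the rest of the argument reduces to the routine splitting of expectations and monotone passage to the limit sketched above.
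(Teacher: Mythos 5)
Your proposal is correct and follows essentially the same route as the paper: both arguments apply optional sampling at the first time the process crosses the level $-\eta$ (the paper stops the supermartingale $\tilde X$ directly and uses $\E\{\tilde X[A\wedge t]\}\geq\E\{\tilde X(t)\}$, while you negate to the submartingale $Y=-\tilde X$ and use $\E[Y(\tau_\eta\wedge T)]\leq\E[Y(T)]$, which is the same inequality up to sign), then split the expectation on the event that the level is reached and let the finite horizon tend to infinity. Your explicit attention to the cadlag modification needed to make the hitting time a genuine stopping time is a point the paper glosses over by citing an external optional-sampling theorem, but the substance of the argument is identical.
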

\begin{proof}
    See in the Appendix.
\end{proof}
\begin{prop}\label{p1}
 Let $\left\{\tilde X(t)\right\}_{t\in[0,\infty)}$  be a martingale. Then for a constant non-negative monetary shock we have
 \[
 \eta P\left[\sup_t |\tilde X(t)|\leq -\eta\right]\leq\sup_t||\tilde X(t)||_1.
 \]
\end{prop}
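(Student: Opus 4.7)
The plan is to reduce Proposition \ref{p1} to Lemma \ref{l1} by converting the martingale $\tilde X(t)$ into an appropriate supermartingale to which the previous inequality directly applies. I read the inequality in the statement as the standard Doob-type maximal bound, i.e.\ with $\sup_t |\tilde X(t)| \geq \eta$ in the event on the left-hand side, which is the only interpretation compatible with $\eta \geq 0$ and the right-hand side $\sup_t \|\tilde X(t)\|_1$.

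First I would observe that, since $\tilde X(t)$ is a martingale with $\E|\tilde X(t)|<\infty$, the process $|\tilde X(t)|$ is a nonnegative submartingale: the map $x\mapsto|x|$ is convex, so Jensen's inequality for conditional expectations gives $\E\{|\tilde X(t)|\mid\mathcal{F}_s\}\geq |\E\{\tilde X(t)\mid\mathcal{F}_s\}|=|\tilde X(s)|$ for $s\leq t$. Consequently the process $Y(t):=-|\tilde X(t)|$ is a supermartingale adapted to the same filtration, with $\E|Y(t)|=\E|\tilde X(t)|<\infty$.

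Next I would apply Lemma \ref{l1} to $Y(t)$. The lemma yields
\[
\eta\, P\Big[\inf_t Y(t)\leq -\eta\Big]\leq \sup_t\E\bigl\{\max(-Y(t),0)\bigr\}.
\]
Since $Y(t)=-|\tilde X(t)|\leq 0$, we have $\max(-Y(t),0)=|\tilde X(t)|$, so the right-hand side equals $\sup_t\E|\tilde X(t)|=\sup_t\|\tilde X(t)\|_1$. For the left-hand side, the key observation is the set equality
\[
\Big\{\inf_t Y(t)\leq -\eta\Big\}=\Big\{\sup_t |\tilde X(t)|\geq \eta\Big\},
\]
which holds because $\inf_t(-|\tilde X(t)|)=-\sup_t|\tilde X(t)|$. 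Combining these gives precisely the maximal inequality in Proposition \ref{p1}.

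The main obstacle I anticipate is not computational but conceptual: one must recognize that a martingale is not itself a supermartingale in general, so Lemma \ref{l1} cannot be applied directly to $\tilde X(t)$, and the passage through the submartingale $|\tilde X(t)|$ (via Jensen) is the essential trick. A secondary subtlety is measurability of the suprema/infima over the uncountable time index; this is handled by either assuming right-continuous sample paths (standard in the Wiener-driven setting of Section 2) or by taking suprema along a dense countable set and invoking continuity, so that the events in question are genuinely $\mathcal{F}$-measurable and Lemma \ref{l1} applies verbatim.
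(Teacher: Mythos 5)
Your proposal is correct and follows essentially the same route as the paper's own proof: both pass from the martingale $\tilde X(t)$ to the supermartingale $-|\tilde X(t)|$ via Jensen's inequality and then invoke Lemma \ref{l1} together with the identification $\max(|\tilde X(t)|,0)=|\tilde X(t)|$ and the event correspondence $\{\inf_t(-|\tilde X(t)|)\leq-\eta\}=\{\sup_t|\tilde X(t)|\geq\eta\}$. If anything, your write-up is more careful than the paper's (which loosely calls $-|\tilde X(t)|$ a ``negative martingale'' rather than a supermartingale), and you correctly flag the sign typo in the stated event.
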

\begin{proof}
    Jensen's inequality implies, if markup $\tilde X(t)$ is a martingale, then $\tilde B(t)=-|\tilde X(t)|$ is a negative martingale such that $||\tilde B(t)||_1=||\tilde X(t)||_1=\E\bigg\{\max\left(-\tilde B(t),0\right)\bigg\}$. Moreover, 
    \[
    \left[\inf_t\tilde B(t)\leq-\eta\right]\leq\left[\sup_t|\tilde X(t)|\geq\eta\right].
    \]
    The result follows by Lemma \ref{l1}.
\end{proof}

\begin{prop}\label{p2}
  Consider two markups $\tilde X$ and $\hat X$ defined on $(\Omega,\mathcal F,\mathcal P)$ so that for some $d\in(1,\infty)$ assume $\tilde X\in L^d$. For any positive monetary shock if
  \[
  \eta P\left[(\hat X\geq\eta)\right]\leq\int_{(\hat X\geq\eta)}\tilde XdP,
  \]
  then $||\hat X||_d\leq\tilde d\ ||\tilde X||_d,$ such that $1/d+1/\tilde d=1$.
\end{prop}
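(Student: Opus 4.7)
The plan is to prove the proposition by running the classical layer-cake argument that underlies Doob's $L^d$-maximal inequality. The hypothesis is a weak-type tail bound of exactly the form that, once integrated against the measure $d\eta^{d-2}\,d\eta$ and combined with Fubini--Tonelli and H\"older's inequality, upgrades to the desired strong-type conclusion relating $\|\hat X\|_d$ and $\|\tilde X\|_d$.

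Concretely, I would first reduce without loss of generality to the case $\hat X\ge 0$ (the hypothesis is only informative on the positive tail, so one may replace $\hat X$ by $\hat X^+$, which only enlarges the level sets), and then start from the layer-cake identity
\begin{equation*}
\E |\hat X|^d \;=\; d\int_0^\infty \eta^{d-1}\, P(\hat X\ge \eta)\, d\eta.
\end{equation*}
Multiplying the hypothesis by $d\,\eta^{d-2}$ and integrating over $\eta\in(0,\infty)$ gives
\begin{equation*}
\E|\hat X|^d \;\le\; d\int_0^\infty \eta^{d-2}\int_{\{\hat X\ge \eta\}} \tilde X\, dP\, d\eta.
\end{equation*}
Because every integrand is non-negative, Fubini--Tonelli permits swapping the order of integration, and the inner $\eta$-integral collapses to $\hat X^{d-1}/(d-1)$. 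This reduces the right-hand side to $\tilde d\,\E[\tilde X\, \hat X^{d-1}]$ with $\tilde d = d/(d-1)$, and H\"older's inequality with the conjugate pair $(d,\tilde d)$ then bounds it by $\tilde d\,\|\tilde X\|_d\,\|\hat X\|_d^{d-1}$. Dividing through by $\|\hat X\|_d^{d-1}$ delivers $\|\hat X\|_d\le \tilde d\,\|\tilde X\|_d$.

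The delicate step is this final division, which tacitly assumes $\|\hat X\|_d<\infty$: this is precisely what we are trying to prove, and without some a priori finiteness the manipulation would be vacuous. The standard remedy is truncation. Set $\hat X_n := \hat X\wedge n$ and observe that $\{\hat X_n\ge \eta\}=\{\hat X\ge \eta\}$ for $\eta\le n$ and is empty for $\eta>n$, so the hypothesis passes intact from $\hat X$ to $\hat X_n$. Since $\hat X_n$ is bounded, the chain of inequalities above is unconditionally rigorous and yields $\|\hat X_n\|_d\le \tilde d\,\|\tilde X\|_d$ for every $n$. Monotone convergence applied to $\hat X_n^d\uparrow \hat X^d$ then transfers the bound to $\hat X$ itself, establishing both $\hat X\in L^d$ and the advertised inequality as a by-product.
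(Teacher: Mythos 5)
Your proposal is correct and follows essentially the same route as the paper's own proof: both start from the layer-cake representation of $\E\{\hat X^d\}$, insert the weak-type hypothesis, swap the order of integration by Fubini--Tonelli to obtain $\tilde d\,\E\{\tilde X\hat X^{d-1}\}$, apply H\"older with the conjugate pair $(d,\tilde d)$, and handle the division by $\|\hat X\|_d^{d-1}$ via the truncation $\hat X\wedge k$ followed by a passage to the limit. Your treatment is, if anything, slightly more careful about the a priori finiteness issue and the reduction to $\hat X\ge 0$, but the argument is the same.
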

\begin{proof}
    See in the Appendix.
\end{proof}

\begin{prop}\label{p3}
    For all $t\in[0,\infty)$ let markup $\tilde X(t)$ be right continuous submartingale. Define $\bar X(\omega):=\sup_t\tilde X(t,\omega)$. Then for a given market shock $\eta\in(1,\infty]$, and $\bar X\in L^d$ iff $\sup_t||\tilde X(t)||_d<\infty$. Moreover, if $(\tilde d)^{-1}=1-d^{-1}$, and the market shock is positive and finite, then $||\bar X||_d\leq\tilde d\sup_t||\tilde X(t)||_d$.
\end{prop}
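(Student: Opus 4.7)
The plan is to recognize Proposition \ref{p3} as the classical Doob $L^d$-maximal inequality for right-continuous submartingales and to reduce it to Proposition \ref{p2} via Doob's weak maximal inequality. One direction of the iff is immediate: since $\tilde X(t,\omega)\leq\bar X(\omega)$ for every $t$, the assumption $\bar X\in L^d$ forces $\sup_t\|\tilde X(t)\|_d\leq\|\bar X\|_d<\infty$. The substantive content therefore lies in the reverse implication, which itself follows once the quantitative estimate $\|\bar X\|_d\leq\tilde d\sup_t\|\tilde X(t)\|_d$ has been established.

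First I would fix a finite horizon $T>0$ and a partition $0=t_0<t_1<\cdots<t_n=T$. Because $x\mapsto x^+$ is convex and non-decreasing, $\{\tilde X(t_k)^+\}_{k=0}^n$ is a discrete-time non-negative submartingale, and Doob's weak maximal inequality, obtained by optional stopping at the first index where $\tilde X(t_k)^+$ crosses $\eta$, yields
\[
\eta\,P\!\left(\max_{0\leq k\leq n}\tilde X(t_k)^+\geq\eta\right)\leq\int_{\{\max_{k}\tilde X(t_k)^+\geq\eta\}}\tilde X(T)^+\,dP.
\]
This is exactly the hypothesis of Proposition \ref{p2} with $\hat X=\max_{0\leq k\leq n}\tilde X(t_k)^+$ and the dominating variable $\tilde X(T)^+$, so invoking Proposition \ref{p2} delivers
\[
\left\|\max_{0\leq k\leq n}\tilde X(t_k)^+\right\|_d\leq\tilde d\,\|\tilde X(T)^+\|_d\leq\tilde d\,\sup_{t\geq 0}\|\tilde X(t)\|_d.
\]

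Next I would upgrade this discrete bound to the full continuous supremum. Taking an increasing sequence of partitions whose union is dense in $[0,T]$, the right-continuity of $t\mapsto\tilde X(t)$ guarantees that $\max_k\tilde X(t_k)^+$ increases almost surely to $\sup_{0\leq t\leq T}\tilde X(t)^+$; monotone convergence then lifts the inequality to $\|\sup_{0\leq t\leq T}\tilde X(t)^+\|_d\leq\tilde d\sup_{t\geq 0}\|\tilde X(t)\|_d$. A second monotone passage as $T\uparrow\infty$ delivers $\|\bar X\|_d\leq\tilde d\sup_t\|\tilde X(t)\|_d$, which simultaneously closes the remaining direction of the iff.

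The main obstacle is the derivation in Step 1: rigorously deducing the weak maximal inequality from the submartingale property via optional stopping at $\tau:=\min\{k:\tilde X(t_k)^+\geq\eta\}\wedge n$, together with the measurability care needed so that the monotone limit of the discrete skeleta really does coincide with $\sup_{t\in[0,T]}\tilde X(t)^+$; this identification is exactly where right-continuity is used. Once these two ingredients are secured, Proposition \ref{p2} converts the weak-type estimate into the $L^d$ bound with no further work, and the equivalence in the iff is a by-product.
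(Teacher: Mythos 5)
Your proposal is correct and shares the paper's overall architecture --- establish a weak-type maximal inequality and then invoke Proposition \ref{p2} to convert it into the $L^d$ bound --- but the two arguments obtain that weak-type inequality by genuinely different routes. The paper first uses $L^d$-boundedness to extract an almost-sure limit $\tilde X_\infty=\lim_{t\to\infty}\tilde X(t)$ (via uniform integrability and the submartingale convergence results it cites), controls $\|\tilde X_\infty\|_d$ by Fatou, applies the already-proved continuous-time supermartingale inequality (Lemma \ref{l1}) to $-\tilde X$ to get $\eta P[\bar X\geq\eta]\leq\int_{[\bar X\geq\eta]}\tilde X_\infty\,dP$, and then applies Proposition \ref{p2} \emph{once}, with $\hat X=\bar X$ and dominating variable $\tilde X_\infty$. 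You instead avoid the limit variable entirely: you discretize a finite horizon, re-derive Doob's weak inequality by optional stopping on the skeleton $\{\tilde X(t_k)^+\}$, apply Proposition \ref{p2} at each finite stage with dominating variable $\tilde X(T)^+$, and recover the continuous supremum by two monotone passages (refining partitions, then $T\uparrow\infty$), with right-continuity doing the work of identifying the discrete and continuous suprema. Your route is more self-contained --- it needs neither Lemma \ref{l1} nor any convergence theorem for $\tilde X_\infty$ --- and it makes the role of right-continuity explicit, which the paper only gestures at; the paper's route is shorter given its earlier lemmas and sidesteps the discretization and double limiting procedure. One small point worth flagging in a final write-up: your estimate is really for $\sup_t\tilde X(t)^+$, which dominates $\bar X=\sup_t\tilde X(t)$, so the stated conclusion follows; the paper is equally loose on this sign issue, so it is not a gap relative to the target statement.
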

\begin{proof}
    See in the Appendix.
\end{proof}

\begin{cor}\label{c0}
 (Doob's inequality) If $d=\tilde d=2$ is defined in continuous time interval $(0,t)$, and $\tilde X(s)\}_{s\geq 0}$ is a martingale, then
 \[
 \E\left\{\sup_{s\in[0.t]}|\tilde X(s)|^2\right\}\leq 4\E\bigg\{|\tilde X(t)|^2\bigg\}.
 \]
\end{cor}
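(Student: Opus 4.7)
The plan is to deduce Corollary~\ref{c0} as a direct specialization of Proposition~\ref{p3}, after first passing from the martingale $\tilde X$ to its absolute value, which is a non-negative submartingale. Concretely, my first step is to observe that $x \mapsto |x|$ is convex, so that conditional Jensen's inequality, combined with the martingale property $\E\{\tilde X(t)\mid \mathcal F_s\} = \tilde X(s)$ from Definition~\ref{d2}, yields
\[
\E\bigl\{|\tilde X(t)|\bigm| \mathcal F_s\bigr\} \;\geq\; \bigl|\E\{\tilde X(t)\mid \mathcal F_s\}\bigr| \;=\; |\tilde X(s)|,
\]
for all $s \leq t$. Thus $\{|\tilde X(s)|\}_{s\geq 0}$ is a right-continuous (non-negative) submartingale, which is exactly the setting required by Proposition~\ref{p3}.

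Next, I would invoke Proposition~\ref{p3} with the choice $d = \tilde d = 2$, which is admissible since $1/2 + 1/2 = 1$, and with the submartingale $|\tilde X(s)|$ in place of $\tilde X(s)$. Setting $\bar X := \sup_{s\in[0,t]} |\tilde X(s)|$, the proposition gives
\[
\|\bar X\|_2 \;\leq\; 2 \sup_{s\in[0,t]} \bigl\|\,|\tilde X(s)|\,\bigr\|_2 \;=\; 2 \sup_{s\in[0,t]} \|\tilde X(s)\|_2,
\]
provided the right-hand side is finite. To remove the supremum on the right I would use the fact that $|\tilde X(s)|^2$ is itself a submartingale (apply Jensen's inequality again with the convex function $x \mapsto x^2$), so that $s \mapsto \E\{|\tilde X(s)|^2\}$ is non-decreasing on $[0,t]$. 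Hence $\sup_{s\in[0,t]} \|\tilde X(s)\|_2 = \|\tilde X(t)\|_2$.

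Combining the two displays gives $\|\bar X\|_2 \leq 2\,\|\tilde X(t)\|_2$, and squaring both sides produces the desired bound
\[
\E\Bigl\{\sup_{s\in[0,t]} |\tilde X(s)|^2\Bigr\} \;\leq\; 4\,\E\bigl\{|\tilde X(t)|^2\bigr\}.
\]
The main obstacle is a technical rather than conceptual one: Proposition~\ref{p3} is phrased for submartingales, while the corollary's hypothesis only assumes $\tilde X$ is a martingale, so one must justify the two applications of Jensen's inequality (once to pass to $|\tilde X|$, once to pass to $|\tilde X|^2$) and verify that $|\tilde X|$ inherits right-continuity from $\tilde X$, which is immediate since $|\cdot|$ is continuous. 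Assuming $\E|\tilde X(t)|^2 < \infty$ (otherwise the inequality is vacuous), all finiteness conditions in Proposition~\ref{p3} are met, and the result follows without further computation.
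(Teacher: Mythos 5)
Your proposal is correct and follows exactly the route the paper intends: Corollary~\ref{c0} is stated in the paper without a separate proof, as an immediate specialization of Proposition~\ref{p3} with $d=\tilde d=2$, and your argument simply fills in the routine details (passing to the non-negative submartingale $|\tilde X|$ via Jensen, identifying $\sup_{s\in[0,t]}\|\tilde X(s)\|_2$ with $\|\tilde X(t)\|_2$ via the submartingale property of $|\tilde X|^2$, and squaring). No gaps.
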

In this paper we are considering a Cox-Ingersoll-Ross (CIR) version of the SDE expressed in Equation \eqref{1.1}. The markup dynamics is represented as 
\begin{equation}\label{1.2}
  d\tilde X(s) = \bigg\{ \tilde\theta \left[u - \tilde X(s) \right] + m^2(s) \bigg\} ds + \sigma \sqrt{\tilde X(s)} \mathcal W(s)+ \sum_{k:T_k\leq \tilde s}J_{(k)},  
\end{equation}
where $\tilde\theta$ is mean reversion rate constant, u is the long term mean of the markup dynamics, $m(s)$ represent strategic complementarity due to monetary shock, $\sigma$ is homoscedastic variance in $[0,t]$, and $J_{(k)}$ is amount of markup jump right after a price change at time $s^*=T_k$, for all $\tilde s\in[ s^*,t]$.

Similar to \cite{alvarez2023price} we assume that the strategic complementarities due to monetary shock at work up the finite time horizon $t$. Moreover, for $s<t$, by Proposition 1 of \cite{alvarez2023price} the period flow cost is $E\big[\tilde x(s)+\be \tilde X(s)\big]^2$, such that 
\[
E\equiv\frac{1}{2}\bigg[\rho'(1)+\rho(1)\big(\rho(1)-1\big)\bigg]>0,\ \text{and}\ \be=-\frac{\hat Z}{z^*}\frac{\partial z^*}{\partial {\hat Z}},
\]
where $\rho$ is the elasticity of demand with respect to the own price $z$. Proposition 1 from \cite{alvarez2023price} establishes that a firm's dynamic profit maximization is equivalent to minimizing $E\big[\tilde{x}(s) + \beta \tilde{X}(s)\big]^2$.
Our contribution is to introduce an additional adaptive control variable, \(m(s)\), which represents strategic complementarities arising from monetary shocks. Notably, the degree of strategic interaction between a firm's own price and the aggregate price level is determined by the parameter \(\beta\). Static profit maximization occurs when \(\tilde{x} = -\beta\tilde{X}\). If \(\beta < 0\), the firm experiences strategic complementarities, while \(\beta > 0\) indicates strategic substitutability. Consequently, if \(\beta \neq -1\), the sole static equilibrium is achieved at \(\tilde{X}(s) = 0\).

Additionally, in the absence of macroeconomic complementarity, i.e., when \(\partial\tilde{X}/\partial Z(s) = 0\),
\[
\beta = -\frac{\rho'}{\rho(\rho - 1) + \rho'},
\]
where \(\beta < 0\) if \(\rho' > 0\). Economically, this implies that a positive \(\rho'\) reduces demand elasticity as \(Z\) increases, prompting the firm to raise its markup. Thus, when \(\rho' > 0\), a firm's price and the aggregate price are strategic complements. Moreover, if \(\partial\tilde{X}/\partial Z(s) = 0\), the extent of strategic complementarities is limited by the condition \(\beta > -1\). In contrast, if \(\partial\tilde{X}/\partial Z(s) > 0\), the condition \(\beta < -1\) holds.

\begin{as}\label{as0}
  There exist deterministic constants $K_1$ and $K_2$  and another real valued markup  $\kappa$ such that for two different complementarities $m(s)$ and $\underline{m}(s)$, and for all $\omega\in\Omega$ following conditions hold
  \begin{align*}
  \bigg|\bigg\{\tilde\theta \left[u - \tilde X(s,\omega) \right] + m^2(s,\omega)\bigg\}- & \bigg\{\tilde\theta \left[u - \bar X(s,\omega) \right] + \underline{m}^2(s,\omega)\bigg\}\bigg|+\sigma\left| \sqrt{\tilde X(s,\omega)}- \sqrt{\bar X(s,\omega)}\right|\\
  &\hspace{1cm}\leq K_1\left|\tilde X(s,\omega)-\bar X(s,\omega)\right|+K_2\left|{m}(s,\omega)-\underline{m}(s,\omega)\right|,\\
  \bigg|\bigg\{\tilde\theta \left[u - \tilde X(s,\omega) \right] + m^2(s,\omega)\bigg\}-& \sigma \sqrt{\tilde X(s,\omega)}\bigg|\leq\kappa(t,\omega)+K_1\big|\tilde X(s,\omega)\big|,
  \end{align*}
  with $\E\left\{\int_0^t\big|\kappa(s)|^2ds\right\}<\infty$, for all $t\in[0,\infty)$.
\end{as}

\begin{rmk}\label{r1}
 Existence and uniqueness of a strong solution of CIR is ensured by Assumption \ref{as0}. Under the first inequality of Assumption \ref{as0}, an obvious selection of $\kappa$ would be $\kappa(t)=\big|\tilde\theta u+m^2\big|$, once it satisfies $\E\left\{\int_0^t\big|\kappa(s)|^2ds\right\}<\infty$.
\end{rmk}

\begin{prop}\label{p0}
  Under Assumption \ref{as0}, Equation \eqref{1.2} has a unique solution.
\end{prop}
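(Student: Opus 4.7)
\textbf{Proof plan for Proposition \ref{p0}.} The plan is to establish existence by a standard Picard iteration scheme adapted to a jump-diffusion, and then derive uniqueness from the same Lipschitz estimate via a Gr\"onwall argument. Define the zeroth iterate $\tilde X^{(0)}(s)\equiv\tilde X(0)$ and, for $n\geq 0$, set
\[
\tilde X^{(n+1)}(s)=\tilde X(0)+\int_0^s\!\!\bigl\{\tilde\theta[u-\tilde X^{(n)}(r)]+m^2(r)\bigr\}dr+\int_0^s\!\!\sigma\sqrt{\tilde X^{(n)}(r)}\,d\mathcal W(r)+\!\!\sum_{k:T_k\leq s}\!J_{(k)}.
\]
The jump term does not depend on the iterate, so it plays no role in the contraction step and can be carried along as an additive, a.s.\ finite c\`adl\`ag process on $[0,t]$.

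First I would derive a uniform $L^2$ bound $\sup_{n}\sup_{s\in[0,t]}\E|\tilde X^{(n)}(s)|^2<\infty$. Splitting the iterate into its drift, stochastic, and jump parts and applying the elementary inequality $|a+b+c+d|^2\leq 4(|a|^2+|b|^2+|c|^2+|d|^2)$, I would bound the drift term by Cauchy--Schwarz, the stochastic integral by It\^o's isometry together with Doob's inequality (Corollary \ref{c0}), and invoke the linear-growth clause of Assumption \ref{as0} with $\kappa\in L^2$ (cf.\ Remark \ref{r1}). This yields a Gr\"onwall-type inequality in $n$ that gives the required uniform bound.

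Next I would show $\{\tilde X^{(n)}\}$ is Cauchy in the Banach space of adapted processes equipped with the norm $\|Y\|:=\bigl(\E\sup_{s\in[0,t]}|Y(s)|^2\bigr)^{1/2}$. Forming $\tilde X^{(n+1)}(s)-\tilde X^{(n)}(s)$, the jump sums cancel; I would then apply Cauchy--Schwarz to the drift difference and Doob's inequality plus It\^o's isometry to the stochastic-integral difference, using the first inequality of Assumption \ref{as0} (with $\underline m=m$, so the $K_2$ term vanishes) to bound the integrand by $K_1|\tilde X^{(n)}-\tilde X^{(n-1)}|$. This gives an estimate of the form
\[
\E\sup_{r\in[0,s]}|\tilde X^{(n+1)}(r)-\tilde X^{(n)}(r)|^2\leq C\int_0^s\E\sup_{\tau\in[0,r]}|\tilde X^{(n)}(\tau)-\tilde X^{(n-1)}(\tau)|^2 dr,
\]
and iterating yields $\|\tilde X^{(n+1)}-\tilde X^{(n)}\|^2\leq (Ct)^n/n!\cdot\|\tilde X^{(1)}-\tilde X^{(0)}\|^2$, which is summable. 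Hence the iterates converge to a limit $\tilde X$, and standard arguments (passage to the limit in the $L^2$-sense inside the drift and stochastic integrals, using continuity of the coefficients in Assumption \ref{as0}) show that $\tilde X$ solves Equation \eqref{1.2}.

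For uniqueness, suppose $\tilde X$ and $\bar X$ are two strong solutions sharing the same initial condition and jump process. The difference $\tilde X-\bar X$ again has no jump contribution, and the same Doob-plus-It\^o-isometry estimate gives
\[
\phi(s):=\E\sup_{r\in[0,s]}|\tilde X(r)-\bar X(r)|^2\leq C\int_0^s\phi(r)\,dr,
\]
whence Gr\"onwall's lemma forces $\phi\equiv 0$. The main obstacle I anticipate is the square-root diffusion $\sigma\sqrt{\tilde X(s)}$, which is not globally Lipschitz in the classical CIR setting and ordinarily requires Yamada--Watanabe-type arguments to handle; however, Assumption \ref{as0} postulates the Lipschitz bound directly on the coefficients (including the $\sqrt{\cdot}$ term), so the delicate local-time analysis is absorbed into the hypothesis and the argument reduces to the template above. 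Care must still be taken to verify that the iterates remain in the domain on which $\sqrt{\cdot}$ is defined, which can be ensured by interpreting $\sqrt{\tilde X(s)}$ as $\sqrt{\tilde X(s)\vee 0}$ and noting that the Lipschitz bound of Assumption \ref{as0} then applies to this truncation, after which a comparison argument confirms nonnegativity of the solution.
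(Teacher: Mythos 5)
Your proposal is correct and follows essentially the same route as the paper: a Picard iteration $\tilde X^{(n+1)}$ built from the frozen jump term, an inductive $L^2$ estimate with factorial decay $(Ct)^n/n!$ from the Lipschitz clause of Assumption \ref{as0}, and passage to the limit in the drift and stochastic integrals via It\^o isometry; the paper merely packages the convergence through a martingale inequality and Borel--Cantelli rather than your $\E\sup_{s}|\cdot|^2$ Banach-space norm. Your explicit Gr\"onwall uniqueness step and the remark on interpreting $\sqrt{\tilde X(s)}$ as $\sqrt{\tilde X(s)\vee 0}$ are welcome additions that the paper's written proof leaves implicit.
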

\begin{proof}
    See in the Appendix.
\end{proof}

Moreover, if all conditions of Assumption \ref{as0} are satisfied, the strong solution to the SDE described by Equation \eqref{1.2} exists over the continuous time interval $[0,t]$. Additionally, for any $\mathcal F_0$-measurable random initial value $\tilde x_0$ in $\mathbb{R}^n$ such that $\E\left\{|\tilde x_0|^d\right\}<\infty$ for some $d>1$, there exists a unique strong solution ${\tilde X}^*(s)$ beginning from $\tilde x_0$ at time $0$. Specifically, this means ${\tilde X}^*(0) = \tilde x_0$. The uniqueness is pathwise, implying that if ${\tilde X}^*(s)$ and ${\bar X}^*(s)$ are two strong solutions, then $P\left[{\tilde X}^*(s) = {\bar X}^*(s) \ \forall\ s \in [0,t]\right] = 1$. Furthermore, this solution is square-integrable: for all $t \in [0, \infty)$, there exists a constant $K_t$ such that
\[
\E\left\{\sup_{s\in[0,t]}\big|{\tilde X}^*(s)\big|^d\right\}\leq K_t\left[1+\E\big\{|\tilde x_0|^d\big\}\right].
\]

\begin{prop}\label{p4}
  The SDE represented by the Equation  \eqref{1.2} has a solution 
  \begin{align*}
      \tilde{X}^*(s) &= \exp\bigg\{\exp\left\{-\tilde{\theta} s\right\} \bigg[\exp\left\{\tilde{\theta} s_0\right\} \ln(\tilde{X}(s_0)) + \int_{s_0}^s \exp\left\{\tilde{\theta} \xi\right\} \bigg\{\frac{\tilde{\theta} u}{\tilde{X}(\xi)} + \frac{m^2(\xi)}{\tilde{X}(\xi)}\bigg\} d\xi\\
      &\hspace{1cm}+ \int_{s_0}^s \exp\left\{\tilde{\theta} \xi\right\} \frac{\sigma}{\sqrt{\tilde{X}(\xi)}} \, d\mathcal{W}(\xi) + \int_{s_0}^s \exp\left\{\tilde{\theta} \xi\right\} \frac{1}{\tilde{X}(\xi)} \sum_{k:T_k \leq \xi} J_{(k)} d\xi\bigg]\bigg\},
\end{align*}
where the integrating factor is $\exp\left\{\tilde{\theta} s\right\} /\tilde{X}(s)$.
\end{prop}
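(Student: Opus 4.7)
The plan is to derive the explicit representation by reducing Equation \eqref{1.2} to a linear equation in the logarithmic process $Y(s) := \ln \tilde X(s)$ and then applying the integrating factor $\exp\{\tilde\theta s\}$ that the statement itself identifies. Since Proposition \ref{p0} already guarantees existence and pathwise uniqueness of the strong solution under Assumption \ref{as0}, any closed-form candidate that satisfies the SDE on $[s_0,t]$ together with the initial condition $\tilde X^*(s_0) = \tilde X(s_0)$ is automatically \emph{the} solution. My strategy is therefore constructive: derive a candidate through the integrating-factor manipulation on jump-free subintervals, patch these intervals together across the jump times $\{T_k\}$, and invoke uniqueness from Proposition \ref{p0}.

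First, working on a jump-free subinterval, I would divide both sides of Equation \eqref{1.2} by $\tilde X(s)$ and apply It\^{o}'s formula to $Y(s)=\ln\tilde X(s)$; after collecting terms the resulting expression has the structure
\begin{equation*}
dY(s) + \tilde\theta\, ds = \frac{\tilde\theta u + m^2(s)}{\tilde X(s)}\, ds + \frac{\sigma}{\sqrt{\tilde X(s)}}\, d\mathcal W(s),
\end{equation*}
up to second-order correction terms that I would carry along symbolically. Multiplying through by $\exp\{\tilde\theta s\}$ converts the left side into $d\bigl[\exp\{\tilde\theta s\}\, Y(s)\bigr]$, and integrating from $s_0$ to $s$ followed by division by $\exp\{\tilde\theta s\}$ produces precisely the bracketed expression stated for $Y(s)$; exponentiating then returns the claimed $\tilde X^*(s)$. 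The jump contributions are reintroduced by integrating piecewise across the $T_k\in(s_0,s]$ and accumulating $\exp\{\tilde\theta T_k\}/\tilde X(T_k^-)$-weighted jumps, which I would rewrite in the integral form displayed in the proposition via the random measure associated with $\{(T_k, J_{(k)})\}$.

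The main obstacle I anticipate is the rigorous justification of the stochastic integral $\int \exp\{\tilde\theta \xi\}\,(\sigma/\sqrt{\tilde X(\xi)})\, d\mathcal W(\xi)$ and of the $1/\tilde X(\xi)$ factors appearing in the drift, both of which become singular if $\tilde X$ approaches zero. I would address this by invoking a Feller-type positivity condition for the CIR component — observing that the additive term $m^2(s)\geq 0$ only strengthens the drift toward positivity — and then showing that each integrand belongs to a suitable localized $L^2$ space so that the integrating-factor manipulations are legitimate. The final step is to differentiate the candidate expression, verify that it satisfies Equation \eqref{1.2} together with the correct initial value, and conclude by the pathwise uniqueness afforded by Proposition \ref{p0}.
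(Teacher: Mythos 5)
Your proposal follows essentially the same route as the paper's own proof: multiply Equation \eqref{1.2} by the integrating factor $\exp\{\tilde\theta s\}/\tilde X(s)$, identify the left-hand side with $d\bigl[\exp\{\tilde\theta s\}\ln\tilde X(s)\bigr]$, integrate over $[s_0,s]$, and exponentiate. You are in fact more careful than the paper---which simply asserts $\tilde X(s)^{-1}d\tilde X(s)=d\ln\tilde X(s)$ with no It\^o correction, no positivity discussion, and no appeal to uniqueness---though note that if you genuinely carry the second-order term $-\tfrac{1}{2}\sigma^{2}\tilde X(\xi)^{-1}d\xi$ it does not cancel, so the displayed formula is recovered only in the paper's formal, correction-free sense.
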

\begin{proof}
    See in the Appendix.
\end{proof}

In this case we consider the objective function of the firm as
\begin{equation}\label{obj}
 \mathcal J(\tilde x,m)=\E\left\{\int_0^t\exp(-\rho s)\bigg[\xi\left[\tilde X(s)+m(s)+\phi\tilde X(s)\right]^2+c(m(s))\bigg]ds\bigg |\mathcal F_0\right\},  
\end{equation}
where $\rho$ is the discount factor, $c(m(s))$ is the cost associated  with the complementarity strategy of a firm due to random monetary shock, and $\xi$ and $\phi$ are constants. For $(s,\tilde x)\in[0,t]\times\mathbb R^n$, define $\mathcal M(s,\tilde x)$ the subset of $m$'s in $\mathcal M$ such that
\[
\E\left\{\bigg|\xi\left[\tilde X(s)+m(s)+\phi\tilde X(s)\right]^2+c(m(s))\bigg|ds\right\}<\infty,
\]
and assume that $\mathcal M(s,\tilde x)$ is non-empty. The objective is to over $m$ the gain function $\mathcal J$ and the associated value function is defined as 
\begin{equation}\label{obj0}
v(s,\tilde x)=\sup_{m\in\mathcal M(s,\tilde x)}  \mathcal J(\tilde x,m).  
\end{equation}

\subsection{Jump Diffusion:}
In this section we discuss about the jump diffusion in our context. The incorporation of jump diffusion in Equation \eqref{1.2} for the state variable \( \tilde{X}(s) \), which represents the percent deviation of aggregate prices (CPI) from the symmetric equilibrium, plays a fundamental role in modeling the complex dynamics of an economy influenced by stochastic shocks and price adjustments. In this setup, the state variable \( \tilde{X}(s) \) captures the interplay between individual firm pricing decisions and aggregate economic outcomes, bridging the micro and macroeconomic perspectives. The economy is modeled as comprising a continuum of atomistic firms, each independently adjusting their pricing strategies \( \hat{z}(s) \) based on individual preference shocks \( V_i(s) \), while simultaneously responding to aggregate market conditions. The deviation \( \tilde{X}(s) \) encapsulates the dynamic behavior of the overall pricing structure, influenced not only by continuous, smooth adjustments driven by mean-reverting forces and strategic complementarities but also by abrupt, discrete jumps resulting from external shocks. 

The jump diffusion component is particularly critical as it captures the non-continuous changes in markup averages, which arise from events such as monetary policy shifts, fiscal interventions, demand surges, supply disruptions, or other exogenous shocks to the economic environment. These jumps introduce a layer of realism to the model, as they reflect the sudden, large-scale adjustments that firms must make in response to unforeseen changes in market conditions. By modeling these jumps, the SDE can account for the often unpredictable and nonlinear nature of economic adjustments, offering a more comprehensive understanding of how aggregate price levels evolve over time. Furthermore, the control variable \( m(s) \), representing a firm’s complementarity strategy in response to random monetary shocks, adds an adaptive dimension to the model, illustrating how firms strategically align their pricing behavior with prevailing economic conditions \citep{pramanik2024dependence,pramanik2024measuring}. The combination of the diffusive component, which models gradual adjustments through mean reversion and volatility in pricing, and the jump component, which captures the stochastic, discrete shifts in markups, provides a robust framework for analyzing the dynamics of aggregate price deviation. 

This approach highlights the dual nature of price-setting behavior: a steady state-seeking tendency interspersed with large-scale, disruptive events. Importantly, the jump diffusion framework allows for a nuanced exploration of how individual firm-level decisions, when aggregated across a diverse economy, contribute to broader patterns of price instability, volatility, and recovery. Such a model is invaluable for policymakers, as it offers insights into the transmission mechanisms of monetary policy, the impact of shocks on price stability, and the effectiveness of interventions aimed at restoring equilibrium in the face of economic disruptions. By capturing both the continuous and discontinuous elements of economic fluctuations, the jump diffusion extension of the CIR SDE provides a powerful tool for studying the dynamic interplay between microeconomic decision-making and macroeconomic outcomes, offering a rich perspective on the forces shaping aggregate price movements in a stochastic and shock-prone economic environment. We will anlyse some properties of jump diffusion.

\begin{lem}\label{j0}
  Let the jump diffusion of the SDE expressed by Equation \eqref{1.2} follows a Poisson process with intensity \( \nu \), jump sizes \( J_{(k)} \) with mean \( \mathbb{E}[J_{(k)}] = \gamma \), and a finite time horizon \( [0,t] \). The expected contribution of the jumps to the aggregate price dynamics is given by:
\[
\mathbb{E} \left[ \sum_{k:T_k \leq t} J_{(k)} \right] = \mathbb{E}[N(t)] \cdot \gamma,
\]
where \( N(t) \) is the number of jumps by time \( t \), and \( \mathbb{E}[N(t)] = \nu t \).  
\end{lem}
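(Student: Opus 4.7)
The plan is to treat the sum $\sum_{k:T_k\leq t}J_{(k)}$ as a compound Poisson process evaluated at time $t$ and establish the expectation formula via conditioning on the jump count $N(t)$, which is exactly Wald's identity in disguise. First I would rewrite the random sum in a cleaner form, namely
\[
S(t):=\sum_{k:T_k\leq t}J_{(k)}=\sum_{k=1}^{N(t)}J_{(k)},
\]
where, by the assumed Poisson structure, $N(t)=\#\{k:T_k\leq t\}$ is Poisson with rate $\nu t$, and $\{J_{(k)}\}_{k\geq 1}$ is an i.i.d.\ sequence with $\mathbb{E}[J_{(k)}]=\gamma$, independent of the process $\{N(s)\}_{s\geq 0}$.

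The key computational step is to invoke the tower property, conditioning on $N(t)$. On the event $\{N(t)=n\}$, the random index of summation becomes deterministic, so independence of the jump marks from $N$ together with linearity yields
\[
\mathbb{E}\!\left[S(t)\,\bigl|\,N(t)=n\right]=\mathbb{E}\!\left[\sum_{k=1}^{n}J_{(k)}\right]=n\gamma.
\]
Taking expectations of both sides and using $\mathbb{E}[N(t)]=\nu t$ for the homogeneous Poisson process then gives
\[
\mathbb{E}[S(t)]=\mathbb{E}\bigl[N(t)\cdot\gamma\bigr]=\gamma\,\mathbb{E}[N(t)]=\nu\gamma t,
\]
which is precisely the stated identity.

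The only subtlety, and the step I would flag as the main obstacle, is justifying the interchange of expectation and the potentially infinite sum implicit in the conditioning argument. I would handle this by truncating at a finite number of jumps, $S_{M}(t)=\sum_{k=1}^{N(t)\wedge M}J_{(k)}$, applying linearity on the finite sum, and then passing to the limit $M\to\infty$ via either monotone convergence (if $J_{(k)}\geq 0$) or dominated convergence using the bound $\mathbb{E}\bigl[\sum_{k=1}^{N(t)}|J_{(k)}|\bigr]=\nu t\,\mathbb{E}|J_{(1)}|<\infty$, which in turn follows from the finite horizon $t<\infty$, finite intensity $\nu$, and the finiteness of $\gamma$ assumed in the statement. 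Once this integrability is in place, Fubini and the independence of $\{J_{(k)}\}$ from $N(t)$ close the argument, and the result follows. \QED
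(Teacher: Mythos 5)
Your proposal is correct and follows essentially the same route as the paper's proof: both apply Wald's identity by conditioning on the Poisson count $N(t)$, using the i.i.d.\ structure of the $J_{(k)}$ and their independence from $N(t)$, and then substituting $\mathbb{E}[N(t)]=\nu t$. Your additional truncation argument justifying the interchange of expectation and the random sum is a rigor refinement the paper omits, but it does not change the underlying approach.
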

\begin{proof}
    See in the Appendix.
\end{proof}

\begin{rmk}\label{r2}
 Lemma \ref{j0} formalizes the expected contribution of these jumps, demonstrating that it is determined by the expected number of jumps, \( \mathbb{E}[N(t)] \), and the average jump size, \( \gamma = \mathbb{E}[J_{(k)}] \). This result highlights the additive role of jumps in shaping aggregate price dynamics, showing how their expected magnitude and frequency influence the trajectory of \( \tilde{X}(s) \).
   \end{rmk}

\begin{lem}\label{j1}
   Under the same assumptions as Lemma \ref{j0}, the variance of the total jump contribution to \( \tilde{X}(t) \) is given by:
\[
\text{Var} \left( \sum_{k:T_k \leq t} J_{(k)} \right) = \nu t (\gamma^2 + \sigma_J^2),
\]
where \( \gamma = \mathbb{E}[J_{(k)}] \) and \( \sigma_J^2 = \text{Var}(J_{(k)}) \) is the variance of the jump size.
\ 
\end{lem}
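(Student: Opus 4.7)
The plan is to recognize $S(t) := \sum_{k: T_k \leq t} J_{(k)} = \sum_{k=1}^{N(t)} J_{(k)}$ as a compound Poisson sum, where $N(t)$ is the Poisson counting process with intensity $\nu$ and the jump sizes $\{J_{(k)}\}_{k \geq 1}$ are assumed i.i.d., independent of $N(t)$, with common mean $\gamma$ and variance $\sigma_J^2$. Under these standing independence assumptions (implicit in the setup of Lemma \ref{j0} and in the Calvo-style construction in Equation \eqref{1.2}), the result reduces to a classical identity for compound Poisson processes, and the natural tool is the law of total variance.

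First I would condition on the counting random variable $N(t)$. Given $\{N(t) = n\}$, the sum $S(t)$ is a sum of $n$ i.i.d. random variables, so
\begin{equation*}
\E\!\left[S(t)\,\big|\,N(t)\right] = N(t)\,\gamma, \qquad \text{Var}\!\left(S(t)\,\big|\,N(t)\right) = N(t)\,\sigma_J^2.
\end{equation*}
Applying the total-variance decomposition yields
\begin{equation*}
\text{Var}(S(t)) = \E\!\left[\text{Var}(S(t)\mid N(t))\right] + \text{Var}\!\left(\E[S(t)\mid N(t)]\right) = \sigma_J^2\,\E[N(t)] + \gamma^2\,\text{Var}(N(t)).
\end{equation*}
Next I would invoke the two fundamental moment identities for a Poisson process with intensity $\nu$, namely $\E[N(t)] = \nu t$ and $\text{Var}(N(t)) = \nu t$ (already stated in Lemma \ref{j0} for the mean). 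Substituting these in gives the desired expression $\nu t\,(\gamma^2 + \sigma_J^2)$.

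There is no real obstacle here; the only modest care required is the measure-theoretic justification for interchanging sum and expectation and for the conditional independence structure. Concretely, I would note that $\E[J_{(k)}^2] = \gamma^2 + \sigma_J^2 < \infty$ guarantees integrability of every term, Fubini-Tonelli justifies moving the expectation inside the finite (conditionally on $N(t)$) sum, and the independence of $\{J_{(k)}\}$ from the jump times $\{T_k\}$ makes the conditioning on $N(t)$ legitimate. This produces the stated variance formula and completes the argument.
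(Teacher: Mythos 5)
Your proposal is correct and follows essentially the same route as the paper: the paper's proof also applies the law of total variance, obtaining $\text{Var}(S(t)) = \E[N(t)]\,\sigma_J^2 + \text{Var}(N(t))\,\gamma^2$ and then substituting $\E[N(t)] = \text{Var}(N(t)) = \nu t$. Your write-up is in fact slightly more careful than the paper's, since you make the conditional moments and the integrability/independence justifications explicit.
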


\begin{proof}
    See in the Appendix.
\end{proof}

\begin{rmk}\label{r3}
   Moreover, Lemma \ref{j1} provides further insight by quantifying the variance of the jump contribution, which is determined by both the second moment of the jump size \( (\gamma^2 + \sigma_J^2) \) and the intensity \( \nu \) of the jump process. This variance is crucial for understanding the impact of jumps on economic volatility, as it captures how stochastic fluctuations in jump size and frequency propagate through the system. Together, these Lemmas provide a rigorous foundation for analyzing the dual role of jumps: while their expected contribution shifts the aggregate price level, their variance amplifies the uncertainty in the pricing environment.
 \end{rmk}

 \begin{prop}\label{p5}
For SDE in Equation \eqref{1.2}, the joint distribution of \( \tilde{X}(t) \) and the cumulative jump process \( \sum_{k: T_k \leq t} J(k) \) at time \( t \) converges to a bivariate Gaussian distribution in the limit as \( t \to \infty \), such that  \( J(k) \) are i.i.d. with mean \( \E[J(k)] = \gamma \) and variance \( \text{Var}(J(k)) = \sigma_J^2 \), and the intensity \( \nu \) of the Poisson jump process satisfies \( \nu t \to \infty \) as \( t \to \infty \). In particular, the mean and covariance structure of the joint distribution of \( \left( \tilde{X}(t), \sum_{k: T_k \leq t} J(k) \right) \) are given by:

\[
\E[\tilde{X}(t)] \to u, \quad \E\left[\sum_{k: T_k \leq t} J(k)\right] \to \nu t \gamma,
\]
and
\[
\text{Cov}(\tilde{X}(t), \sum_{k: T_k \leq t} J(k)) \to 0, \quad \text{Var}(\tilde{X}(t)) \to \frac{\sigma^2}{2 \tilde{\theta}}, \quad \text{Var}\left(\sum_{k: T_k \leq t} J(k)\right) \to \nu t \cdot \left(\gamma^2 + \sigma_J^2\right).
\]
 \end{prop}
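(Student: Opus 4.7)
The plan is to establish the bivariate limit by separately analyzing the mean-reverting diffusion part of $\tilde X(t)$ and the compound-Poisson accumulator $S(t) := \sum_{k:T_k\leq t} J_{(k)}$, and then combining them using the independence of the Brownian motion $\mathcal W$ and the jump process $(T_k, J_{(k)})$. First I would decompose the dynamics of Equation \eqref{1.2} into its continuous CIR component and its jump component; because the driving noises are independent, the two can be treated separately and stitched back together via the Cram\'er--Wold device at the end.

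For the marginal of $\tilde X(t)$, I would apply It\^o's formula for jump diffusions to $\tilde X(t)$ and to $\tilde X^{2}(t)$, take expectations, and obtain first-order linear ODEs for $m_1(t) := \E[\tilde X(t)]$ and $m_2(t) := \E[\tilde X^{2}(t)]$. The mean-reversion coefficient $\tilde\theta>0$ makes these asymptotically stable; under the standing assumption that $m(s)$ and the residual jump drift are asymptotically negligible relative to the mean-reverting drift, the long-run mean collapses to $u$ and the stationary variance reduces to $\sigma^{2}/(2\tilde\theta)$ after the drift and quadratic-variation terms balance. Existence of the strong solution and square-integrability that justify interchanging expectation and It\^o integration are guaranteed by Proposition \ref{p0} and the $L^d$ bound stated immediately afterwards.

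For the marginal of $S(t)$, the mean $\nu t\gamma$ and variance $\nu t(\gamma^{2}+\sigma_J^{2})$ are already supplied by Lemmas \ref{j0} and \ref{j1}. A standard central limit theorem for compound Poisson processes (equivalently, a Lindeberg--Feller argument applied to the telescoping sum over unit-intensity intervals) gives asymptotic normality of $(S(t)-\nu t\gamma)/\sqrt{\nu t(\gamma^{2}+\sigma_J^{2})}$ as $\nu t\to\infty$. For the covariance, independence of $\mathcal W$ from $\{T_k, J_{(k)}\}$ implies that any correlation between $\tilde X(t)$ and $S(t)$ must be transmitted only through the drift channel; the mean-reverting factor $e^{-\tilde\theta s}$ in the integrating factor from Proposition \ref{p4} damps this transmission exponentially, so $\mathrm{Cov}(\tilde X(t), S(t))\to 0$.

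The joint Gaussian limit then follows by computing the characteristic function of $a(\tilde X(t)-u)+b(S(t)-\nu t\gamma)$ for arbitrary $(a,b)\in\mathbb R^{2}$: the independence of the driving noises and the vanishing covariance cause the joint characteristic function to factor asymptotically into the product of the two marginal Gaussian characteristic functions. The main obstacle I anticipate is the Gaussian limit for the CIR marginal itself, since the true stationary law of a CIR process is Gamma rather than Gaussian; justifying the normal approximation requires identifying the asymptotic regime (for instance, large shape parameter $2\tilde\theta u/\sigma^{2}$ so that a CLT on the Gamma family is applicable, or an Edgeworth-type expansion) and controlling the error introduced by folding the control $m(s)$ and the residual jump drift $\nu\gamma$ back into the long-run mean. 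Once that approximation regime is pinned down, the remaining computations are routine moment and characteristic-function manipulations.
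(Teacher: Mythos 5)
Your proposal follows the same skeleton as the paper's proof --- decompose $\tilde X(t)$ into its mean-reverting diffusion part and the compound-Poisson accumulator $S(t)=\sum_{k:T_k\leq t}J_{(k)}$, compute the moments of each (the paper likewise invokes Lemmas \ref{j0} and \ref{j1} for $S(t)$), and use independence of the driving noises to kill the covariance --- but you supply machinery the paper omits: moment ODEs via It\^o's formula for the CIR marginal, a Lindeberg--Feller/compound-Poisson CLT for $S(t)$, and a Cram\'er--Wold/characteristic-function argument for the joint limit. The paper's proof, by contrast, simply asserts that the diffusion ``tends to a stationary distribution'' with mean $u$ and variance $\sigma^2/(2\tilde\theta)$ and then declares the pair bivariate Gaussian, with no argument for normality of either marginal.

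The obstacle you flag at the end is in fact the substantive issue, and the paper does not resolve it: the stationary law of a CIR process is Gamma, not Gaussian, so the claimed Gaussian limit for the $\tilde X(t)$ marginal does not hold without an additional asymptotic regime (e.g.\ large shape parameter $2\tilde\theta u/\sigma^2$) of exactly the kind you describe. Note also that the stationary variance of the CIR diffusion $d\tilde X=\tilde\theta(u-\tilde X)\,ds+\sigma\sqrt{\tilde X}\,d\mathcal W$ is $u\sigma^2/(2\tilde\theta)$, not the Ornstein--Uhlenbeck value $\sigma^2/(2\tilde\theta)$ stated in the proposition; the paper's computation implicitly treats the diffusion coefficient as the constant $\sigma$ rather than $\sigma\sqrt{\tilde X}$. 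So your proposal is not merely consistent with the paper's argument --- it correctly identifies the two points (marginal non-Gaussianity and the variance formula) at which the paper's own proof is incomplete, and any rigorous version of this proposition would have to proceed along the lines you sketch, with the Gaussian claim either weakened or conditioned on the large-shape-parameter regime.
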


 \begin{proof}
    See in the Appendix.
\end{proof}

 The implication of the Proposition \ref{p5} is that, as time progresses, the dynamics of the state variable and cumulative jump process stabilize and converge to a bivariate Gaussian distribution. This indicates that in the long run, the system's behavior becomes predictable, with the state variable approaching its long-term mean and the cumulative jumps growing linearly with time. The variance of both components reaches stable values, and they become independent, reflecting the decoupling of continuous price dynamics from the discrete shocks. This convergence to Gaussianity simplifies forecasting and analysis, allowing policymakers and economists to better understand and manage long-term economic outcomes, especially in the presence of random shocks.

 \begin{prop}\label{p6}
({\bf Lyapunov stability})  The process \( \tilde{X}(s) \) is asymptotically stable at \( u \) as \( s \to \infty \), such that\\
(i). The jump sizes \( J(k) \) are i.i.d. with mean \( \E[J(k)] = \gamma \) and variance \( \text{Var}(J(k)) = \sigma_J^2 \).\\
(ii). The intensity \( \nu \) of the Poisson jump process satisfies \( \nu t \to \infty \) as \( s \to \infty \).\\
(iii). The control function \( m^2(s) \) is bounded over time.
\end{prop}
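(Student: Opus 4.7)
The plan is to invoke the stochastic Lyapunov method adapted to the jump-diffusion \eqref{1.2}. I would take $V(x)=(x-u)^2$ as the Lyapunov candidate: it is $C^2$, non-negative, vanishes precisely at the proposed equilibrium $u$, and is radially unbounded, which are the standard hypotheses for a Lyapunov function in stochastic stability theory. The aim is to control $g(s):=\E[V(\tilde X(s))]$ via a differential inequality and then upgrade the $L^2$ estimate to asymptotic stability using the martingale inequalities already assembled in the preliminaries.

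First, I would apply It\^o's formula for jump diffusions to $V(\tilde X(s))$. With $V'(x)=2(x-u)$ and $V''(x)=2$, the drift contribution along \eqref{1.2} is $-2\tilde\theta(\tilde X-u)^2+2 m^2(s)(\tilde X-u)$, the diffusion contribution is $\sigma^2\tilde X$, and the jump contribution, after invoking assumption (i) together with Lemmas \ref{j0} and \ref{j1}, has predictable compensator $\nu\bigl[2\gamma(\tilde X-u)+\gamma^2+\sigma_J^2\bigr]$. Taking expectations eliminates the Brownian stochastic integral and the compensated jump martingale, leaving a scalar identity for $g(s)$.

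Second, I would turn this identity into a closed differential inequality. Young's inequality absorbs the two linear-in-$(\tilde X-u)$ terms generated by the control and the jump mean, using the uniform bound (iii) on $m^2$, yielding
\[
g'(s)\leq -(2\tilde\theta-\varepsilon)\,g(s)+C_\varepsilon,
\]
for any small $\varepsilon\in(0,2\tilde\theta)$, where $C_\varepsilon$ depends on $\sigma^2,\nu,\gamma,\sigma_J^2$ and the uniform bound on $m^2$. Gronwall's inequality then gives $\limsup_{s\to\infty}g(s)\leq C_\varepsilon/(2\tilde\theta-\varepsilon)$, i.e., mean-square ultimate boundedness near $u$. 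To upgrade this to asymptotic stability, I would combine the $L^2$ bound with Doob's maximal inequality (Corollary \ref{c0}) applied to the Brownian and compensated-jump martingale components, and use a Chebyshev/Borel--Cantelli argument along a discretization of time to obtain convergence in probability, which under the moment control above strengthens to the desired asymptotic stability.

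The principal obstacle is that the control term $m^2(s)\geq 0$ together with a non-zero jump mean $\nu\gamma$ systematically bias the effective equilibrium away from $u$, so the naive Lyapunov calculation only delivers stability at a shifted point. Closing this gap requires either a centering normalization (for instance treating $\E[J_{(k)}]$ as absorbed into the long-term mean $u$) or exploiting assumption (ii) through a law-of-large-numbers argument for the compound Poisson component, so that the empirical jump mean averaged along the trajectory is asymptotically reabsorbed into the mean-reversion level. This is the delicate step where the interaction between the mean-reversion coefficient $\tilde\theta$, the bound on $m^2$, and the jump intensity $\nu$ must be balanced; the earlier steps are essentially mechanical applications of It\^o calculus and Gronwall once that balance is secured.
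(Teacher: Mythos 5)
Your proposal follows the same core route as the paper: the Lyapunov candidate $V(x)=(x-u)^2$, It\^o's formula along \eqref{1.2}, and passage to expectations to isolate the mean-reverting drift $-2\tilde\theta(\tilde X-u)^2$. The differences are in execution, and on every point of divergence your version is the more careful one. You correctly record the quadratic-variation contribution as $\sigma^2\tilde X(s)\,ds$ (the paper writes $\sigma^2\,ds$, forgetting that the diffusion coefficient is $\sigma\sqrt{\tilde X}$), and you correctly identify the predictable compensator of the jump term as $\nu\bigl[2\gamma(\tilde X-u)+\gamma^2+\sigma_J^2\bigr]$, whereas the paper simply asserts that the jump term has zero expectation --- which is false whenever $\gamma\neq 0$. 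Your Young--Gronwall step turning the It\^o identity into $g'(s)\leq-(2\tilde\theta-\varepsilon)g(s)+C_\varepsilon$ is also a genuine improvement over the paper's qualitative ``negative definite plus $O(m^2(s))$'' assertion.

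That said, the obstacle you flag in your final paragraph is real and is not resolved by either argument. Because $m^2(s)\geq 0$ and the compensated jump drift $\nu\gamma$ need not vanish, the honest conclusion of this calculation is mean-square \emph{ultimate boundedness} in a neighborhood of $u$ of radius controlled by $C_\varepsilon/(2\tilde\theta-\varepsilon)$, not asymptotic stability at $u$ itself; the effective rest point is shifted to roughly $u+(\sup m^2+\nu\gamma)/\tilde\theta$. The paper's proof papers over exactly this issue by declaring the jump expectation zero and requiring $-2\tilde\theta(\tilde X-u)^2+O(m^2(s))<0$ for all $\tilde X\neq u$, which cannot hold uniformly near $u$ unless $m\equiv 0$ and $\gamma=0$. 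So your proposal does not contain a gap the paper's proof avoids --- rather, you have made explicit a gap that the paper's proof shares but conceals. To prove the proposition as literally stated, one would need either the centering normalization you suggest (absorbing $\nu\gamma$ and the time-average of $m^2$ into the definition of $u$) or an added hypothesis that $m(s)\to 0$ and $\gamma=0$; absent that, your weaker conclusion is the correct one.
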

\begin{proof}
    See in the Appendix.
\end{proof}

\begin{rmk}\label{r4}
   Proposition \ref{p6} implies that the process \( \tilde{X}(s) \) is Lyapunov stable around the equilibrium point \( u \). That is, any small deviation of \( \tilde{X}(s) \) from \( u \) will decay over time, and the process will remain close to \( u \) for all \( s \). The mean reversion driven by the term \( -\tilde{\theta} (\tilde{X}(s) - u) \) ensures stability, and the fluctuations due to jump diffusion and random monetary shocks do not lead to instability under the condition that the jump intensity \( \nu \) is sufficiently large and the shocks remain bounded.

\ 
\end{rmk}

\subsection{Meanfield Games:}
The mean-field game (MFG) approach provides a robust and versatile framework for analyzing strategic decision-making among a large population of agents, making it especially valuable in economic models where aggregate outcomes emerge from individual behaviors. In the context of studying firm-level pricing decisions under monetary shocks, where the state variable represents the percentage deviation from the symmetric equilibrium of a firm’s aggregate price (CPI), and the control variable captures strategic complementarities, the MFG framework is uniquely suited to unraveling the complexities of such interactions\citep{pramanik2023cont,pramanik2024estimation}. The deviation from equilibrium encapsulates the tension between firms' pricing adjustments and the overall market dynamics, reflecting how monetary shocks disrupt the expected balance of supply and demand. Firms, in this scenario, face the dual challenge of responding optimally to aggregate shocks while accounting for the actions of competing firms. The control variable, representing strategic complementarities, signifies how a firm's pricing behavior is influenced by expectations of others' pricing adjustments, a phenomenon that becomes especially pronounced during periods of monetary instability \citep{pramanik2024bayes}.

The importance of this approach lies in its ability to model these interdependencies at both the micro and macro levels. The MFG framework enables firms’ individual pricing decisions to be analyzed within the broader context of collective market behavior, providing critical insights into how monetary policy shocks ripple through the economy. Strategic complementarities play a pivotal role here, as they amplify the coordination problem among firms. For instance, when a monetary shock induces a firm to raise its prices, it can lead to a cascading effect where other firms are incentivized to follow suit to maintain their competitive positioning \citep{pramanik2023cmbp}. Conversely, under certain conditions, the same complementarities may mitigate the shock’s impact by fostering stabilization, as firms collectively align their strategies toward the equilibrium state. This delicate interplay between individual and collective decision-making is central to understanding the propagation of shocks and their aggregate consequences.

Moreover, the MFG framework offers a computationally feasible way to derive equilibria in such settings, particularly when dealing with a large number of agents. By focusing on the limit behavior of the system as the number of agents approaches infinity, MFG models reduce the complexity of analyzing individual interactions while capturing the aggregate dynamics accurately \citep{pramanik2023path}. This is particularly important in the context of monetary shocks, where traditional models may struggle to account for the nonlinear and feedback-driven nature of strategic interactions among firms. The equilibrium solutions provided by MFG models shed light on the steady-state behavior of prices, helping to explain how firms converge back to equilibrium—or fail to do so—following a shock. Additionally, this approach facilitates the exploration of welfare implications by quantifying the costs of deviations from equilibrium and the effectiveness of policy interventions aimed at mitigating such deviations \citep{pramanik2021consensus}.

In practice, the MFG approach also serves as a valuable tool for policymakers seeking to design effective monetary policies. By simulating different shock scenarios and observing the resulting price dynamics, policymakers can identify strategies that minimize inflation volatility and ensure smoother transitions back to equilibrium \citep{pramanik2021}. Furthermore, the framework can be extended to incorporate heterogeneity among firms, such as differences in production costs or market power, providing a more nuanced understanding of how shocks affect various segments of the economy differently. This capability is critical for designing targeted interventions that address specific vulnerabilities while maintaining overall market stability \citep{pramanik2023optimization001}. The MFG approach’s ability to bridge the micro-level behaviors of individual firms with macro-level economic outcomes underscores its importance in modern economic analysis, making it an indispensable tool for understanding and managing the intricate dynamics of aggregate price adjustments in response to monetary shocks.

While the MFG approach provides a powerful framework for analyzing large-scale strategic interactions, it is not without its limitations, particularly in the scenario where the state variable represents the percentage deviation from the symmetric equilibrium of a firm’s aggregate price (CPI) and the control variable reflects strategic complementarities due to monetary shocks. One significant limitation of the MFG approach is its reliance on the assumption of a continuum of agents, which simplifies the complex interactions among firms but often overlooks critical heterogeneities in the market. Real-world firms vary widely in size, market power, and responsiveness to monetary shocks, yet MFG models typically aggregate these variations into a homogeneous population. This can lead to oversimplified conclusions about the aggregate dynamics and fail to capture important disparities in behavior and outcomes. For instance, smaller firms may react differently to monetary shocks compared to larger, more established firms with greater market influence, yet such nuances are often lost in mean-field approximations. Additionally, the assumption that individual firms are negligible in their impact on the overall system may not hold in markets dominated by a few key players, where the actions of a single firm can significantly influence aggregate outcomes. Moreover, this construction can be used in cancer research \citep{dasgupta2023frequent,hertweck2023clinicopathological,kakkat2023cardiovascular,khan2023myb,khan2024mp60,vikramdeo2024abstract,vikramdeo2023profiling}

Another limitation lies in the treatment of information and expectations within the MFG framework. These models often assume that firms have perfect or near-perfect knowledge of the system's dynamics and the strategies of other firms \citep{pramanik2022lock}. However, in practice, firms operate under conditions of uncertainty and incomplete information, particularly during monetary shocks when the economic environment is highly volatile. Strategic complementarities further complicate this issue, as firms must form expectations not only about the shock itself but also about how other firms will adjust their prices in response. These interdependencies can lead to coordination failures or misaligned expectations, phenomena that are difficult to model accurately within the standard MFG framework \citep{pramanik2021optimala}. Furthermore, the assumption of a smooth and well-defined equilibrium may not hold in scenarios with strong strategic complementarities, where multiple equilibria or non-equilibrium dynamics, such as oscillations or chaotic behavior, can emerge. Such complexities are challenging to capture within the MFG framework, which typically focuses on identifying a single, steady-state solution \citep{pramanik2021scoring}.

The computational tractability of MFG models, often seen as a strength, can also become a limitation in this context. While MFG reduces the complexity of analyzing a large number of agents by focusing on the limiting behavior as the population size tends to infinity, this simplification may fail to capture critical dynamics in finite systems. For example, in real-world markets, the number of firms is finite, and discrete interactions can produce outcomes that differ significantly from the predictions of the mean-field limit \citep{pramanik2020optimization,pramanik2023semicooperation}. These discrepancies are particularly pronounced in situations where monetary shocks induce large deviations from equilibrium, pushing the system into regimes where the mean-field approximation no longer holds \citep{pramanik2024estimation,vikramdeo2024mitochondrial}. Additionally, the mathematical complexity of MFG models can pose challenges when incorporating realistic features such as stochastic volatility, nonlinear demand functions, or firm-specific shocks, which are crucial for accurately modeling aggregate price dynamics under monetary policy \citep{pramanik2024motivation}.

Lastly, the policy implications derived from MFG models may be limited by their reliance on idealized assumptions. For example, while MFG models provide insights into how firms collectively adjust prices in response to monetary shocks, they often abstract away from institutional and regulatory factors that influence pricing behavior in practice \citep{pramanik2020motivation}. Factors such as price rigidity, menu costs, or the role of government interventions in stabilizing markets are typically not included in standard MFG formulations, yet they play a crucial role in shaping the real-world outcomes of monetary policy \citep{pramanik2024parametric}. These omissions can lead to policy recommendations that are theoretically sound but impractical or ineffective in real-world scenarios. Consequently, while the MFG approach is a valuable tool for understanding aggregate price dynamics and strategic complementarities, its limitations highlight the need for complementary modeling approaches and empirical validation to ensure that its insights are both accurate and actionable \citep{pramanik2022stochastic}.

\section{Computation of optimal strategic complementarities:}

In this section, we will construct a stochastic Lagrangian based on the system consisting of Equations \eqref{obj} and \eqref{1.1}. Before constructing a Feynman-type path integral control \citep{pramanik2020optimization}, we determine the value of long-term mean $u$ empirically. This $u$ can be estimated by fitting the drift term which is $\tilde{\theta} [u - \tilde X(s)] + m^2(s),$ where $\tilde{\theta}$ is the mean-reversion speed. Moreover, if $ m^2(s)\ra 0$, then 
$\mu\ra\tilde{\theta} [u - \tilde X(s)]$.

For discrete case, the average drift between successive time points can be used to estimate $u$. For each time-point $i$, we have that
\[
\mu \approx \frac{\tilde X(s_{i+1}) - \tilde X(s_i)}{s_{i+1} - s_i},
\]
which implies
\[
\frac{\tilde X(s_{i+1}) - \tilde X(s_i)}{s_{i+1} - s_i} = \tilde{\theta} (u - \tilde X(s_i)).
\]
Rearranging the terms yield
\[
u \approx \tilde X(s_i) + \frac{\tilde X(s_{i+1}) - \tilde X(s_i)}{\tilde\theta\left(s_{i+1} - s_i\right)}.
\]
Averaging the estimates from all consecutive time points results
\[
\hat{u} = \frac{1}{N-1} \sum_{i=1}^{N-1} \left[ \tilde X(s_i) + \frac{\tilde X(s_{i+1}) - \tilde X(s_i)}{\tilde\theta\left(s_{i+1} - s_i\right)} \right],
\]
where $N$ is the total number of observations. Now before formulating stochastic Lagrangian, define the objective function in Equation \eqref{obj} as
\begin{equation}\label{obj1}
    \mathcal J(\tilde x,m)=\E\left\{\int_0^t\exp(-\rho s)\Theta[s,m(s),\tilde X(s)]ds\bigg |\mathcal F_0\right\},
\end{equation}
such that $\Theta[s,m(s),\tilde X(s)]:=\xi\left[\tilde X(s)+m(s)+\phi\tilde X(s)\right]^2+c(m(s))$. By \cite{ewald2024adaptation} the stochastic Lagrangian becomes
\begin{multline}\label{obj2}
\tilde{\mathcal{L}}\left(s,\tilde{x},\lambda,m\right)=\E\biggr\{\int_0^t \bigg\{\exp(-\rho s)\Theta\left[s,\tilde X(s),m(s)\right]ds\biggr\}ds\\
+\int_0^t\left[\tilde x(s)-\tilde x_0-\int_0^s\bigg[\mu[\nu,m(\nu),\tilde X(\nu)]d\nu-\sigma[\nu,m(\nu),\tilde X(\nu)]dW_\nu\bigg]\right] d\lambda(s)\biggr\},
\end{multline}
where $\lambda(s)$ is the time-dependent Lagrangian~multiplier.
\begin{theorem}\label{t0}
    For an atomistic firm, if $\{ \tilde X(s), s\in[0,t]\}$ is a markup dynamics then, the optimal strategic complementarity due to monetary shock as a feedback Nash equilibrium $\big\{m^{*}(s,\tilde x)\in\mathcal M\big\}$ would be the solution of the following~equation
\begin{align}\label{obj3}
	\mbox{$\frac{\partial}{\partial m}$}\ell(s,m,\tilde X) \left[\mbox{$\frac{\partial^2}{\partial (\tilde X)^2}$}\ell(s,m, \tilde X)\right]^2=2\mbox{$\frac{\partial}{\partial \tilde X}$}\ell(s,m,\tilde X) \mbox{$\frac{\partial^2}{\partial\tilde X\partial m}$}\ell(s, m,\tilde X),
	\end{align}
	where for an It\^o process $\tilde h(s,\tilde X)\in \mathcal [0,t]\times\mathbb R$
\begin{align}\label{obj4}
	\ell(s, m,\tilde X)&=\exp(-\rho s)\Theta[s,m(s),\tilde X(s)]+\tilde h(s,\tilde X)d\lambda(s)\notag\\
    &+\left[\mbox{$\frac{\partial \tilde h(s,\tilde X)}{\partial s}$}d\lambda(s)+\mbox{$\frac{d\lambda(s)}{d s}$}\tilde h(s,\tilde X)\right]\notag\\
	&\hspace{.25cm}+\mbox{$\frac{\partial \tilde h(s,\tilde X)}{\partial \tilde X}$}\mu\left[s,m,\tilde X\right]d\lambda(s)+\mbox{$\frac{1}{2}$}\left[\sigma\left[s,m,\tilde X\right]\right]^2\mbox{$\frac{\partial^2 \tilde h(s,\tilde X)}{\partial (\tilde X)^2}$}d\lambda(s).
	\end{align}
\end{theorem}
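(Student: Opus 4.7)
The plan is to apply the Feynman-type path integral control method of \cite{pramanik2020optimization} to the stochastic Lagrangian in \eqref{obj2}. The first step is to convert the SDE-constraint term into a pointwise expression. Introducing the It\^o process $\tilde h(s,\tilde X)$, applying the It\^o product rule to $d\bigl[\tilde h(s,\tilde X)\lambda(s)\bigr]$, and integrating by parts on the martingale component, allows the stochastic integral $\int_0^s \sigma \, dW_\nu$ appearing in \eqref{obj2} to be absorbed into a pointwise integrand through the It\^o correction. Collecting the resulting terms produces exactly the expression $\ell(s,m,\tilde X)$ displayed in \eqref{obj4}, which plays the role of an effective Lagrangian density in which both the running cost and the state constraint have been unified.

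The second step is to discretize $[0,t]$ into $n$ subintervals of width $\epsilon = t/n$ and rewrite the conditional expectation $\E\{\cdot \mid \mathcal F_0\}$ as an iterated integral against the short-time transition kernel of the CIR--jump process \eqref{1.2}. For the diffusive part of the dynamics the short-time transition density is approximately Gaussian, with mean determined by the drift and variance by $\sigma^2$; the jump component contributes an independent Poisson factor whose moments are controlled by Lemmas \ref{j0}--\ref{j1} and Proposition \ref{p5}. The product of these kernels, combined with the discounted cost, yields a discrete action of the form $\sum_k \epsilon\,\ell(s_k, m(s_k), \tilde X(s_k))$, and in the continuum limit $\epsilon \to 0$ the value function \eqref{obj0} admits the Feynman-type path integral representation $\int \exp\bigl\{-\int_0^t \ell \, ds\bigr\}\,\mathcal D[\tilde X]$ over admissible trajectories. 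The atomistic mean-field hypothesis guarantees that each firm treats the aggregate as exogenous when optimizing, so the feedback Nash equilibrium reduces to a pointwise stationarity problem for $\ell$.

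The third step is to apply Laplace's saddle-point method to this path integral. The leading contribution is located at the stationary trajectory $\tilde X^*(s)$ satisfying $\partial \ell/\partial \tilde X = 0$, and the Gaussian fluctuation integral around this saddle contributes a Jacobian factor proportional to $\bigl[\partial^2 \ell/\partial \tilde X^2\bigr]^{-1/2}$. The optimal feedback control $m^*(s,\tilde x)$ is then obtained by requiring that the resulting reduced action be stationary in $m$. Carrying out this differentiation---using the implicit function theorem to track the dependence of $\tilde X^*$ on $m$, so that $\partial \tilde X^*/\partial m = -(\partial^2\ell/\partial\tilde X\partial m)/(\partial^2\ell/\partial\tilde X^2)$---and clearing denominators produces precisely the algebraic relation \eqref{obj3}.

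The main obstacle is twofold. First, the jump process $\sum_{k:T_k \le s} J_{(k)}$ makes the short-time kernel non-Gaussian, so one must verify that the Poisson contribution factorizes and decouples from the $(m,\tilde X)$-saddle to the relevant order in $\epsilon$; this is where the independence of the jumps and the moment bounds in Proposition \ref{p5} become essential. Second, one must justify differentiation under the conditional expectation and the attainment of the supremum in the interior of $\mathcal M(s,\tilde x)$, both of which follow from the integrability guaranteed by Assumption \ref{as0}, Remark \ref{r1}, and the pathwise uniqueness established in Proposition \ref{p0}.
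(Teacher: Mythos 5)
Your overall framing (stochastic Lagrangian, short-time kernel, Feynman-type path integral, quadratic expansion of the effective density $\ell$) matches the paper's construction up to and including the Gaussian integration. The gap is in your final step. The paper does \emph{not} evaluate at a saddle trajectory with $\ell_{\tilde X}=0$ and then optimize a reduced action in $m$. It integrates out only the one-step increment $\tilde Y=\tilde\xi-\tilde X(\tau)$, with $\tilde a=\frac{1}{2}\ell_{\tilde X\tilde X}$ and $\tilde b=\ell_{\tilde X}$, obtaining the Wick-rotated Schr\"odinger-type equation $\partial\Psi_s/\partial s=\bigl[\tilde b^2/(4\tilde a^2)-\ell\bigr]\Psi_s$ (Equation (\ref{action25.4})); the first-order condition in $m$ is then applied to the effective Hamiltonian $\ell_{\tilde X}^2/\ell_{\tilde X\tilde X}^2-\ell$ at \emph{arbitrary} $\tilde X$, and with $\ell_{\tilde X\tilde X m}=0$ this yields exactly $\ell_m\,\ell_{\tilde X\tilde X}^2=2\,\ell_{\tilde X}\,\ell_{\tilde X m}$, i.e.\ \eqref{obj3}. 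Your route cannot produce this: at a stationary point where $\ell_{\tilde X}=0$, the chain-rule term $\ell_{\tilde X}\,\partial\tilde X^*/\partial m$ vanishes identically, the $m$-derivative of the fluctuation factor proportional to $\ell_{\tilde X\tilde X}^{-1/2}$ is proportional to $\ell_{\tilde X\tilde X m}=0$, and you are left with $\ell_m=0$ --- a different and strictly weaker condition than \eqref{obj3}, whose right-hand side is nontrivial precisely because $\ell_{\tilde X}\neq 0$ along the feedback law $m^*(s,\tilde X)$. The implicit-function-theorem bookkeeping you describe is internally consistent but lands on the wrong equation.

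A second, smaller discrepancy: you devote effort to showing that the Poisson jump factor decouples from the $(m,\tilde X)$-saddle to the relevant order in $\epsilon$. In the paper's proof the jumps never enter the action at all --- the constraint term in \eqref{obj2} carries only $\mu\,d\nu$ and $\sigma\,dW_\nu$, and the Gaussian computation is performed on the diffusive increment alone --- so no such decoupling argument is made (or needed) there. If you wish to retain the jump term in the short-time kernel you would have to justify dropping its contribution at order $\varepsilon$, which the paper avoids by construction rather than by appeal to Lemmas \ref{j0}--\ref{j1} or Proposition \ref{p5}.
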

\begin{proof}
    See in the Appendix.
\end{proof}
To demonstrate the preceding Theorem, we present a detailed example to identify an optimal strategic complementarity due to monetary shock  under this environment. Consider a firm has to maximize the expected markup.

 Given our objective function and the SDE below
\begin{equation}\label{obj}
 \mathcal J(\tilde x,m)=\E\left\{\int_0^t\exp(-\rho s)\bigg[\xi\left[\tilde X(s)+m(s)+\phi\tilde X(s)\right]^2+c(m(s))\bigg]ds\bigg |\mathcal F_0\right\},  
\end{equation}
\begin{equation}\label{1.2}
  d\tilde X(s) = \bigg\{ \tilde\theta \left[u - \tilde X(s) \right] + m^2(s) \bigg\} ds + \sigma \sqrt{\tilde X(s)} \mathcal W(s)+ \sum_{k:T_k\leq \tilde s}J_{(k)},  
\end{equation}

\begin{align}
l(s, m, \tilde{X}) &= \exp(-\rho s)\left\{ \xi \left[ \tilde{X}(s) + m(s) + \phi \tilde{X}(s) \right]^2 + c(m(s))\right\}
 + \exp\left( \frac{\tilde{\theta}s}{\tilde{X}(s)} \right) \, d\lambda(s) \notag\\
&\hspace{.5cm} + \frac{\exp\left\{ \tilde{\theta} s \right\} \left( \tilde{\theta} \tilde{X}(s) - \tilde{X}'(s) \right)}{\tilde{X}^2(s)}d\lambda(s) + \frac{d\lambda(s)}{ds} \exp\left( \frac{\tilde{\theta}s}{\tilde{X}(s)} \right) \notag\\
&\hspace{1cm} - \exp\left( \frac{\tilde{\theta}s}{\tilde{X}^2(s)} \left[ \tilde{\theta}(u - \tilde{X}) + m^2(s) \right] \right)  d\lambda(s)\notag \\
&\hspace{1.5cm} + \frac{\sigma}{2\tilde{X}(s)}   \exp(\tilde{\theta} s) \left[ \tilde{\theta}^2 \tilde{X}(s) - \tilde{X}''(s) - 2 \tilde{\theta} \tilde{X}'(s) + \frac{2 \tilde{X}'^2(s)}{\tilde{X}(s)} \right]d\lambda(s),
\end{align}
where 
$\tilde{X}'=d \tilde{X}(s)/ds$ and
$\tilde{X}'' = d^2 \tilde{X}(s)/ds^2$.
Now,
\begin{align*}
\frac{\partial}{\partial m} l(s, m, \tilde{X}) &= \exp(-\rho s) \left\{ 2 \xi \left[ \tilde{X}(s) + m(s) + \phi \tilde{X}(s) \right] + \frac{\partial c(m(s))}{\partial m} \right\},\\
\frac{\partial^2}{\partial \tilde{ X}  \partial m}  l(s,m, \tilde{X}) &= 2 \exp(-\rho s) \xi (1 + \phi),\\
\frac{\partial}{\partial\tilde{X}} l(s, m, \tilde{X}) 
&=2 \xi\exp(-\rho s)   (1 + \phi) \left[ (1 + \phi) \tilde{X}(s) + m(s) \right]-\frac{\tilde{\theta}s}{\tilde{X}^2(s)} \exp\left( \frac{\tilde{\theta}s}{\tilde{X}(s)} \right)   \, d\lambda(s) \\
&\hspace{.5cm} - \exp\left( \tilde{\theta} s \right) \frac{\tilde{\theta} \tilde{X}^2(s) - 2 \tilde{X}(s) \tilde{X}'(s)}{\tilde{X}^4(s)} \, d\lambda(s) \\
&\hspace{1cm} - \frac{\tilde{\theta}s}{\tilde{X}^2(s)} \exp\left( \frac{\tilde{\theta}s}{\tilde{X}(s)} \right)  \frac{d\lambda(s)}{ds} - \exp\left( \frac{\tilde{\theta}s}{\tilde{X}^2(s)} \left[ \tilde{\theta}(u - \tilde{X}) + m^2(s) \right] \right)\\
&\hspace{1.5cm} \times \left[ -\frac{2 \tilde{\theta}s}{\tilde{X}^3(s)} \left[ \tilde{\theta}(u - \tilde{X}) + m^2(s) \right] - \frac{\tilde{\theta}(s)}{\tilde{X}^2(s)} \tilde{\theta} \right]\, d\lambda(s) \\
&\hspace{2cm} + \exp\left(\tilde{\theta} s\right)  \bigg\{ -\frac{\sigma}{2\tilde{X}^2(s)} \left[ \tilde{\theta}^2 \tilde{X}(s) - \tilde{X}''(s) - 2 \tilde{\theta} \tilde{X}'(s) + \frac{2 \tilde{X}'^2(s)}{\tilde{X}(s)} \right]\\
&\hspace{2.5cm}+ \frac{\sigma}{2\tilde{X}(s)} \left[ \tilde{\theta}^2 - \frac{2 \tilde{X}'^2(s)}{\tilde{X}^2(s)} \right] \bigg\}  d\lambda(s).
\end{align*}

\begin{equation}
\begin{aligned}
\frac{\partial^2}{\partial \tilde{X}^{2}} l(s,m, \tilde{X}) &=2 \xi (1 + \phi)^2 \exp(-\rho s)   + \exp\left( \frac{\tilde{\theta}s}{\tilde{X}(s)} \right)  \frac{\tilde{\theta}s}{\tilde{X}^3(s)} \, d\lambda(s) \\
&\hspace{.5cm} +\frac{\exp\left( \tilde{\theta} s \right)}{\tilde{X}^8(s)}\bigg\{\left[-2 \tilde{\theta} \tilde{X}(s) + 2 \tilde{X}'(s)\right] \tilde{X}^4(s) + \left[\tilde{\theta} \tilde{X}^2(s) - 2 \tilde{X}(s) \tilde{X}'(s)\right]  4 \tilde{X}^3(s)\bigg\} d\lambda(s) \\
&\hspace{1cm} + \frac{\tilde{\theta}s}{\tilde{X}^3(s)} \exp\left( \frac{\tilde{\theta}s}{\tilde{X}(s)} \right)  \frac{d\lambda(s)}{ds} +\exp\left\{ \frac{\tilde{\theta}s}{\tilde{X}^2(s)} \left[ \tilde{\theta}(u - \tilde{X}) + m^2(s) \right] \right\}\\
&\hspace{1.5cm}\times \left\{ \frac{2 \tilde{\theta}s}{\tilde{X}^3(s)} \left[ \tilde{\theta}(u - \tilde{X}) + m^2(s) \right] + \frac{\tilde{\theta}s}{\tilde{X}^2(s)}  \tilde{\theta} \right\} d\lambda(s) \\
&\hspace{2cm} + \exp\left(\tilde{\theta} s\right)  \biggr\{ \frac{\sigma}{\tilde{X}^3(s)} \left[ \tilde{\theta}^2 \tilde{X}(s) - \tilde{X}''(s) - 2 \tilde{\theta} \tilde{X}'(s) + \frac{2 \tilde{X}'^2(s)}{\tilde{X}(s)} \right]\\
&\hspace{2.5cm}- \frac{\sigma}{2\tilde{X}^2(s)} \left[ \tilde{\theta}^2 - \frac{2 \tilde{X}^{'2}(s)}{\tilde{X}^2(s)} \right] \biggr\} d\lambda(s).
\end{aligned}
\end{equation}
Therefore, using the above values Equation \eqref{obj4} yields,
\begin{align}\label{l.0}
& \exp(-\rho s) \left\{ 2 \xi \left[ \tilde{X}(s) + m(s) + \phi \tilde{X}(s) \right] + \frac{\partial c(m(s))}{\partial m} \right\} \notag \\
&\hspace{.5cm} \times \biggl[ \left( \exp\left\{ \frac{\tilde{\theta}s}{\tilde{X}^2(s)} \left[ \tilde{\theta}(u - \tilde{X}) + m^2(s) \right] \right\} \right) \notag \\
&\hspace{1cm} \times \left\{ \frac{2 \tilde{\theta}s}{\tilde{X}^3(s)} \left[ \tilde{\theta}(u - \tilde{X}) + m^2(s) \right] + \frac{\tilde{\theta}s}{\tilde{X}^2(s)} \tilde{\theta} \right\} d\lambda(s) + D_2 \biggr]^{2} \notag \\
&= 2\biggl\{ 4 \xi^2 \exp(-2\rho s) (1 + \phi)^2 \bigg[ (1 + \phi) \tilde{X}(s) +  m(s)\bigg]    - \exp\left( \frac{\tilde{\theta}s}{\tilde{X}^2(s)} \left[ \tilde{\theta}(u - \tilde{X}) + m^2(s) \right] \right)\notag \\
&\hspace{1.5cm} \times \left[ -\frac{2 \tilde{\theta}s}{\tilde{X}^3(s)} \left[ \tilde{\theta}(u - \tilde{X}) + m^2(s) \right] - \frac{\tilde{\theta}^2s}{\tilde{X}^2(s)} \right] d\lambda(s) + D_1  2\exp(-\rho s)  \xi  (1 + \phi) \biggr\}.
\end{align}
According to Equation \eqref{l.0}
\begin{align*}
D_1 &= -\frac{\tilde{\theta}s}{\tilde{X}^2(s)} \exp\left( \frac{\tilde{\theta}s}{\tilde{X}(s)} \right)  d\lambda(s)- \exp\left( \tilde{\theta} s \right) \frac{\tilde{\theta} \tilde{X}^2(s) - 2 \tilde{X}(s) \tilde{X}'(s)}{\tilde{X}^4(s)} d\lambda(s) \\
&\hspace{.25cm}- \frac{\tilde{\theta}s}{\tilde{X}^2(s)} \exp\left( \frac{\tilde{\theta}s}{\tilde{X}(s)} \right) \frac{d\lambda(s)}{ds} + \exp\left(\tilde{\theta} s\right) \bigg\{ -\frac{\sigma}{2\tilde{X}^2(s)} \biggr[ \tilde{\theta}^2 \tilde{X}(s) - \tilde{X}''(s)\\
&\hspace{.5cm}- 2 \tilde{\theta} \tilde{X}'(s) + \frac{2 \tilde{X}'^2(s)}{\tilde{X}(s)} \biggr] + \frac{\sigma}{2\tilde{X}(s)} \left[ \tilde{\theta}^2 - \frac{2 \tilde{X}'^2(s)}{\tilde{X}^2(s)} \right] \bigg\} d\lambda(s),
\end{align*}
and
\begin{align*}
D_2 &= 2 \xi (1 + \phi)^2 \exp(-\rho s) + \exp\left( \frac{\tilde{\theta}s}{\tilde{X}(s)} \right)  \frac{\tilde{\theta}s}{\tilde{X}^3(s)} \, d\lambda(s) \\
& \hspace{.25cm}+ \frac{\exp\left( \tilde{\theta} s \right)}{\tilde{X}^8(s)} \bigg\{ \left[-2 \tilde{\theta} \tilde{X}(s) + 2 \tilde{X}'(s)\right] \tilde{X}^4(s) \\
&\hspace{.5cm} + \left[\tilde{\theta} \tilde{X}^2(s) - 2 \tilde{X}(s) \tilde{X}'(s)\right] 4 \tilde{X}^3(s) \bigg\} \, d\lambda(s)  + \frac{\tilde{\theta}s}{\tilde{X}^3(s)} \exp\left( \frac{\tilde{\theta}s}{\tilde{X}(s)} \right) \frac{d\lambda(s)}{ds} \\
&\hspace{1cm}+ \exp\left(\tilde{\theta} s\right) \biggr\{ \frac{\sigma}{\tilde{X}^3(s)} \left[ \tilde{\theta}^2 \tilde{X}(s) - \tilde{X}''(s) - 2 \tilde{\theta} \tilde{X}'(s) + \frac{2 \tilde{X}'^2(s)}{\tilde{X}(s)} \right] \\
&\hspace{1.5cm}- \frac{\sigma}{2\tilde{X}^2(s)} \left[ \tilde{\theta}^2 - \frac{2 \tilde{X}^{'2}(s)}{\tilde{X}^2(s)} \right] \biggr\} \, d\lambda(s).
\end{align*}
Given the complexity, the Equation \eqref{l.0} could lead to a high-degree polynomial in $m(s)$. Determining an explicit solution for $m(s)$ might necessitate additional simplifications or assumptions to lower the polynomial's degree. We assume that the influence of $d\lambda(s)$  is negligible [i.e., \( d\lambda(s) \ra 0 \)], this would effectively remove all terms involving $d\lambda(s)$ from the equation. Hence, 

\begin{multline*}
\exp(-\rho s) \left\{ 2 \xi \left[ \tilde{X}(s) + m(s) + \phi \tilde{X}(s) \right] + \frac{d c(m(s))}{d m} \right\} D_2^2 \\
= 2 \left( 2 \xi \exp(-\rho s) (1 + \phi) \left[ (1 + \phi) \tilde{X}(s) + m(s) \right] + D_1 \right) 2 \exp(-\rho s) \, \xi \, (1 + \phi).
\end{multline*}
Since $c(m(s)) =  c_0 m^2(s)/2$, then $ d c(m(s))/d m = c_0 m(s)$,
 and furthermore, as $ D_1  , D_2$ are not dependent on $m(s)$ ,
 \begin{align*}
&\exp(-\rho s) \bigg\{ 2 \xi \left[ (1 + \phi) \tilde{X}(s) + m(s) \right] + c_0 m(s) \bigg\} D_2^2 \\
&= 4 \exp(-\rho s) \xi (1 + \phi)\bigg[2 \xi \exp(-\rho s) (1 + \phi) \left[ (1 + \phi) \tilde{X}(s) + m(s) \right] + D_1 \bigg] .
\end{align*}

 Both sides of the equation have a common factor of $\exp(-\rho s)$. 
We can divide both sides by $\exp(-\rho s)$ to simplify.

\begin{equation}
\left\{ 2 \xi \left[ (1 + \phi) \tilde{X}(s) + m(s) \right] + c_0 m(s) \right\}  D_2^2 
= 4  \xi  (1 + \phi)\bigg\{ 2 \xi\exp(-\rho s) (1 + \phi) \left[ (1 + \phi) \tilde{X}(s) + m(s) \right] + D_1 \bigg\} .
\end{equation}
After grouping the terms involving $m(s)$ and the constants 
\begin{equation*}
D_2^2 \left[ 2 \xi (1 + \phi) \tilde{X}(s) + (2 \xi + c_0) m(s) \right]= 4 \xi (1 + \phi) \left[ 2 \xi\exp(-\rho s) (1 + \phi)^2 \tilde{X}(s) + 2 \xi (1 + \phi) m(s) + D_1 \right].
\end{equation*}
Moving all terms involving $m(s)$ to one side and constant terms to the other side, and subtract $ D_2^2 2 \xi (1 + \phi) \tilde{X}$ from both sides:
\begin{equation*}
D_2^2 (2 \xi + c_0) m(s) = 4 \xi (1 + \phi) \left[ 2 \xi\exp(-\rho s) (1 + \phi)^2 \tilde{X}(s) + 2 \xi (1 + \phi) m(s) + D_1 \right] - D_2^2  2 \xi (1 + \phi) \tilde{X}(s).
\end{equation*}

After collecting the terms involving $m(s)$ on the left-hand side and the constant terms on the right-hand side yield
\begin{equation*}
D_2^2 (2 \xi + c_0) m(s) - 4 \xi (1 + \varphi)  2 \xi (1 + \phi) m(s) = 4 \xi (1 + \phi) \left[ 2 \xi\exp(-\rho s) (1 + \phi)^2 \tilde{X}(s) + D_1 \right] - D_2^2 2 \xi (1 + \phi) \tilde{X}(s).
\end{equation*}

Factoring $m(s)$ out and dividing both sides by the coefficient of $m(s)$ implies
\begin{equation}\label{l.1}
m^*(s) = \frac{ \xi (1 + \phi)\bigg[2\xi\exp(-\rho s)(1 + \phi)^2 \tilde{X}(s) +  D_1 - D_2^2 \tilde{X}(s)\bigg]}{D_2^2 (2 \xi + c_0) - 2 \xi^2 (1 + \phi)^{2}}.
\end{equation}

 \section{Simulation Studies:}

 In this section, we conduct a series of simulation studies to evaluate the performance of Equation \eqref{l.1} under varying conditions. To achieve this, we first introduce three distinct models, each characterized by a unique set of parameters, specifically \( c_0, \xi, \phi, \tilde\theta, \sigma \), and \( \rho \). These parameters dictate the behavior of the system and influence the outcomes of our simulations. Once the models are established, we compute the values of \( \tilde m \), which is derived as a logistic transformation of \( m^* \), given by  

\begin{equation}\label{l.2}
\tilde m(s):= \frac{1}{1+\exp\big\{-m^*(s)\big\}}
\end{equation}

This transformation ensures that the complementarity strategy adopted by a firm in response to a monetary shock is confined within the unit interval \([0,1]\), thereby maintaining interpretability and consistency within our framework. Given that the true data-generating process is explicitly determined by Equation \eqref{l.1}, we have the flexibility to simulate arbitrarily large datasets for \( \tilde X \) and systematically analyze the behavior of \( \tilde m \) under each of the three parameter settings. By doing so, we can assess the robustness and sensitivity of the model to different parameter variations, allowing us to infer the conditions under which the complementarity strategy remains stable or fluctuates significantly. The insights gained from these simulation studies will be instrumental in understanding the theoretical and practical implications of Equation \eqref{l.1}, particularly in the context of monetary shocks and firm responses. Through systematic comparisons across different parameter regimes, we can identify key drivers of variation in \( \tilde m \) and evaluate the extent to which the transformation effectively captures the underlying dynamics of the model. Throughout the simulation of three model we choose $\tilde X=0.87$, since we want to observe the effect of higher percent deviation of a firm from CPI on $\tilde m$. Furthermore, in Figure \ref{fig:relations}, $\tilde m$ takes the peak at $\tilde X=0.87$. Three models are as follows:\\
(i). {\bf Model 1:} All the parameter values are same:\\
\begin{equation}\label{l.3}
m^*(s) = \frac{ 0.0001 (1 + 0.0001)\bigg[0.0002\exp(-0.0001 s)(1 + 0.0001)^2 \tilde{X}(s) +  D_1 - D_2^2 \tilde{X}(s)\bigg]}{D_2^2 (0.0003) - 2 (0.0001)^2 (1.0001)^{2}},
\end{equation}
where $\tilde m$ is defined by Equation \eqref{l.2}.\\
(ii). {\bf Model 2:}\\
\begin{equation}\label{l.4}
m^*(s) = \frac{ 0.01 (1.01)\bigg[0.02\exp(-0.01 s)(1.01)^2 \tilde{X}(s) +  D_1 - D_2^2 \tilde{X}(s)\bigg]}{D_2^2 0.03 - 2 (0.01)^2 (1.01)^{2}},
\end{equation}
where $\tilde m$ is defined by Equation \eqref{l.2}.\\
(iii). {\bf Model 3:} All the parameters take distinct higher values: \\
\begin{equation}\label{l.5}
m^*(s) = \frac{ 1.2\bigg[2*0.8*(1.5)^2\exp(-\rho s) \tilde{X}(s) +  D_1 - D_2^2 \tilde{X}(s)\bigg]}{D_2^2 (2*0.8 + 0.8) - 2 *0.8^2 (1.5)^{2}},
\end{equation}
where $\tilde m$ is defined by Equation \eqref{l.2}.\\
Table \ref{tab:dm_test_results} shows three different values of $\tilde m$ for three different sets of parameters.

\begin{table}[H]
\centering
\caption{$\tilde{m}(s)$ test results on simulated datasets.}
\label{tab:dm_test_results}
\begin{tabular}{l|c|c|c}
\hline
 & Model-1 & Model-2 & Model-3 \\
\hline
Predicted $\tilde{m}(s)$ value & 0.617 & 0.396 & 0.047 \\
\hline
$c_0$ & 0.0001 & 0.01 & 0.8 \\
$\xi$ & 0.0001 & 0.01 & 0.8 \\
$\phi$ & 0.0001 & 0.01 & 0.5 \\
$\tilde{\theta}$ & 0.001 & 0.4 & 0.7 \\
$\sigma$ & 0.001 & 0.08 & 0.9 \\
$\rho$ & 0.0001 & 0.01 & 0.8 \\
\hline
\end{tabular}
\end{table}

To further investigate the relationship between parameter values and the complementarity strategy of firms, we systematically varied the parameter sets across the three models and analyzed their corresponding effects on \( \tilde m(s) \). In Model 1, we intentionally selected smaller values for the parameters \( c_0, \xi, \phi, \tilde\theta, \sigma \), and \( \rho \) to observe their impact on the logistic transformation \( \tilde m(s) \). The simulation results indicate that the value of \( \tilde m(s) \) is the highest among the three models, suggesting that lower parameter values yield a stronger complementarity response to monetary shocks. This outcome implies that smaller parameter values make firms more responsive to external shocks, possibly due to reduced inertia or lower adjustment costs. A similar trend was observed in Models 2 and 3, where progressively larger parameter values were assigned to simulate varying firm behaviors. As the parameter values increased from Model 1 to Model 3, the value of \( \tilde m(s) \) consistently decreased, reinforcing the observation that \( \tilde m(s) \) exhibits a negative correlation with the parameter magnitudes. This negative correlation suggests that larger parameter values mitigate the firm's propensity to adopt complementarity strategies, potentially reflecting higher adjustment costs, greater uncertainty, or stronger persistence in firm behavior. The consistent pattern observed across all three models highlights the sensitivity of the complementarity strategy to parameter variations, providing insights into how firms' strategic responses may be influenced by structural characteristics. These findings underscore the importance of parameter selection in modeling firm behavior and suggest that smaller parameter values can amplify the effects of monetary shocks on firm decisions, whereas larger values may dampen the complementarity response.

\begin{figure}[H]
	\includegraphics[width=.98\textwidth]{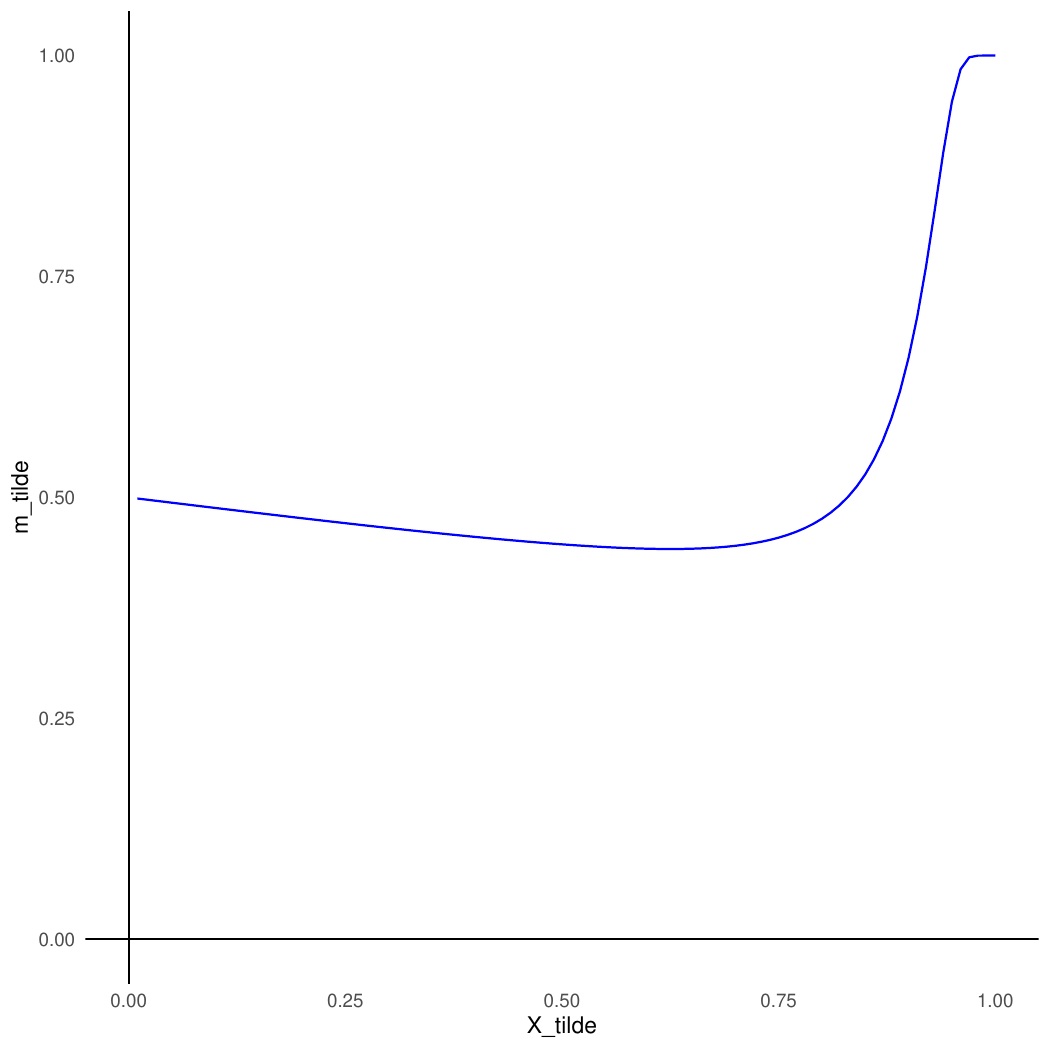}
		\caption{{ Relation ship between $\tilde X$ and adjusted complementarity strategy of a firm due to random monetary shock.}}
		\label{fig:relations}
\end{figure}

Figure \ref{fig:relations} provides a comprehensive illustration of how the variable $\tilde{X}$ interacts with a firm’s adjusted complementarity strategy in response to a random monetary shock. The construction of this figure is based on specific parameter choices, $\xi=0.0001$, $\phi=0.0001$, $c_0=0.00001$, $\rho=0.5$, $\sigma=0.05$, and $\tilde{\theta}=0.02$, ensuring a controlled and precise simulation environment. To perform the simulation, we utilized 100 values of $\tilde{X}$ ranging from 0 to 1. The simulation reveals a crucial insight; at $\tilde{X}=0$, the function $\tilde{m}(s)$ assumes a value of $0.5$. This indicates that when a firm does not exhibit any deviation from the aggregate CPI, its complementarity strategy remains neutral at $0.5$. This neutrality comes from the firm's balanced expectations about the effects of the monetary shock, assigning equal probabilities (ie $1/2$) to potential outcomes. As $\tilde{X}$ increases, we observe a gradual decline in $\tilde{m}$ until it reaches a minimum at $\tilde{X}=0.77$. This decreasing trend suggests that, within the interval $[0,0.77]$, the firm exhibits risk-averse behavior, opting for a more conservative complementarity strategy as it perceives an increased risk in deviating from the aggregate CPI. However, beyond $\tilde{X}=0.77$, a stark shift occurs, $\tilde{m}$ begins to rise sharply. This inflection point signifies a transition in the firm's stance; as $\tilde{X}$ surpasses 0.77, the firm increasingly diverges from the aggregate CPI, signaling a shift towards risk-loving behavior. The sharp upward movement in $\tilde{m}$ aligns with the firm’s growing willingness to take on risk as it perceives larger deviations as more rewarding. This pattern aligns intuitively with our theoretical framework, given that $\tilde{m}$ is derived from a logistic transformation of $m^*(s)$, and the use of an exponential function with a positive coefficient inherently produces this characteristic curve. The interplay between risk attitudes and monetary shocks, as visualized in Figure \ref{fig:relations}, thus underscores the dynamic nature of a firm's strategic adjustments in response to uncertainty, reflecting both risk-averse and risk-loving behaviors at different levels of $\tilde{X}$.

\section{Data Analysis:}

In this section, we will dive into a comprehensive analysis of the dataset, with the aim of uncovering valuable information and identifying significant trends that can inform business strategies and decision-making processes. Data analysis serves as a powerful tool in modern business and economic decision making, as it allows us to transform raw data into meaningful conclusions. By applying various statistical, computational, and visualization techniques, we can explore the dataset, discern patterns, and forecast potential future developments that can have profound implications for business operations and market strategies.

The data set on hand encompasses key economic indicators and financial data that are critical for understanding broader economic conditions and assessing the performance of individual companies. It includes the CPI for 2022 and 2023, which is a key measure of inflation, alongside stock prices for Nestlé, Palmolive, Westrock Coffee, and Dover for the same two years. The CPI data provide a snapshot of the changes in the cost of living over time, reflecting the changes in average prices of a basket of goods and services consumed by households. This data set is essential for understanding inflationary pressures, which can directly affect business costs, consumer purchasing power, and overall economic health. By analyzing the CPI, we can gain insight into how inflation has evolved over the past two years, how it compares to historical trends, and how businesses can adapt their strategies to mitigate the effects of rising or falling prices.

Together with the CPI data, the stock price data for the selected companies—Nestlé, Palmolive, Westrock Coffee, and Dover—will offer valuable insight into how these companies have performed in the market during the same time period. Stock prices reflect investor sentiment, corporate performance, and market dynamics \citep{pramanik2024stochastic}.  This analysis will utilize a combination of descriptive statistics, time series analysis, and data visualizations to uncover trends and potential causes of stock price fluctuations. Descriptive statistics will provide an overview of the central tendencies, variability, and distributions in the CPI and stock price data, while time-series analysis will help us understand how the values have changed over time and identify any seasonality or long-term trends.

In addition to statistical techniques, data visualization will play a critical role in illustrating these trends and relationships. Graphs such as line charts, boxplot , and qqplot will help in visually representing the CPI trends over time, the stock price movements of the companies, and the potential correlations between these variables. By presenting the data visually, we can more easily spot patterns, outliers, and trends that may not be immediately apparent through raw numbers alone.

Through this detailed analysis, our goal is to derive actionable insights that can drive business decisions. By understanding the relationship between inflation (as represented by CPI) and stock performance, companies can adapt their strategies in real-time to navigate fluctuating economic conditions. For instance, companies facing inflationary pressures may consider adjusting their product pricing, refining their marketing strategies, or reevaluating their supply chain management practices \citep{pramanik2025stubbornness}. Furthermore, by examining the stock price performance, businesses can make more informed decisions about investments, mergers and acquisitions, and strategic growth plans.

Ultimately, the insights gleaned from this analysis will enable business leaders and investors to make more informed, data-driven decisions that enhance long-term success. By leveraging both economic and financial data, companies can position themselves more strategically in the market, responding proactively to economic trends and ensuring they remain competitive, resilient, and profitable. This thorough analysis not only benefits companies directly but also provides a deeper understanding of the broader economic landscape, which is essential for shaping the future of business and economic policy .

The CPI is a vital economic metric that tracks the average changes in the prices consumers pay for a specific collection of goods and services. This collection is designed to represent the usual spending habits of households and includes a wide array of items such as food, housing, transportation, clothing, healthcare, and education, among others. Each category in this basket of goods has a designated weight based on its significance to the average household's expenses. For example, housing typically makes up a large portion of the CPI due to its prominence in household budgets \citep{pramanik2025factors}. Its main role is to monitor price fluctuations, a process commonly referred to as inflation. When the CPI rises, it indicates that, on average, the prices of goods and services have increased, leading to a reduction in the purchasing power of money. This phenomenon is referred to as inflation. In contrast, when the CPI declines, it signifies a reduction in prices, known as deflation, which could suggest economic contraction or weakening demand for goods and services \citep{pramanik2025strategies}. Recognizing these shifts is crucial, as inflation or deflation can have profound effects on a nation’s economy, influencing everything from consumer behavior to decisions about monetary policy.

Additionally, the CPI is used to adjust income and expenditure data to account for inflation, thereby ensuring that economic figures remain accurate over time. This adjustment, often called a cost-of-living adjustment, is crucial for maintaining the purchasing power of individuals as prices rise. For example, many social security payments, wages, and pensions are regularly adjusted based on changes in the CPI to ensure they keep pace with rising living costs. Similarly, businesses and policymakers rely on CPI data to understand how price shifts may affect production costs, consumer demand, and overall economic growth.

There are various versions of the CPI. The CPI-U (Consumer Price Index for All Urban Consumers) is the most widely used and includes a broad range of urban households. It tracks price changes for a large segment of the population living in urban areas. Another version, the CPI-W (Consumer Price Index for Urban Wage Earners and Clerical Workers), focuses specifically on households where at least half of the income comes from wages or clerical jobs. A variant known as the Core CPI excludes volatile food and energy prices, providing a clearer view of long-term inflation trends, as food and energy prices can fluctuate significantly in the short term \citep{pramanik2025optimal}. Economists, government agencies, and central banks like the Federal Reserve depend on the CPI to guide monetary policy decisions. For example, when inflation is high, a central bank may raise interest rates to curb borrowing and spending, helping to slow down the economy and bring prices under control. On the other hand, if inflation is low or there is a risk of deflation, central banks may lower interest rates to encourage spending and investment.

While the CPI is an essential tool for understanding inflation, it does have its limitations. One notable drawback is that the basket of goods and services used to compute the CPI may not perfectly reflect the actual consumption patterns of every individual or household. For instance, the spending habits of retirees or low-income families may differ significantly from the average urban consumer. Moreover, the CPI does not adjust for quality changes over time. If a product's quality improves, its price may rise, but this may not necessarily reflect inflation in the traditional sense. Technological advancements and new products, which do not neatly fit into the CPI's basket, can also influence consumer behavior and living standards in ways that the index does not fully capture.

Despite these limitations, the Consumer Price Index remains one of the most significant and widely used economic indicators. It plays a key role in understanding inflation, adjusting wages and income, and informing policy decisions that affect the broader economy. By providing a snapshot of price changes over time, the CPI helps inform not only policymakers but also businesses and consumers, guiding decisions on spending, saving, and investing.
The QQ plot reveals that a majority of the data points closely follow the theoretical quantile line, which strongly indicates that the CPI data adheres to a normal distribution. This graphical representation is a powerful tool for assessing the distribution of a dataset, and in this case, the alignment of the data points with the straight line suggests that the data does not deviate significantly from the expected normal pattern. In statistical analysis, a QQ plot is used to compare the quantiles of the sample data with the quantiles of a theoretical distribution—in this case, the normal distribution. When the data points lie near or along the line, it demonstrates that the sample data approximates the normal distribution well.

Such an alignment has important implications for statistical analysis. A normal distribution is a common assumption in many statistical techniques, as it enables the use of various tools like parametric tests, regression analysis, and the calculation of confidence intervals, all of which rely on the data following a normal distribution. This means that, based on the QQ plot, we can confidently assume that the CPI data is normally distributed. As a result, we are in a favorable position to apply statistical methods that presuppose normality, which can help in making predictions, testing hypotheses, and deriving more precise conclusions about the behavior of inflation or price changes over time.

The significance of confirming normality cannot be overstated, especially in economic analysis, where assumptions about data distribution guide critical decision-making processes. With the assumption of normality validated by the QQ plot, the CPI data can be subjected to a range of parametric statistical methods. These methods are advantageous because they tend to be more powerful and yield more accurate results when the data is normally distributed. Therefore, understanding that the CPI data follows this distribution opens up a broader array of analysis techniques that can provide deeper insights into inflation trends, economic forecasting, and policy recommendations. Furthermore, confirming normality through visual tools like the QQ plot adds an extra layer of confidence to our interpretation of the data. statistical tests such as the Shapiro-Wilk test  for normality was used to confirm normality.

\begin{figure}[H]
	\includegraphics[width=.98\textwidth]{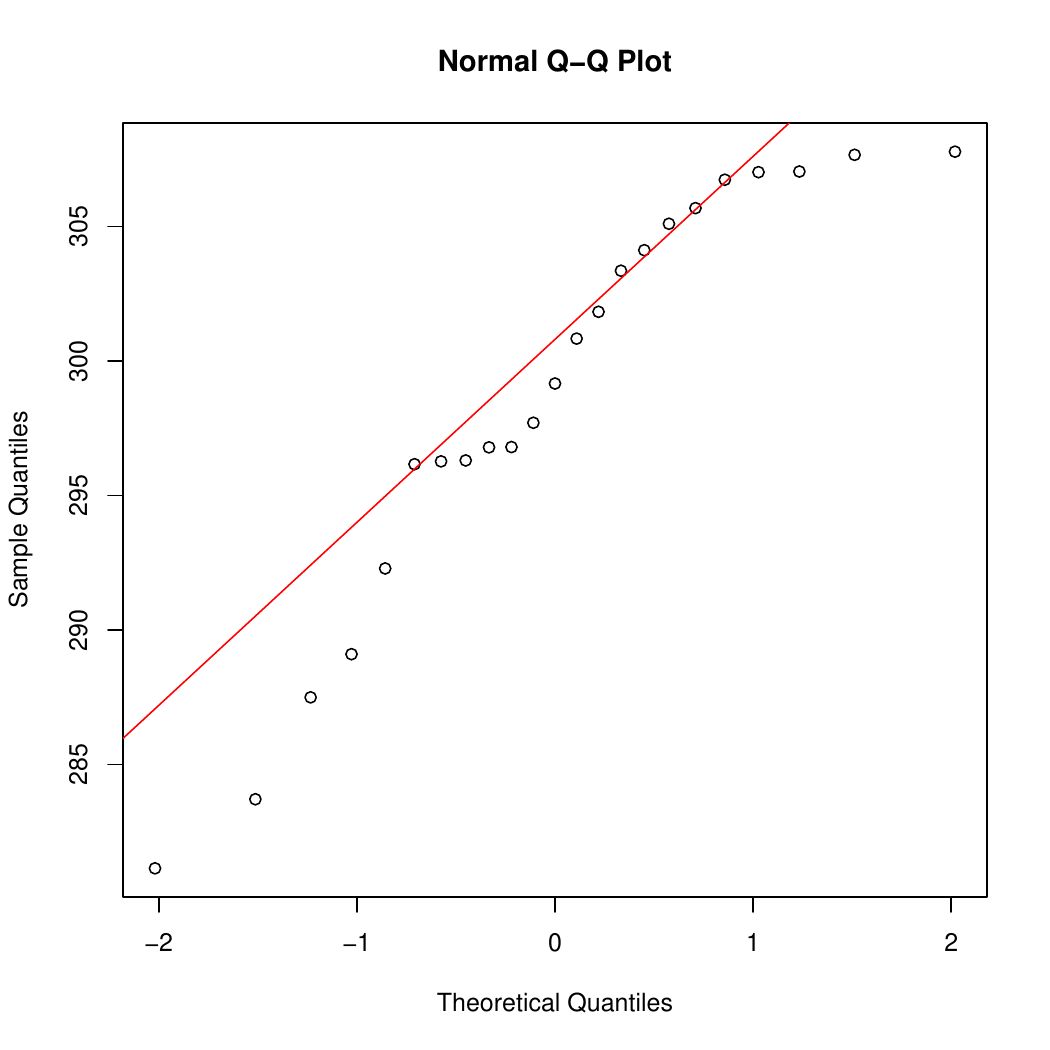}
		\caption{{ Test for normality of CPI.}}
		\label{fig:0.1}
\end{figure}

Observation from the data is that Palmolive exhibits the highest level of variability among the four companies, as evidenced by its box plot (\ref{fig:0}), which is the largest in size compared to the others. This suggests that the measured variable fluctuates more widely for Palmolive, indicating greater dispersion in its data points. The interquartile range (IQR), which measures the spread of the middle 50\% of the data by calculating the difference between the first quartile (Q1) and the third quartile (Q3), is notably wider for Palmolive than for the other companies. A larger IQR signifies that Palmolive’s data distribution is more spread out, meaning there is a broader range of values between its lower and upper quartiles. This level of variation can have significant implications, as it may reflect greater diversity in performance, inconsistent outcomes, or increased sensitivity to external factors affecting the measured variable. In addition to exhibiting the highest variability, Palmolive also has the highest median value among the four companies. The median, which represents the middle value of the dataset when arranged in ascending order, serves as a key indicator of central tendency and provides a robust measure of typical performance, as it is not as affected by extreme values as the mean. The fact that Palmolive’s median is higher than that of its competitors suggests that, on average, it outperforms the other companies in terms of the measured variable. This finding is particularly noteworthy, as it indicates that despite having greater variability, Palmolive still maintains a stronger central performance, making it a standout in comparison to the other companies. Such insights are valuable in understanding market positioning and competitive advantages, as they highlight both consistency and relative superiority in the measured metric. By considering these factors, businesses can better assess how variations in performance might impact strategic decision-making and identify potential areas for improvement or optimization.

\begin{figure}[H]
	\includegraphics[width=.98\textwidth]{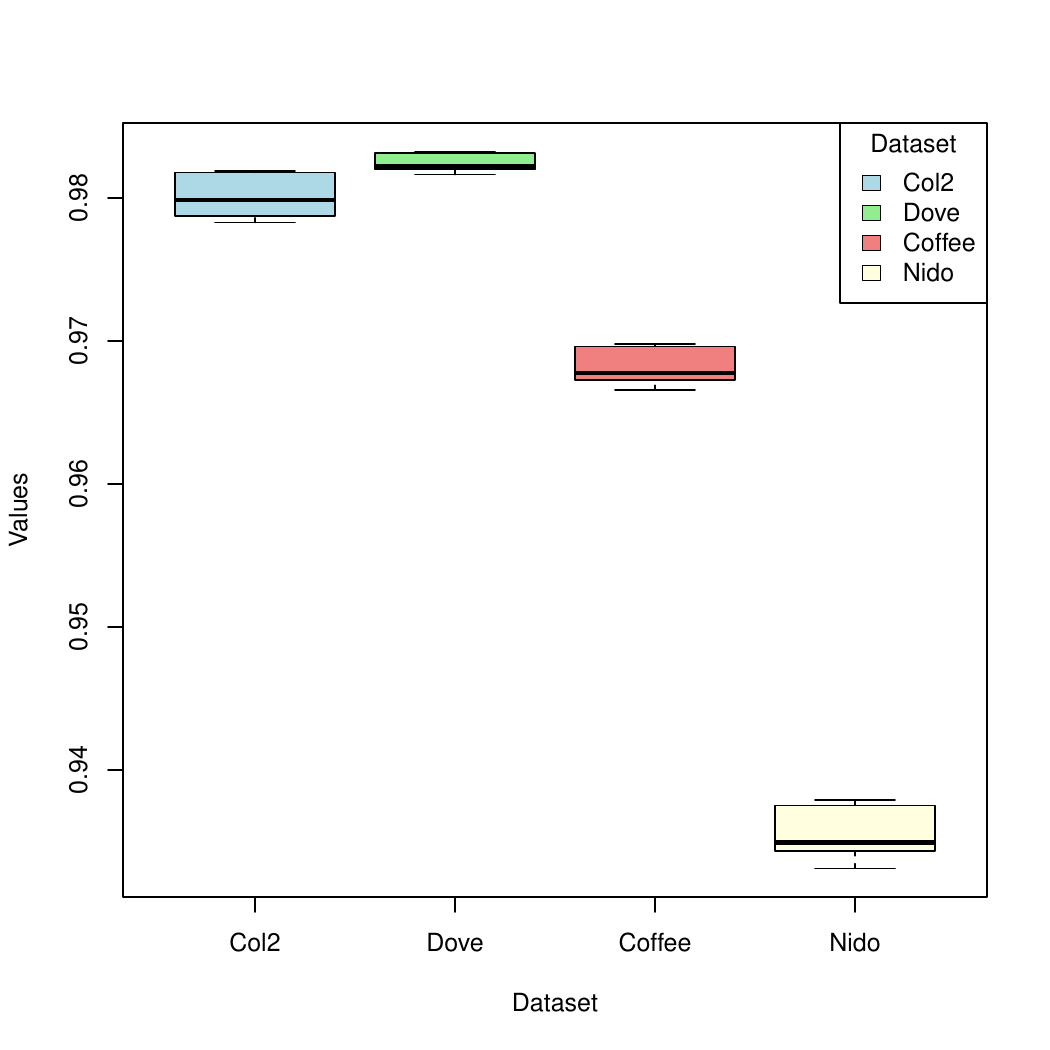}
		\caption{{Box plot of four different firms.}}
		\label{fig:0}
\end{figure}

 The combination of a large box plot and a high median suggests that Palmolive has both high performance and high variability. This could mean that while Palmolive performs well on average, there is significant fluctuation in its performance. Nestle has the second-largest box plot and median. This indicates that Nestle also has relatively high variability and strong performance, but not as high as Palmolive. Nestle's performance is consistent and strong, though it exhibits less variability compared to Palmolive. This suggests that Nestle is a stable performer but may not reach the same peaks as Palmolive. WestRock has a smaller box plot and median compared to Palmolive and Nestle. This indicates lower variability and moderate performance. WestRock's performance is more consistent and less variable than Palmolive and Nestle, but it also has a lower median value. This suggests that WestRock is a stable but less outstanding performer compared to the top two companies.

 Dover has the smallest box plot and the lowest median among the four companies. This indicates low variability and lower performance.
 Dover's performance is the least variable and also the lowest among the four companies. This suggests that Dover is a consistent but underperforming company compared to the others \citep{dong2024strategic}. Palmolive stands out as the top performer with the highest median and the largest variability. This suggests that while Palmolive performs well on average, its performance is less predictable. Nestle follows closely, with strong performance and moderate variability. It is a stable and reliable performer. WestRock shows moderate performance with low variability, indicating consistency but not outstanding results. Dover has the lowest performance and the least variability, suggesting it is the least competitive among the four companies. Dover's line reaches the lowest peak among the four companies, indicating that it achieved the lowest maximum value during the observed time period.Dover may have underperformed relative to the other companies, either due to weaker performance, less growth, or more challenges in maintaining higher values, and its trend line is relatively flat,indicating Dover's performance may be stable but consistently low.

\begin{figure}[H]
	\includegraphics[width=.98\textwidth]{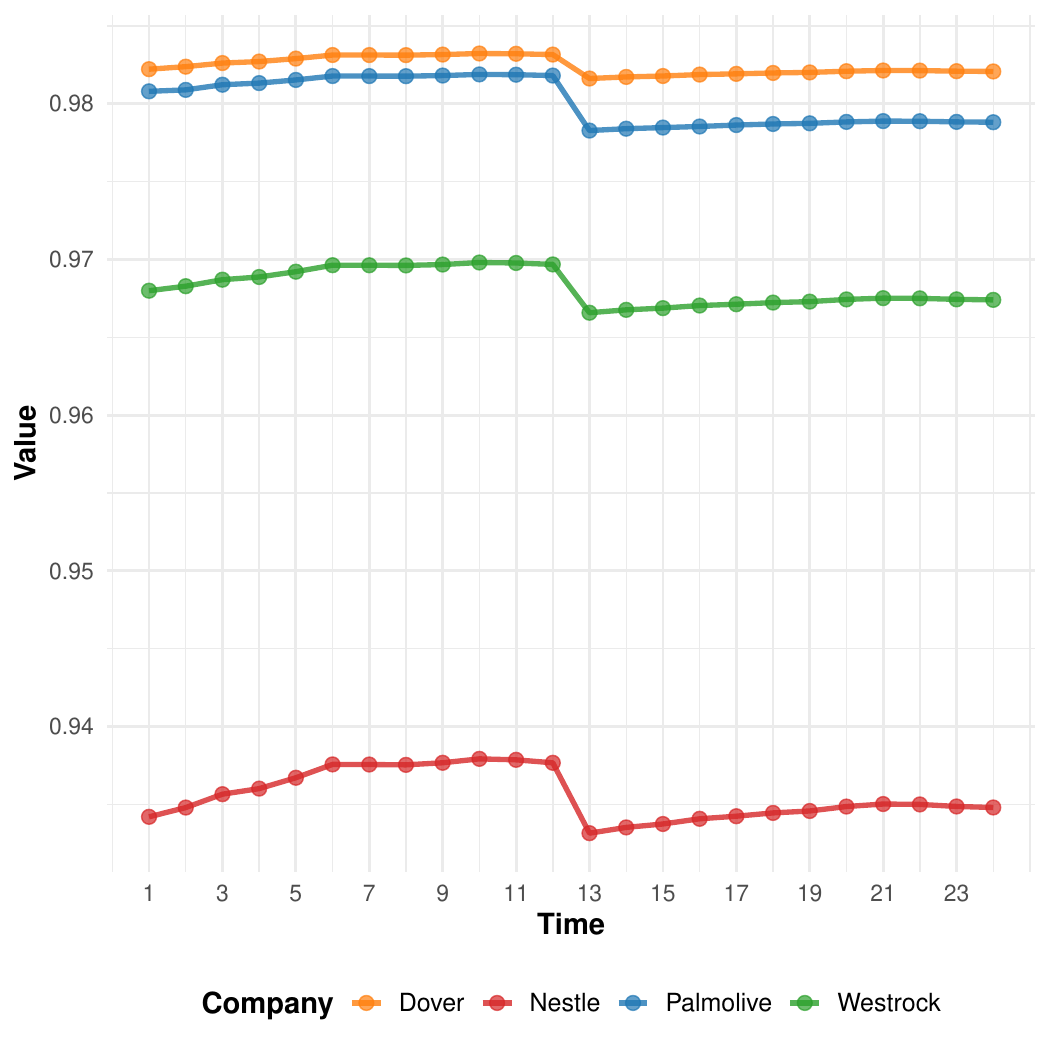}
		\caption{{Line plot of four diffirent firms.}}
		\label{fig:1}
\end{figure}

Westrock's peak is higher than Dover's but lower than Palmolive's and Nestle's, placing it in the second-lowest position and its performance is better than Dover's but still lags behind Palmolive and Nestle.Westrock may be experiencing periodic challenges or variability in its performance because of its periodic fluctuation.

Palmolive's peak is higher than both Dover and Westrock but lower than Nestle's, placing it in the second-highest position.Palmolive is performing better than Dover and Westrock, but is still outperformed by Nestle.The trend  line of Palmolive rises steadily, indicating Palmolive is steadily improving and closing the gap with Nestle.

Nestle's line reaches the highest peak among the four companies, indicating that it achieved the highest maximum value during the observed time period. Nestle is the top performer among the four companies, achieving the highest values at its peak before a sharp decline. Nestle may still be achieving high values but with some variability in performance.
The line plot reveals a clear hierarchy in performance among the four companies, with Nestle leading, followed by Palmolive, Westrock, and Dover. 
The trends over time provide additional insights into how each company is performing and where they may need to focus their efforts to improve or maintain their positions.

In considering the stability and behavior of a system, Westrock provides an insightful framework that can be interpreted through the lens of dynamic systems where equilibrium plays a central role. The primary feature of this system lies in its overall stability, where fluctuations, though present, are generally moderate and do not significantly disrupt the system's core functions or direction. This is not to say that the system is impervious to change or unaffected by random disturbances; rather, it maintains a level of resilience that allows it to absorb these variations without veering too far from its established state.
Westrock's system offers a compelling model for environments where stability and predictability are paramount but still acknowledges the ever-present influence of randomness. While the system is not immune to fluctuations, it demonstrates a natural resilience, remaining anchored in a stable pattern that allows it to continue functioning effectively. The occasional, moderate fluctuations serve more as temporary disturbances, adding a layer of complexity but not undermining the overall order that governs the system’s trajectory. Thus, while fluctuations may momentarily cause ripples, the system ultimately retains its stability, making it highly adaptable to minor, random shifts without experiencing significant upheaval. This kind of stability is characteristic of well-managed environments that balance predictability with the inevitable uncertainties of the external world.
Palmolive, as a brand and a company, operates within an environment that is highly volatile and sensitive to fluctuations, particularly those driven by external factors. This volatility manifests in the form of large, sudden shifts or "jumps" in its market behavior, where the company's performance can change dramatically in a short period of time. The nature of these shifts is not only abrupt but also often unpredictable, making it difficult to forecast or mitigate the impact of such fluctuations. These sudden changes in the system can be attributed to a variety of external shocks that disrupt the status quo and provoke swift reactions from investors, consumers, and industry players alike.

One key factor contributing to this volatility is the sensitivity of the market to global economic events. For instance, during times of financial crisis or economic downturns, the market can experience steep declines as consumer spending drops, production costs rise, or general confidence in the economy falters. These external shocks may have cascading effects on companies like Palmolive, where fluctuations in raw material costs or shifts in consumer preferences can translate into sudden and drastic changes in stock prices, revenue projections, and even operational strategy. The sudden drop in demand for discretionary products during an economic slowdown or a global recession could cause Palmolive to face significant challenges, as the brand relies heavily on consumer behavior patterns that are often affected by economic conditions.

Moreover, Palmolive’s exposure to geopolitical events or market sentiment shifts further amplifies the company's inherent volatility. For example, a sudden change in trade policy, such as tariffs or sanctions, can disrupt the supply chain, affecting the availability and cost of materials. Similarly, an unanticipated regulatory change in one of the regions where Palmolive operates could lead to an immediate and dramatic shift in the company’s operations, requiring rapid adaptation to new rules or standards. These unpredictable events result in sudden ``jumps,” wherein the market reacts sharply, either positively or negatively, to news or rumors, creating instability in stock prices and public perception of the company.

\begin{figure}[H]
	\includegraphics[width=.98\textwidth]{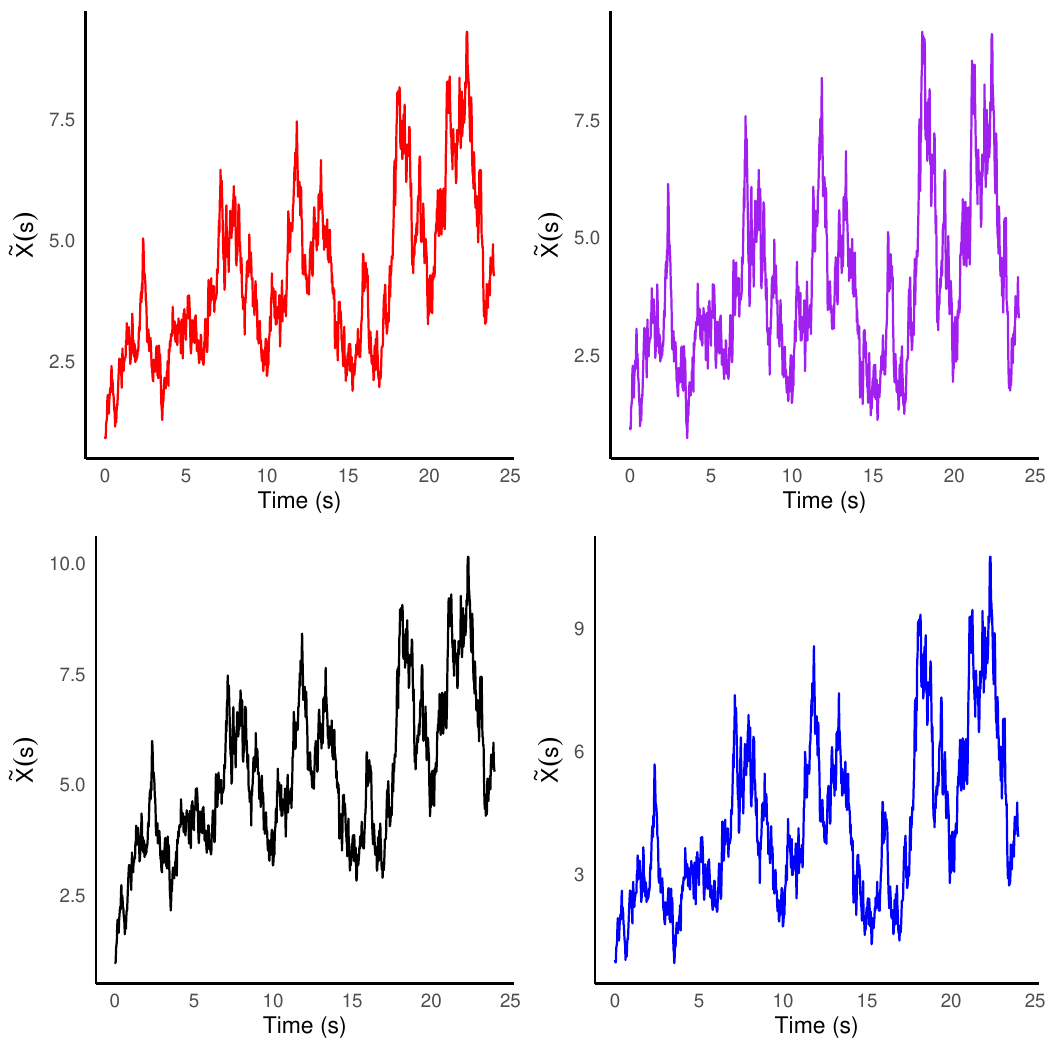}
		\caption{{Stochastic plot of four differnt firms.}}
		\label{fig:2}
\end{figure}

In addition to these external shocks, Palmolive is also vulnerable to shifts in consumer sentiment, which can be influenced by anything from viral social media campaigns to broader changes in lifestyle preferences. For instance, if a trend emerges that encourages consumers to shift towards more sustainable or eco-friendly products, Palmolive might experience a sudden surge in demand—or conversely, if a controversy arises surrounding the brand or its practices, it could face a sharp and immediate decline in consumer trust. These kinds of sentiment-driven fluctuations are often unpredictable, and the market can respond with extreme variations, sometimes amplifying the company's perceived vulnerability.

The degree of variation within Palmolive's system is not simply moderate or minor but is instead extreme, with significant swings occurring with little warning. The company's stock may fluctuate dramatically, as investors attempt to gauge the impact of an event that could affect the broader market, forcing large-scale buying or selling decisions. This results in a heightened sense of instability, where the company's future prospects appear far more uncertain than in more stable systems. Unlike companies with more predictable growth patterns, Palmolive operates in a space where the uncertainty is pronounced, and the future trajectory is often clouded by these external shocks.

These large fluctuations can be triggered by any number of unexpected events. A major news event, such as a natural disaster affecting one of Palmolive’s manufacturing locations, could instantly cause the company’s stock price to plummet as investors factor in potential disruptions to production and distribution. Similarly, geopolitical tensions or an unforeseen political crisis could introduce uncertainty into the company’s international operations, forcing investors to react rapidly and sending shockwaves through the market. Because Palmolive operates in a global marketplace, the interconnectedness of these events means that an incident in one part of the world can cause ripple effects across the entire business, further intensifying the volatility.

Such irregular and often dramatic fluctuations make Palmolive’s market environment a challenging one for investors, who must constantly monitor both macroeconomic indicators and micro-level consumer behaviors in order to stay ahead of the curve. Unlike a stable system where gradual changes over time are the norm, Palmolive’s system requires constant vigilance, as the risk of sudden, pronounced fluctuations is always present. The company must remain agile and responsive to the volatile nature of the market, preparing for the possibility of unexpected shifts that could dramatically alter its trajectory at any given moment.

In conclusion, Palmolive is situated in a highly volatile system characterized by frequent, large variations in its market performance. This volatility arises from external shocks such as economic crises, geopolitical tensions, and shifts in consumer sentiment. These unpredictable factors introduce instability into the system, leading to significant jumps in the company’s value and performance. The degree of variation is extreme, and the overall market environment for Palmolive is one where the future is uncertain, requiring the company to remain adaptable and prepared for sudden and often unpredictable changes. For investors and stakeholders, this volatility presents both risk and opportunity, making it essential to closely monitor the broader forces at play in the global marketplace.
Dover, as a company and system, operates in an environment that is characterized by a high degree of predictability and stability. Unlike markets that experience extreme fluctuations or those that are heavily influenced by external disruptions, Dover follows a consistent, deterministic trend over time. This means that the company’s trajectory can be anticipated with a high level of confidence, as it generally moves in a single direction without significant deviations. Such predictability allows for long-term planning and decision-making, as stakeholders can reasonably expect that the system will continue along a familiar path, barring any extraordinary circumstances.

This stability is reflected in the system's behavior, where fluctuations are not only minor but almost negligible. Unlike industries where market dynamics lead to constant shifts in product demand, regulatory changes, or consumer sentiment, Dover operates in a relatively calm and steady environment. The factors that typically drive volatility in other sectors, such as external shocks or rapid technological innovations, have a much smaller impact on Dover’s operations. Instead, the company’s performance is shaped by more predictable, consistent forces—such as steady demand for its products, ongoing improvements in operational efficiency, and a stable industry environment.

As a result of this predictable, deterministic nature, the system experiences little noise. In market terms, “noise” refers to the random fluctuations or unpredictable events that can obscure underlying trends. For Dover, such noise is virtually absent. The company's steady progression can be likened to a well-oiled machine where each component functions with precision and minimal disruption. Whether it’s production schedules, product development, or market growth, Dover's performance exhibits a sense of orderliness that reduces the likelihood of unexpected deviations. Investors, customers, and employees alike can rely on this steady state, as the company’s operations remain largely unaffected by the usual volatility that can plague other industries.

In a system like Dover’s, where there are minimal disruptions, there are also no significant jumps. These “jumps” refer to sudden, sharp changes that could drastically alter the company’s market position or operational strategy, often due to external factors like a crisis, a sudden shift in consumer preferences, or a technological breakthrough. In Dover’s case, such abrupt shifts are uncommon. The company’s steady growth is typically the result of well-considered strategies and incremental progress, rather than drastic, sudden changes. As a consequence, there’s an inherent tranquility in the company’s operations, and stakeholders can rely on the fact that, barring unforeseen factors, the company’s path forward is both clear and continuous.

The absence of sharp, sudden shifts makes Dover’s system incredibly predictable. For investors and decision-makers, this means that the company’s future is largely known, based on historical trends and well-understood market dynamics. In industries marked by volatility and uncertainty, this stability is a significant advantage. Whether in terms of forecasting earnings, assessing market share, or projecting growth, Dover’s relatively stable trajectory ensures that these predictions can be made with a high degree of certainty. Unlike companies in more volatile sectors, where the future is often clouded by unpredictable forces, Dover’s consistency in performance makes it a reliable player in its industry.

What’s more, Dover’s ability to follow a clear, deterministic pattern allows for long-term planning without the constant need for reactive adjustments. Companies in unpredictable environments must often pivot in response to sudden changes in market conditions or regulatory environments. However, Dover's predictable trajectory means that the company’s leadership can focus on maintaining and optimizing a proven strategy, refining processes and enhancing efficiencies with the confidence that the broader market conditions will not disrupt their progress in the short term.

The system is not subject to abrupt disruptions; instead, it follows a measured, calculated course that ensures both growth and sustainability. The market environment in which Dover operates is conducive to this type of stability—there are few forces that can alter its operations in any significant way. Whether the company is expanding its product line, entering new markets, or investing in new technologies, these decisions are typically built upon a foundation of steady progress rather than reactive adaptations to market shifts.

In sum, Dover’s system exemplifies stability and predictability in a way that few other companies can replicate. With minimal fluctuations, little noise, and no significant jumps, it provides a clear and steady pattern that stakeholders can depend on. This consistency offers a sense of security and confidence, as the company’s direction remains unmistakably clear. By avoiding sharp disruptions and maintaining a steady path, Dover ensures that its growth is gradual and sustainable, making it a reliable player in its industry over time. Its ability to maintain this kind of calm and order in an often unpredictable world is a testament to its strong foundation and long-term strategic vision, allowing it to thrive without the upheaval that typically marks more volatile systems.
Nestlé, as a corporation, operates within a complex and often unpredictable environment where its behavior can appear erratic and subject to sudden, unforeseen shifts. Unlike systems that follow more structured, deterministic paths, Nestlé’s performance is heavily influenced by a variety of factors that introduce a high degree of uncertainty and volatility into its operations. This unpredictability often manifests in abrupt, irregular changes that do not adhere to any clear pattern or trajectory. The company's market behavior, for instance, can shift dramatically in response to external events, making it difficult to anticipate its future direction with any degree of confidence.

One of the primary characteristics of Nestlé's environment is its vulnerability to large, unpredictable fluctuations. These fluctuations can vary in size and frequency, making the system difficult to predict and challenging for investors to assess. During periods of extreme market speculation or when the economy is subjected to unexpected external shocks such as a financial crisis, a natural disaster, or political instability- Nestlé’s performance can fluctuate widely. These shifts may not follow a logical or smooth progression but instead appear as erratic movements that reflect the broader unpredictability of the external environment. Just like the stock market during times of heightened speculation, Nestlé can experience significant, sudden swings, which do not necessarily align with the company’s long-term strategy or performance indicators.

What makes this system particularly chaotic is the lack of a consistent pattern in its fluctuations. While some industries or companies may exhibit relatively stable, smooth growth trajectories, Nestlé operates in an environment where fluctuations are large and unpredictable. The company could experience periods of relative calm, only to be followed by sharp, drastic changes that push the company’s stock prices or operational focus in completely new directions. These fluctuations may vary from being relatively minor shifts to more significant upheavals that could drastically alter the company’s trajectory. This inherent variability in fluctuation size and frequency suggests that Nestlé’s market behavior is driven by a wide array of dynamic and often uncontrollable factors, which makes it challenging to predict outcomes with accuracy.

One of the most defining aspects of this erratic system is the way in which sudden, drastic shifts occur with little warning. These abrupt changes often serve as a signal that external forces have interfered with the company’s internal mechanisms, knocking it off its predictable course. For instance, changes in government regulations, shifts in consumer behavior, supply chain disruptions, or even global geopolitical events can trigger sharp reactions within Nestlé’s operations, creating an environment where stability becomes fleeting, and the company’s future becomes highly uncertain. These “jumps” in the system can be quite drastic, representing significant changes in business operations, financial performance, or market strategy.

\begin{figure}[H]
	\includegraphics[width=.98\textwidth]{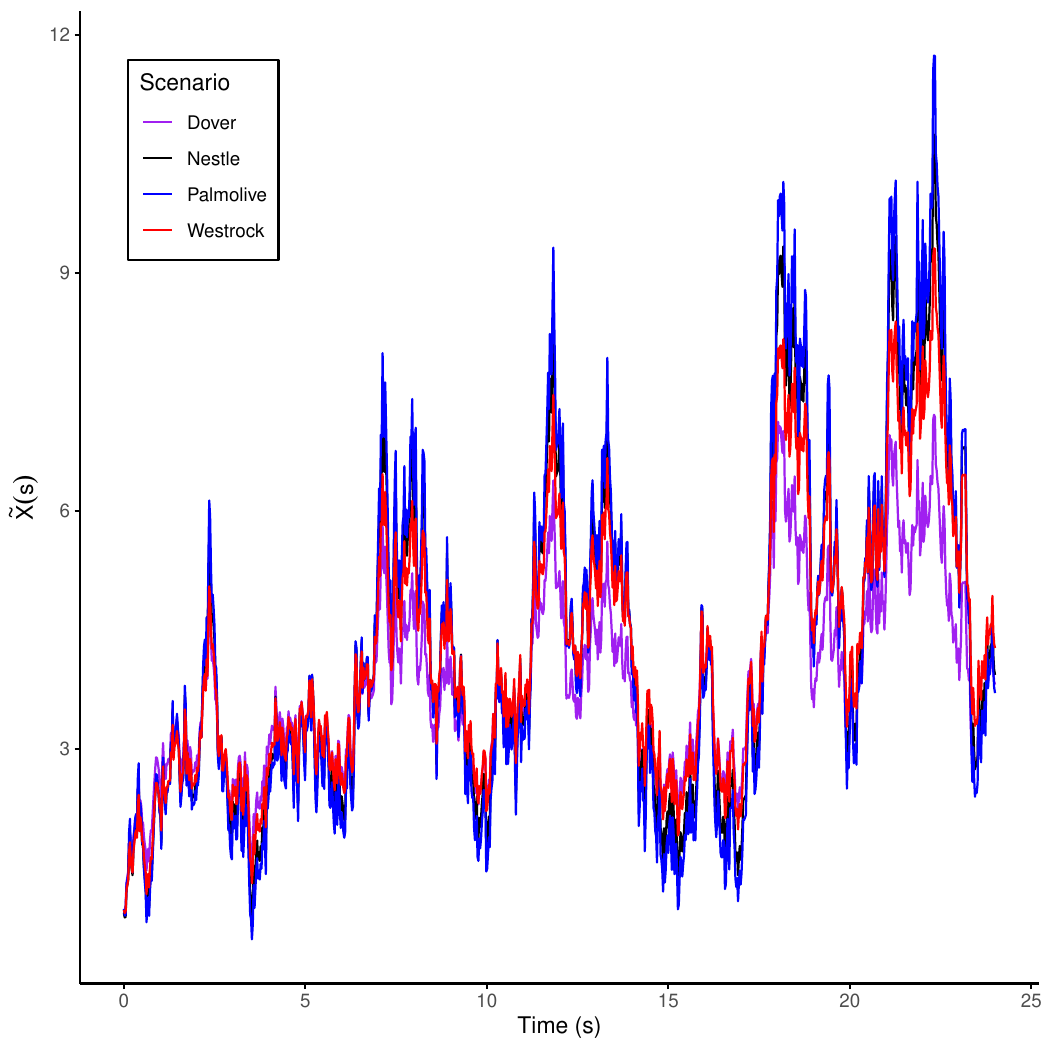}
		\caption{{Combined stochastic plot of four different firms.}}
		\label{fig:3}
\end{figure}

This volatility is often the result of Nestlé’s dependence on external conditions, which are themselves subject to rapid change. The global food industry, in particular, is vulnerable to a wide range of factors that can lead to instability, including shifts in commodity prices, changing consumer preferences, political unrest, and environmental factors such as climate change. These unpredictable forces can disrupt Nestlé’s carefully laid plans, causing the company to react suddenly and adapt to new conditions. Consequently, the company finds itself navigating through a landscape that is rarely calm, where any number of unpredictable events could dramatically alter the company's prospects at a moment’s notice.

Furthermore, the lack of a predictable trend means that Nestlé’s performance may react abruptly to any number of events, creating an environment that is hard to navigate for both investors and company leaders. For example, a new competitor may emerge with a disruptive product, or a global recession could hit, triggering a rapid shift in consumer demand. On top of this, shifts in investor sentiment can create swings in Nestlé’s stock price, compounding the unpredictability and creating a volatile financial landscape. The combination of internal and external factors contributes to a system that is highly sensitive to change, where small triggers can lead to disproportionately large consequences.

In such a chaotic system, it is difficult for any long-term predictions to hold much weight. The system’s inherent volatility means that traditional forecasting models or trend analysis may be less effective, as the company's performance can change in unexpected ways without any discernible pattern to follow. Therefore, Nestlé’s management must be highly adaptable, responding quickly to external events while attempting to mitigate risks as best as possible. Despite the challenges posed by this erratic behavior, the company must maintain flexibility, ready to pivot its operations or adjust its market strategy in response to sudden shifts.

In conclusion, Nestlé operates in a highly erratic and volatile system, where large, unpredictable fluctuations are the norm rather than the exception. The company’s performance is subject to abrupt shifts that disrupt any clear or consistent trend, creating a chaotic and unpredictable environment. The system is frequently knocked off balance by external shocks, forcing the company to react quickly to these changes. In such a landscape, stability becomes an elusive goal, and the ability to navigate uncertainty and adapt to sudden shifts is crucial for Nestl\'e’s survival and growth. The erratic nature of its operations makes it a dynamic but challenging system to manage, where the outcome of any given scenario is highly uncertain and subject to a wide range of influencing factors.
\begin{table}[H]
    \centering
    \caption{Estimated control values ($\tilde{m}$) for different companies and models based on six parameters. The last row represents the overall $\tilde{m}$ values for each company and model.}
    \resizebox{\textwidth}{!}{ 
        \begin{tabular}{lccc ccc ccc ccc}
            \toprule
            & \multicolumn{3}{c}{Nestle} & \multicolumn{3}{c}{Westrock} & \multicolumn{3}{c}{Dover} & \multicolumn{3}{c}{Palmolive} \\
            \cmidrule(lr){2-4} \cmidrule(lr){5-7} \cmidrule(lr){8-10} \cmidrule(lr){11-13}
            Parameter & Model 1 & Model 2 & Model 3 &  Model 1 & Model 2 & Model 3 &  Model 1 & Model 2 & Model 3 &  Model 1 & Model 2 & Model 3 \\
            \midrule
           $c_0$ & $0.001$ & $0.01$ & $0.8$ & $0.001$ & $0.01$ & $0.8$ & $0.001$ & $0.01$ & $0.8$ & $0.001$ & $0.01$ & $0.8$ \\
            $\xi$ & $0.001$ & $0.01$ & $0.8$ & $0.001$ & $0.01$ & $0.8$ & $0.001$ & $0.01$ & $0.8$ & $0.001$ & $0.01$ & $0.8$ \\
            $\phi$ & $0.001$ & $0.01$ & $0.5$ & $0.001$ & $0.01$ & $0.5$ & $0.001$ & $0.01$ & $0.5$ & $0.001$ & $0.01$ & $0.5$ \\
            $\tilde\theta$ & $0.001$ & $0.4$ & $0.7$ & $0.001$ & $0.4$ & $0.7$ & $0.001$ & $0.4$ & $0.7$ & $0.001$ & $0.4$ & $0.7$ \\
            $\sigma$ & $0.001$ & $0.08$ & $0.9$ & $0.001$ & $0.08$ & $0.9$ & $0.001$ & $0.08$ & $0.9$ & $0.001$ & $0.08$ & $0.9$ \\
            $\rho$ & $0.001$ & $0.01$ & $0.8$ & $0.001$ & $0.01$ & $0.8$ & $0.001$ & $0.01$ & $0.8$ & $0.001$ & $0.01$ & $0.8$ \\
            \midrule
             $\tilde{m}$ & $0.883$ & $0.236$ & $0.199$ & $0.999$ & $0.940$ & $0.323$ & $0.988$ & $0.298$ & $0.278$ & $0.999$ & $0.342$ & $0.052$ \\
            \bottomrule
        \end{tabular}
    }
    \label{tab:control_values}
\end{table}
Table \ref{tab:control_values} provides a comprehensive overview of the estimated values for the complementarity strategy of a firm due to random monetary shock ($\tilde{m}$) for four major companies—Nestle, Westrock, Dover, and Palmolive—across three different models. Each model is characterized by six key parameters: $c_0$, $\xi$, $\phi$, $\tilde{\theta}$, $\sigma$, and $\rho$. The structure of the table is designed to allow for a direct comparison between the companies and their respective models, facilitating the evaluation of how each parameter varies across different configurations. Each company has three columns corresponding to the three models, enabling a side-by-side examination of the estimated values under varying conditions. 

The values in table \ref{tab:control_values} indicate that certain parameters remain consistent across companies and models, such as $c_0$, $\xi$, and $\rho$, all of which take the values 0.001, 0.01, and 0.8, respectively, for Model 3 across all companies. However, parameters such as $\tilde{\theta}$ and $\sigma$ show more variation, particularly in Model 2 and Model 3, suggesting a stronger influence on the estimated complementarity strategy. The last row of the table provides the overall values of the $\tilde{m}$ for each company, summarizing the combined effect of all parameters within each model. 

The overall values differ significantly across models and companies, highlighting potential discrepancies in model performance or differences in company-specific characteristics. For instance, while Nestle’s overall complementarity strategy values range from 0.199 to 0.883 across models, Westrock exhibits a higher value for Model 1 (0.999) but a substantially lower value for Model 3 (0.323), indicating possible variations in parameter sensitivity or effectiveness across different modeling approaches. Similarly, Palmolive demonstrates the most substantial variation, with Model 3 producing an exceptionally low complementarity strategy value of 0.052 compared to the higher estimates in the other models. This suggests that Palmolive’s response to different model conditions might be more sensitive or significantly different from the other companies. 

\section{Discussions:}
In this paper, we establish a robust framework for determining the optimal complementarity strategy of a firm operating under a CIR stochastic process with a non-zero drift, providing a novel extension to existing models of firm decision-making under uncertainty. The core objective of the firm is to maximize its payoff function while accounting for percent deviations from symmetric equilibrium in both its own pricing strategy and the aggregate CPI, which are treated as the state variables in our formulation. The firm’s response to random monetary shocks, which dictates its complementarity strategy, serves as the principal control variable in our optimization process. 

Unlike classical models that often assume a driftless stochastic environment, our approach integrates a non-zero drift component, thereby allowing for a more realistic representation of economic dynamics where inflationary trends or persistent shocks influence firm behavior over time. To track this problem, we extend the widely used an extended version of Calvo model by incorporating a mean field approach, which enables us to derive a closed-form solution for the firm’s optimal complementarity strategy. This analytical framework not only enhances our understanding of firm behavior in stochastic environments but also provides valuable insights into the role of market volatility in shaping strategic decision-making. 

Theoretical findings in this paper suggest that as volatility in the economic environment increases, firms tend to reduce their complementarity strategies, indicating a heightened preference for risk aversion and a more cautious approach to pricing adjustments in response to external shocks. To test the validity of our theoretical predictions, we apply the model to four major consumer goods firms: Nestle, Westrock, Dover, and Palmolive. The empirical analysis, based on table \ref{tab:control_values}, reveals that the reduction in complementarity strategy due to increasing market volatility is far more pronounced than what is predicted by the theoretical model. This deviation underscores the complex nature of real-world market uncertainty, suggesting that factors beyond those captured in the CIR stochastic framework—such as behavioral biases, supply chain frictions, or competitive market pressures—may further exacerbate the decline in complementarity strategies. These findings have significant implications for firm strategy and economic policy, as they indicate that firms facing high levels of market volatility may adopt defensive pricing behaviors that further influence aggregate economic stability. Furthermore, our study highlights the need for refining theoretical models to better account for real-world deviations and underscores the importance of empirical validation in economic modeling.

The framework we examine plays a crucial role in advancing the study of equilibrium dynamics, serving as a foundation for further exploration in related literature. Notably, \cite{alvarez2021empirical} leveraged the equilibrium characterization first introduced in \cite{alvarez2023price} to analyze the impulse response of an economy to shocks containing a transitory component, marking a departure from the conventional focus on one-time, permanent shocks that dominate much of the existing research. By incorporating transitory elements, their analysis offered a more nuanced perspective on how economic agents dynamically adjust their decisions in response to short-lived disturbances, providing insights that are particularly relevant for understanding real-world price-setting behavior and policy implications. Building on this foundational work, our study extends the framework further by exploring the role of higher-order perturbations, a methodological innovation that allows for a deeper examination of nonlinearities and complex interactions within the equilibrium structure. This extension is particularly significant for the ongoing debate between time-dependent and state-dependent models, as it provides a rigorous way to differentiate their responses to shocks of varying magnitudes. In traditional settings, small shocks tend to be absorbed more smoothly, with responses that may be adequately captured by linear approximations, whereas larger shocks can elicit nonlinear dynamics that fundamentally alter equilibrium behavior. Our approach seeks to systematically quantify these differences, shedding light on the mechanisms that drive divergent responses in time- and state-dependent models when subjected to varying shock sizes. By doing so, we aim to provide a more comprehensive analytical framework that enhances the robustness of comparative studies in macroeconomic modeling, particularly in areas where policymakers must navigate environments characterized by both frequent small disturbances and occasional large shocks. The ability to distinguish between these different shock responses is essential for refining theoretical predictions, improving empirical validation, and ultimately designing more effective economic policies that account for the full range of equilibrium dynamics observed in real-world markets.

\section*{Appendix:}

\subsection*{Proof of Lemma \ref{l0}.}
(i). As $\mathcal W(t)-\mathcal W(s)$ is independent of $\mathcal F_s^\mathcal W,\ \forall s\leq t$,
\[
\E\left\{\mathcal W(t)-\mathcal W(s)\bigg|\mathcal F_s^\mathcal W\right\}=\E\{\mathcal W(t)-\mathcal W(s)\}=0.
\]
Hence, $\E\{\mathcal W(t)\big|\mathcal F_s^\mathcal W\}=\mathcal W(s)$ almost surely.\\
(ii). For the second case we have,
\begin{align*}
  \E\left\{\mathcal W^2(t)-\mathcal W^2(s)\bigg|\mathcal F_s^\mathcal W\right\} &= \E\left\{[\mathcal W(t)-\mathcal W(s)]^2-2\mathcal W(s)[\mathcal W(t)-\mathcal W(s)]\bigg|\mathcal F_s^\mathcal W\right\}\\
  &=\E\left\{[\mathcal W(t)-\mathcal W(s)]^2\bigg|\mathcal F_s^\mathcal W\right\}+2\mathcal W(s)\E\left\{[\mathcal W(t)-\mathcal W(s)]\bigg|\mathcal F_s^\mathcal W\right\}.
\end{align*}
The second part of the above equation vanishes by (i). Furthermore, independence implies
\[
\E\left\{[\mathcal W(t)-\mathcal W(s)]^2\bigg|\mathcal F_s^\mathcal W\right\}=\E\left\{[\mathcal W(t)-\mathcal W(s)]^2\right\}=t-s.
\]
Therefore, $\E\left\{\mathcal W^2(t)-t\bigg|\mathcal F_s^\mathcal W\right\}=\mathcal W^2(s)-s$.\\
(iii). Consider $\Phi$ is standard normal variable with probability density function (pdf) $(2\pi)^{-1/2}\exp(-x^2/2)$. Hence,
\begin{equation*}
  \E\left\{\exp(\gamma \Phi)\right\} =\frac{1}{\sqrt{2\pi}}\int_{-\infty}^\infty \exp(\gamma x)\exp\left\{-\frac{1}{2}x^2\right\}dx=\exp\left\{-\frac{1}{2}\gamma^2\right\},
\end{equation*}
for all $\gamma\in\mathbb R$. Moreover, independence and stationarity implies
\begin{align*}
\E\left\{\exp\left[\sigma\mathcal W(t)-\frac{1}{2}\sigma^2 t\right]\bigg|\mathcal F_s^\mathcal W\right\}&=\exp\left[\sigma\mathcal W(s)-\frac{1}{2}\sigma^2 t\right]\E\left\{\exp\{\sigma[\mathcal W(t)-\mathcal W(s)]\}\bigg|\mathcal F_s^\mathcal W\right\}\\
&=\exp\left[\sigma\mathcal W(s)-\frac{1}{2}\sigma^2 t\right]\E\left\{\exp\{\sigma[\mathcal W(t)-\mathcal W(s)]\}\right\}\\
&=\exp\left[\sigma\mathcal W(s)-\frac{1}{2}\sigma^2 t\right]\E\{\exp\{\sigma\mathcal W(t-s)\}\}.
\end{align*}
Clearly, $\sigma\mathcal W(t-s)\overset{iid}{\sim}\mathcal N\left(0,\sigma^2(t-s)\right)$. It comes from the notion that $\Phi\overset{iid}{\sim}\mathcal N(0,1)$, the random variable $sigma\mathcal W(t-s)$  follows the same law as $\sigma\Phi\sqrt{(t-s)}$, and
\[
\E\{\exp\{\sigma\mathcal W(t-s)\}\}=\E\{\exp\{\sigma\Phi\sqrt{(t-s)}\}\}=\exp\left\{\frac{1}{2}\sigma^2(t-s)\right\}.
\]
Finally,
\[
\E\left\{\exp\left[\sigma\mathcal W(t)-\frac{1}{2}\sigma^2 t\right]\bigg|\mathcal F_s^\mathcal W\right\}=\exp\left\{\sigma\mathcal W(s)-\frac{1}{2}\sigma^2s\right\}
\]
almost surely for every $s<t$. This completes the proof $\square$.

\subsection*{Proof of Lemma \ref{l1}.}

Define $A(\omega):=\inf\{t:\tilde X(t,\omega)\leq -\eta\}$, and $A(t)=A\vee t$. Theorem 6.2.10 of \cite{elliott2005mathematics} implies 
\[
\E\left\{\tilde X[A(t)]\right\}\geq\E\{\tilde X(t)\}.
\]
Hence, 
\begin{equation*}
  \E\{\tilde X(t)\}\leq -\eta P\left[\inf_{s\leq t}\tilde X(s)\leq-\eta\right]+\int_{\left\{\inf_{s\leq t}\tilde X(s)>-\eta\right\}}\tilde X(t)dP,
\end{equation*}
and
\begin{align*}
 \eta P\left[\inf_{s\leq t}\tilde X(s)\leq-\eta\right]&\leq \E\{-\tilde X(t)\} + \int_{\left\{\inf_{s\leq t}\tilde X(s)>-\eta\right\}}\tilde X(t)dP\\
 &=-\int_{\left\{\inf_{s\leq t}\tilde X(s)>-\eta\right\}}\tilde X(t)dP\leq \E\bigg\{\max\left(-\tilde X(t),0\right)\bigg\}.
\end{align*}
Letting $t\ra\infty$ we get our desired result. This completes the proof $\square$.

\subsection*{Proof of Proposition \ref{p2}.}
Define a survival function $F(\hat x^*):=P(\hat X>\hat x^*)$. Now,
\begin{align*}
    \E\left\{\hat X^d\right\}&=-\int_0^\infty(\hat x^*)^d dF(\hat x^*)\\
    &=\int_0^\infty F(\hat x^*)d\left[(\hat x^*)^d\right]-\lim_{k\ra\infty}\left\{(\hat x^*)^d F(\hat x^*)\right\}_0^k\leq\int_0^\infty F(\hat x^*)d\left[(\hat x^*)^d\right]\\
    &\leq \int_0^\infty (\hat x^*)^{-1} \left\{\int_{(\hat X\geq\eta)}\tilde XdP\right\}d\left[(\hat x^*)^d\right]\\
    &=\E\left\{\tilde X\int_0^{\hat X}(\hat x^*)^{-1}d\left[(\hat x^*)^d\right]\right\},\ \ \text{by Fubini's theorem,}\\
    &=\left[\frac{d}{d-1}\right]\E\left\{\tilde X\hat X^{d-1}\right\}\leq\tilde d||\tilde X||_d||\hat X^{d-1}||_{\tilde d},\ \ \text{by H\"older inequality.}
\end{align*}
Therefore, above condition yields
\[
\E\left\{\hat X^d\right\}\leq\tilde d ||\tilde X||_d\left[\E\left\{\hat X^{d\tilde d-\tilde d}\right\}\right]^{1/\tilde d}.
\]
If $\hat X$ is finite the result is trivial because of the fact that $d\tilde d-\tilde d=d$. Contrarily, if $\hat X$ is  not finite, then consider a separate markup such that 
\[
\hat X_k=\hat X\wedge k,\ \ \forall k\in\mathbb N.
\]
This implies $\hat X_k\in L^d$, and $\hat X_k$ satisfies the hypothesis of Proposition \ref{p2}. Thus, $||\hat X||_{\tilde d}\leq\tilde d ||\tilde X||_d$. Allowing $k\ra \infty$ gives the desirable result. This completes the proof $\square$.

\subsection*{Proof of Proposition \ref{p3}.}
If $\eta=\infty$, then the first part of the theorem is trivial because of the fact that $\sup_t||\tilde X(t)||_\infty=K<\infty$, for any finite number $K$. Hence, by above condition $\tilde X(t)\leq K$ almost surely for all $t\in[0,\infty)$. Right-continuity is needed to guarantee that there exists a single set of measure zero outside which this inequality holds for all $t$. Furthermore, for all $\eta\in(1,\infty)$, if $\bar X\in L^d$, then
\[
\sup_t||\tilde X||_d\leq||\bar X||_d<\infty.
\]
It is well understood that $\tilde X(t)$ is uniformly integrable. Thus Corollaries 3.18 and 3.19 of \cite{cohen2015stochastic} yield
\[
\tilde X_\infty(\omega)=\lim_{t\ra\infty}\tilde X(t,\omega),\ \text{a.s.}
\]
Fatou's lemma implies
\[
\E\left\{\lim_t[\tilde X(t)]^d\right\}\leq\lim_t\inf\E\left\{[\tilde X(t)]^d\right\}\leq\sup_t \E\left\{[\tilde X(t)]^d\right\}<\infty.
\]
Hence, $\tilde X_\infty\in L^d$ and $||\tilde X_\infty||_d\leq\sup_t||\tilde X(t)||_d$.

Define $\bar X(t,\omega)=\sup_{s\leq t}\tilde X(s,\omega)$. Thus, $\{-\tilde X(t)\}_{t\in[0,\infty)}$ is a supermartingale. Lemma \ref{l1} implies, for any positive monetary shock
\begin{equation*}
    \eta P\left[\inf_{s\leq t}\left(-\tilde X(s)\right)\leq-\eta\right]=\eta P\left[\bar X(t)\geq\eta\right]\leq \int_{\left[\bar X(t)\geq\eta\right]}\tilde X(t)dP\leq \int_{\left[\bar X\geq\eta\right]}\tilde X(t)dP.
\end{equation*}
Allowing $t\ra\infty$ yields
\[
\eta P\left[\bar X\geq\eta\right]\leq \int_{\left[\bar X\geq\eta\right]}\tilde X_\infty dP.
\]
Finally, Proposition \ref{p2}, with $\hat X=\bar X$, and $\tilde X=\tilde X_\infty$ imply
\[
||\bar X||_d\leq\tilde d ||\tilde X_\infty||_d.
\]
This completes the proof $\square$.

\subsection*{Proof of Proposition \ref{p0}.}
Consider the following Cox-Ingersoll-Ross SDE with a jump
\begin{equation}\label{2}
d\tilde X^{(k)}(s) = \left\{ \tilde\theta \left[u - \tilde X^{(k)}(s) \right] + m^2(s) \right\} ds + \sigma \sqrt{\tilde X^{(k)}(s)} \mathcal W(s)+ \sum_{k:T_k\leq \tilde s}J_{(k)},
\end{equation}
where $\tilde\theta$ is mean reversion rate constant and u is the long term mean of the process. Let
\begin{equation}\label{3}
\tilde X^{(k+1)}(s)=\tilde X(0) + \int_0^t\left\{\tilde\theta \left[u - \tilde X^{(k)} (s)\right] + m^2(s) \right\} ds +\int_0^t \sigma \sqrt{\tilde X^{(k)}(s)} d\mathcal W(s)+ \sum_{k:T_k\leq \tilde s}J_{(k)},
\end{equation}
and
\begin{equation*}
    \E\left\{\left| \tilde X^{(k+1)}(s)-\tilde X^{(k)}(s)\right|^2\right\}\leq((1+z)3S^2\int_0^tE\left\{\left| \tilde X^{(k)}(s)-\tilde X^{(k-1)}(s)\right|^2ds\right\}.
\end{equation*}

Then for $k\geq 1$, $s\leq t$ we have

\begin{align}
  \E\left\{\left|\tilde  X^{(1)}(s)-\tilde X^{(0)}(s)\right|^2\right\}&\leq\{2G^2t^2(1+E|\tilde X(0)|^2)+2G^2t(1+E[|\tilde X(0)|^2]\})\notag\\
  &\leq{N_1t},  
\end{align}

where the constant N(1) only depends on G,S,Z and $E[|\tilde X(0)|^2]$. 

Induction implies

\begin{equation}
 \E\left\{\left|\tilde X^{(k+1)}(s)-\tilde X^{(k)}(s)\right|^2\right\}\leq\frac{N_2^{(k+1)}s^{(k+1)}}{(k+1)!},\ \ \text{$k\geq 0$, and $s\in[0,t]$},   
\end{equation}
for some suitable constant $N{_2}$ depending only on G,S,Z and $E[|\tilde X(0)|^2]$
G is the Lipschitz constant that prevents the process from growing excessively fast and ensures that the process behaves in a controlled manner.
Z is bounded in relation to the stability of the process, ensuring that the system stays within the specified limits throughout its evolution.
S is bounded by the diffusion coefficient  .
Now,

\begin{multline}
\sup_{0\leq s\leq t}\left|\tilde X^{(k+1)}(s)-\tilde X^{(K)}(s)\right|
\leq\int_0^t\left|[\tilde\theta(u-\tilde X^{k}(s))+m^2(s)]-[\tilde\theta(u-\tilde X^{k-1}(s)+m^2(s)]ds\right|\\
+\sup_{0\leq s\leq t}\left|\int_0^t\left(\sigma\sqrt{\tilde X^{(k)}(s)}-\sigma\sqrt{\tilde X^{k-1}(s)}\right)d\mathcal W(s)\right|    \end{multline}

By the Martingale inequality we obtain 
\begin{multline}   
\mathbb{P}\left[\sup_{0 \leq s \leq t} \left|\tilde  X^{(k+1)}(s) -\tilde X^{(k)}(s) \right| > 2^{-k}\right] \\
\leq \mathbb{P}\left[\left( \int_0^t \left| \left[ \tilde\theta(u -\tilde X^{(k)}(s)) + m^2(s) \right] - \left[ \tilde\theta(u - \tilde X^{(k-1)}(s)) + m^2(s) \right] \right| ds \right)^2 > 2^{-2k - 2}\right] \\
+ \mathbb{P}\left[\sup_{0 \leq s \leq t} \left| \int_0^s \left( \sigma \sqrt{\tilde X^{(k)}(s)} - \sigma \sqrt{\tilde X^{(k-1)}(s)} \right) d\mathcal{W}(s) \right| > 2^{-k-1} \right] \\
\leq 2^{2k+2} Z \int_0^t \mathbb{E}\left[ \left| \left[ \tilde\theta(u - \tilde X^{(k)}(s)) + m^2(s) \right] - \left[ \tilde\theta(u - \tilde X^{(k-1)}(s)) + m^2(s) \right] \right|^2 \right] ds \\
+ 2^{-2k-2} S^2(1+Z) \int_0^t \frac{N_2^{k+1} s^k}{k!} dt \\
\leq \frac{(4 N_2)^{k+1} s^{k+1}}{(k+1)!}, \quad \text{if} \quad N{_2} \geq S^2(1+Z)
\end{multline}
Equation \eqref{2} implies

Therefore ,by the Borel-Cantelli lemma
\begin{equation}
 p\left( \sup_{0\leq s\leq t}\left|\tilde X^{(k+1)}(s) - \tilde X^{(k)}(s) \right| > 2^{-k} \text{ for infinitely many } k \right) = 0,
\end{equation}
Thus, for a.s. $\omega$ there exists $k_0 = k_0(\omega)$ such that

\begin{equation}\label{11}
 \sup_{0 \leq s \leq t} \left| \tilde{X}^{(k+1)}(s) - \tilde{X}^{(k)}(s) \right| > 2^{-k},
\end{equation} for $k\geq k_0$
Hence, the sequence

\begin{equation}\label{12}
\tilde{X}^{(n)}(s)(\omega) = \tilde{X}^{(0)}(s)(\omega) + \sum_{k=0}^{n-1} \left( \tilde{X}^{(k+1)}(s)(\omega) - \tilde{X}^{(k)}(s)(\omega) \right)  
\text{is uniformly convergent in } [0, T], \text{ for a.s. } \omega.
\end{equation}
 
 $\omega$ is an element of the sample space which represents all possible outcomes in the probability space.

Let the limit be denoted by \( Y_t = Y_t(\omega) \). Then \( Y_t \) is \( s \) continuous for almost all \( \omega \) since \( Y^{(n)} \) is \( s \)-continuous for all \( n \). Moreover, \( Y(\cdot) \) is \( \mathcal{F}^Z(s) \)-measurable for all \( s \), since \( Y^{(n)}(\cdot) \) has this property for all \( n \).
Furthermore , observe that for \( m > n \geq 0 \), we have by (7)

\begin{multline}\label{13}
\mathbb{E}\left[ |\tilde{X}^{(m)}(s) - \tilde{X}^{(n)}(s)|^2 \right]^{1/2} = \|\tilde{X}^{(m)}(s) - \tilde{X}^{(n)}(s)\|_{L^2(P)} \\
= \left\| \sum_{k=n}^{m-1} \left( \tilde{X}^{(k+1)}(s) - \tilde{X}^{(k)}(s) \right) \right\|_{L^2(P)} 
\leq \sum_{k=n}^{m-1} \left\|\tilde{X}^{(k+1)}(s) - \tilde{X}^{(k)}(s) \right\|_{L^2(P)} \\
\leq \| \sum_{k=n}^{\infty} \left[ \frac{(N_2t)^{(k+1)}}{(k+1)!}\right]^{1/2} \to 0 \quad \text{as} \quad n \to \infty.
\end{multline}
The sequence \( \{\tilde X^{(n)}(s)\} \) converges to a limit, denoted \(\tilde X(s) \), in \( L^2(P) \). A subsequence of \(\tilde X^{(n)}(\omega) \) will then converge in \( \omega \) pointwise to \(\tilde X(s)(\omega) \) for almost every \( \omega \), which implies that \( \tilde X(s) = Y(s) \) almost surely.

Now we show that $Y(s)$ satisfies (2).
For all n we have ,
\begin{equation}\label{14}
 \tilde {X}^{n+1}(s) = Y(0) + \int_0^t \tilde{\theta} \left[ (u - \tilde{X}^{n}(s)) + m^2(s) \right] ds + \int_0^t \sigma \sqrt{\tilde{X}^{n}(s)} dW_s
\end{equation}

Now, $ \tilde {X}^{n+1}(s)\to Y(s) \quad \text{as} \quad n \to \infty, \quad \text{uniformly in} \quad t \in [0,T] \quad \text{for a.s.} \ \omega.$. By (13) and the Fatou lemma, we have
\begin{equation}\label{15}
    \mathbb{E} \left[ \int_0^t |Y(s) - \tilde X^{n}(s)|^2 \, dt \right] \leq \limsup_{m \to \infty} \mathbb{E} \left[ \int_0^t |\tilde X^{m}(s) -\tilde X^{n}(s)|^2 \, dt \right] \to 0
\end{equation}
$\text{as } n \to \infty$.

$\text{It follows by the It\^{o} isometry that} \quad \int_0^t \sigma \sqrt{\tilde{X}^{n}(s)} \, dW_s \to \int_0^t \sigma \sqrt{\tilde{X}^{n}(s)} \, dW_s.$  Finally, by the H\"older inequality that $ \int_0^t \tilde{\theta} \left[ (u - \tilde{X}^{n}(s)) + m^2(s) \right] \, ds \to \int_0^t \tilde{\theta} \left[ (u - Y^{n}(s)) + m^2(s) \right] \, ds$ in $L^2(p)$ .
Therefore, taking the limit of (14) as \( n \to \infty \), we obtain \begin{equation}\label{2}
d\tilde X^{(k)}(s) = \left\{ \tilde\theta \left[u - \tilde X^{(k)}(s) \right] + m^2(s) \right\} ds + \sigma \sqrt{\tilde X^{(k)}(s)} \mathcal W(s)+ \sum_{k:T_k\leq \tilde s}J_{(k)}.
\end{equation}
This completes the proof. $\square$

\subsection*{Proof of Proposition \ref{p4}.}

We consider the Cox-Ingersoll-Ross (CIR) stochastic differential equation:  
\[
d\tilde{X}(s) = \bigg\{\tilde{\theta} \left[u - \tilde{X}(s)\right] + m^2(s)\bigg\} ds + \sigma \sqrt{\tilde{X}(s)} d\mathcal{W}(s) + \sum_{k:T_k \leq s} J_{(k)}.
\]  
Here, \( \tilde{\theta} > 0 \) is the mean-reversion rate, \( u > 0 \) is the long-term mean, \( m(s) \) accounts for strategic complementarity, \( \sigma > 0 \) is the diffusion coefficient, and \( J_{(k)} \) represents jumps at times \( T_k \). The goal is to solve this SDE using an integrating factor that depends on both the state variable \( \tilde{X}(s) \) and time \( s \).  

The given equation can be expressed equivalently as:  
\[
d\tilde{X}(s) = \tilde{\theta} u \, ds - \tilde{\theta} \tilde{X}(s) \, ds + m^2(s) \, ds + \sigma \sqrt{\tilde{X}(s)} \, d\mathcal{W}(s) + \sum_{k:T_k \leq s} J_{(k)}.
\]  
Rearranging, we write:  
\[
d\tilde{X}(s) + \tilde{\theta} \tilde{X}(s) \, ds = \left(\tilde{\theta} u + m^2(s)\right) ds + \sigma \sqrt{\tilde{X}(s)} \, d\mathcal{W}(s) + \sum_{k:T_k \leq s} J_{(k)}.
\]

To simplify the solution, we introduce the integrating factor:  
\[
I(s, \tilde{X}(s)) = e^{\tilde{\theta} s} \cdot \frac{1}{\tilde{X}(s)}.
\]  
Multiplying through by \( I(s, \tilde{X}(s)) \), the equation becomes:  
\[
e^{\tilde{\theta} s} \frac{1}{\tilde{X}(s)} d\tilde{X}(s) + \tilde{\theta} e^{\tilde{\theta} s} \, ds = e^{\tilde{\theta} s} \frac{\tilde{\theta} u}{\tilde{X}(s)} \, ds + e^{\tilde{\theta} s} \frac{m^2(s)}{\tilde{X}(s)} \, ds + e^{\tilde{\theta} s} \frac{\sigma}{\sqrt{\tilde{X}(s)}} \, d\mathcal{W}(s) + e^{\tilde{\theta} s} \frac{1}{\tilde{X}(s)} \sum_{k:T_k \leq s} J_{(k)}.
\]  

Using the fact that \( \frac{1}{\tilde{X}(s)} d\tilde{X}(s) = d\ln(\tilde{X}(s)) \), the left-hand side simplifies to:  
\[
e^{\tilde{\theta} s} d\ln(\tilde{X}(s)) + \tilde{\theta} e^{\tilde{\theta} s} \, ds = \frac{d}{ds} \left(e^{\tilde{\theta} s} \ln(\tilde{X}(s))\right).
\]  
Thus, the equation reduces to:  
\[
\frac{d}{ds} \left(e^{\tilde{\theta} s} \ln(\tilde{X}(s))\right) = e^{\tilde{\theta} s} \bigg\{\frac{\tilde{\theta} u}{\tilde{X}(s)} + \frac{m^2(s)}{\tilde{X}(s)}\bigg\} + e^{\tilde{\theta} s} \frac{\sigma}{\sqrt{\tilde{X}(s)}} \, d\mathcal{W}(s) + e^{\tilde{\theta} s} \frac{1}{\tilde{X}(s)} \sum_{k:T_k \leq s} J_{(k)}.
\]

Integrating both sides with respect to \( s \) over the interval \([s_0, s]\), we obtain:  
\begin{align*}
    e^{\tilde{\theta} s} \ln(\tilde{X}(s)) - e^{\tilde{\theta} s_0} \ln(\tilde{X}(s_0)) &= \int_{s_0}^s e^{\tilde{\theta} \xi} \bigg\{\frac{\tilde{\theta} u}{\tilde{X}(\xi)} + \frac{m^2(\xi)}{\tilde{X}(\xi)}\bigg\} d\xi\\
    &\hspace{1cm}+ \int_{s_0}^s e^{\tilde{\theta} \xi} \frac{\sigma}{\sqrt{\tilde{X}(\xi)}} \, d\mathcal{W}(\xi) + \int_{s_0}^s e^{\tilde{\theta} \xi} \frac{1}{\tilde{X}(\xi)} \sum_{k:T_k \leq \xi} J_{(k)} d\xi.
\end{align*}

Rearranging terms and taking the exponential of both sides yields the explicit solution for \( \tilde{X}(s) \):  
\begin{align*}
      \tilde{X}^*(s) &= \exp\bigg\{\exp\left\{-\tilde{\theta} s\right\} \bigg[\exp\left\{\tilde{\theta} s_0\right\} \ln(\tilde{X}(s_0)) + \int_{s_0}^s \exp\left\{\tilde{\theta} \xi\right\} \bigg\{\frac{\tilde{\theta} u}{\tilde{X}(\xi)} + \frac{m^2(\xi)}{\tilde{X}(\xi)}\bigg\} d\xi\\
      &\hspace{1cm}+ \int_{s_0}^s \exp\left\{\tilde{\theta} \xi\right\} \frac{\sigma}{\sqrt{\tilde{X}(\xi)}} \, d\mathcal{W}(\xi) + \int_{s_0}^s \exp\left\{\tilde{\theta} \xi\right\} \frac{1}{\tilde{X}(\xi)} \sum_{k:T_k \leq \xi} J_{(k)} d\xi\bigg]\bigg\},
\end{align*}

This provides the general form of the solution, where the specific evaluation depends on the structure of \( m^2(s) \), \( J_{(k)} \), and the initial condition \( \tilde{X}(s_0) \). This completes the proof. $\square$

\subsection*{Proof of Lemma \ref{j0}.}
The jump diffusion to \( \tilde{X}(t) \) is $
\sum_{k:T_k \leq t} J_{(k)},$
where \( T_k \) denotes the times of jumps and \( J_{(k)} \) are the corresponding jump sizes. Since the jumps are driven by a Poisson process with intensity \( \nu \), the number of jumps \( N(t) \) over the interval \( [0,t] \) follows a Poisson distribution with mean \( \mathbb{E}[N(t)] = \nu t \).

Taking the expectation yields
\[
\mathbb{E} \left[ \sum_{k:T_k \leq t} J_{(k)} \right] = \mathbb{E} \left[ N(t) \cdot \mathbb{E}[J_{(k)}] \right].
\]
Since \( J_{(k)} \) are assumed to be independent and identically distributed  with mean \( \gamma \) and independent of \( N(t) \), we have,
\[
\mathbb{E} \left[ \sum_{k:T_k \leq t} J_{(k)} \right] = \mathbb{E}[N(t)] \cdot \gamma.
\]
Substituting \( \mathbb{E}[N(t)] = \nu t \), yields
\[
\mathbb{E} \left[ \sum_{k:T_k \leq t} J_{(k)} \right] = \nu t \gamma.
\]
This completes the proof. $\square$

\subsection*{Proof of Lemma \ref{j1}.}

The variance of this sum can be computed using the law of total variance,
\[
\text{Var} \left( \sum_{k:T_k \leq t} J_{(k)} \right) = \mathbb{E}[N(t)] \cdot \text{Var}(J_{(k)}) + \text{Var}[N(t)] \cdot (\mathbb{E}[J_{(k)}])^2.
\]
For the Poisson process, \( N(t)\overset{iid}{\sim} \text{Poisson}(\nu t) \), so \( \mathbb{E}[N(t)] = \nu t \) and \( \text{Var}[N(t)] = \nu t \). Substituting these into the equation
\[
\text{Var} \left( \sum_{k:T_k \leq t} J_{(k)} \right) = (\nu t) \cdot \sigma_J^2 + (\nu t)  \gamma^2.
\]
Rearranging the terms yields
\[
\text{Var} \left( \sum_{k:T_k \leq t} J_{(k)} \right) = \nu t (\gamma^2 + \sigma_J^2).
\]
This complete the proof. $\square$

\subsection*{Proof of Proposition \ref{p5}.}
We begin by considering the state variable \( \tilde{X}(s) \), which follows a jump-diffusion model governed by the following SDE
\[
d\tilde{X}(s) = \left[ \tilde{\theta} (u - \tilde{X}(s)) + m^2(s) \right] ds + \sigma \sqrt{\tilde{X}(s)} dW(s) + \sum_{k: T_k \leq s} J_{(k)},
\]
where \( \tilde{\theta} \) is the rate at which the process reverts to the mean, \( u \), and \( m(s) \) represents the effect of a monetary shock. The term \( \sum_{k: T_k \leq s} J_{(k)} \) describes the jumps at specific times \( T_k \), and \( J_{(k)} \) represents the size of the jump at time \( T_k \). The jump sizes \( J_{(k)} \) are modeled using a Lévy process with independent increments, and the Poisson process governing the jumps has intensity \( \nu \). The CIR process without jumps exhibits a stationary behavior, with a long-term mean \( u \) and variance \( \frac{\sigma^2}{2 \tilde{\theta}} \). As time progresses, the state variable \( \tilde{X}(t) \) will tend toward this stationary distribution, given that the process remains stable.

Next, we analyze the jump component, \( \sum_{k: T_k \leq t} J_{(k)} \), which is governed by a Poisson process with intensity \( \nu \). The expected total number of jumps at time \( t \) grows linearly with time, specifically, \( \mathbb{E}[N(t)] = \nu t \). Since the jumps are independent, with each jump \( J_{(k)} \) having a mean of \( \lambda \) and a variance of \( \sigma_J^2 \), the expected total jump contribution at time \( t \) is
\[
\mathbb{E}\left[\sum_{k: T_k \leq t} J_{(k)}\right] = \nu t \gamma.
\]

Similarly, the variance of the total jump diffusion is
\[
\text{Var}\left[\sum_{k: T_k \leq t} J_{(k)}\right] = \nu t  (\gamma^2 + \sigma_J^2).
\]

As time progresses, the number of jumps increases proportionally with \( t \), and consequently, the total jump contribution also increases with \( t \), with both the expected value and the variance scaling linearly with \( t \). To understand the joint behavior of \( \tilde{X}(t) \) and \( \sum_{k: T_k \leq t} J_{(k)} \), we note that the diffusion component of \( \tilde{X}(t) \) tends to a stationary distribution with a mean of \( u \) and a variance of \( \frac{\sigma^2}{2 \tilde{\theta}} \). Since the jumps are independent of the continuous part of the process, the covariance between \( \tilde{X}(t) \) and the jump process is zero. Therefore, the joint distribution of \( \tilde{X}(t) \) and \( \sum_{k: T_k \leq t} J_{(k)} \) as \( t \to \infty \) will approach a bivariate normal distribution with the following properties:

\[
\mathbb{E}[\tilde{X}(t)] = u, \quad \mathbb{E}\left[\sum_{k: T_k \leq t} J_{(k)}\right] = \nu t \gamma,
\]
\[
\text{Var}(\tilde{X}(t)) = \frac{\sigma^2}{2 \tilde{\theta}}, \quad \text{Var}\left(\sum_{k: T_k \leq t} J_{(k)}\right) = \nu t (\gamma^2 + \sigma_J^2),
\]
and the covariance between \( \tilde{X}(t) \) and \( \sum_{k: T_k \leq t} J_{(k)} \) is zero.

Since time approaches infinity, the joint distribution of \( \tilde{X}(t) \) and \( \sum_{k: T_k \leq t} J_{(k)} \) converges to a bivariate Gaussian distribution. This implies that, in the long term, the system's state variable and the cumulative effect of jumps both follow normal distributions, and their joint distribution is independent of the diffusion process. This completes the proof. $\square$

\subsection*{Proof of Proposition \ref{p6}.}

Define a Lyapunov function $V(\tilde{X}(s)) := (\tilde{X}(s) - u)^2$, such that
\[
V(\tilde{X}(s)) =\begin{cases}
    0, \ \text{if}\ \tilde{X}(s) = u,\\
    >0, \ \text{for all}\ \tilde{X}(s) \neq u,\\
    \infty,\ \text{if}\ \tilde{X}(s) \to \infty.
\end{cases}
\]

Therefore, \( V(\tilde{X}(s)) \) is positive definite and radially unbounded, satisfying the standard conditions for Lyapunov stability analysis.

Applying It\^o Lemma to \( V(\tilde{X}(s)) = (\tilde{X}(s) - u)^2 \) yields
\begin{equation*}
    dV(\tilde{X}(s)) = 2 (\tilde{X}(s) - u) d\tilde{X}(s) + \sigma^2 ds.
\end{equation*}

Substituting the dynamics of \( \tilde{X}(s) \) yields

\[
dV(\tilde{X}(s)) = 2 (\tilde{X}(s) - u) \left( \tilde{\theta} (u - \tilde{X}(s)) + m^2(s) \right) ds + 2 (\tilde{X}(s) - u) \sigma \sqrt{\tilde{X}(t)} d\mathcal W(s) + 2 (\tilde{X}(s) - u) \sum_{k:T_k \leq t} J(k).
\]

Taking the expectation of both sides, and noting that the expectations of the stochastic terms \( d\mathcal W(s) \) and \( \sum_{k:T_k \leq t} J(k) \) are zero (since \( \E[d\mathcal W(s)] = 0 \) and \( \E[J(k)] = \gamma \), the mean jump size), 

\[
\E[dV(\tilde{X}(s))] = \E\left[ 2 (\tilde{X}(t) - u) \left( \tilde{\theta} (u - \tilde{X}(s)) + m^2(s) \right) \right].
\]

We now analyze the expression inside the expectation. The first term, \( 2 (\tilde{X}(s) - u) \tilde{\theta} (u - \tilde{X}(s)) \), is of the form:

\[
2 (\tilde{X}(s) - u) \tilde{\theta} (u - \tilde{X}(s)) = -2 \tilde{\theta} (\tilde{X}(s) - u)^2.
\]

This term is always negative for \( \tilde{X}(s) \neq u \), since \( \tilde{\theta} > 0 \) and the square of any non-zero quantity is positive. This term provides a restoring force that drives \( \tilde{X}(s) \) back toward \( u \). The second term, \( 2 (\tilde{X}(s) - u) m^2(t) \), represents the contribution of the control variable. Since \( m^2(s) \) is assumed to be bounded (i.e., \( |m^2(t)| \leq M \), where \( M \) is a constant), this term does not destabilize the process but rather introduces a bounded fluctuation around the equilibrium. Finally, the jump term \( 2 (\tilde{X}(t) - u) \sum_{k:T_k \leq t} J(k) \) has zero mean, as \( \E[J(k)] = \gamma \), the expected jump size, and \( \tilde{X}(s) - u \) is centered around \( u \). Therefore, 

\[
\E\left[ 2 (\tilde{X}(t) - u) \sum_{k:T_k \leq t} J(k) \right] = 0.
\]

Thus, the expected time derivative of the Lyapunov function is

\[
\E[dV(\tilde{X}(s))] = -2 \tilde{\theta} (\tilde{X}(s) - u)^2 + O(m^2(s)).
\]

For Lyapunov stability, we require that \( \E[dV(\tilde{X}(s))] \) is negative definite. This is ensured if

\[
-2 \tilde{\theta} (\tilde{X}(s) - u)^2 + O(m^2(s)) < 0 \quad \text{for all} \quad \tilde{X}(s) \neq u.
\]

Since \( \tilde{\theta} > 0 \) and \( m^2(s) \) is bounded, the drift term ensures that \( \tilde{X}(s) \) is driven towards \( u \), and the fluctuation from the jump diffusion does not prevent convergence to the equilibrium. Specifically, the presence of random shocks (modeled by \( m^2(s) \) and the jumps) does not destabilize the system as long as the intensity of the jumps is sufficiently large and the shocks remain bounded. This completes the proof. $\square$

\subsection*{Proof of Theorem \ref{t0}.}
The Euclidean action function of an atomistic firm is
\begin{align}
	\mathcal A_{0,t}(\tilde X)&=\int_0^t\E_s\bigg\{\exp(-\rho s)\Theta[s,m(s),\tilde X]ds\notag\\
	&\hspace{1cm}+\bigg[\tilde x(s)-\tilde x_0-\mu(s,m,\tilde X)ds-\sigma(s,m,\tilde X)dB(s)\bigg]d\lambda(s)\bigg\},\notag
	\end{align}
	where $E_s$ is the conditional expectation on markup dynamics $\tilde X(s)$ at the beginning of time $s$. For all $\varepsilon>0$, and~the penalizing constant $L_\varepsilon>0$ , define a transitional probability in infinitesimal time interval as
\begin{align}\label{w16}
	\Psi_{s,s+\varepsilon}(\tilde X)&:=\frac{1}{L_\varepsilon} \int_{\mathbb{R}} \exp\biggr\{-\varepsilon  \mathcal{A}_{s,s+\varepsilon}(\tilde X)\biggr\} \Psi_s(\tilde X) d\tilde X(s),
	\end{align}	
	for $\epsilon\downarrow 0$, and $\Psi_s(\tilde X)$ is the value of the transition probability at $s$ and markup dynamics $\tilde X(s)$ with initial condition $\Psi_0(\tilde X)=\Psi_0$.

    For continuous time interval $[s,\tau]$, such as $\tau=s+\varepsilon$,  the stochastic Lagrangian is defines as
\begin{align}\label{action}
	\mathcal{A}_{s,\tau}(\tilde X)&= \int_{s}^{\tau} \E_s\biggr\{\exp(-\rho \nu)\Theta\left[\nu,m(\nu),\tilde X(\nu)\right] d\nu\notag\\
	&\hspace{1cm}+\bigg[\tilde x(\nu)-\tilde x_0-\mu\left[\nu,m(\nu),\tilde X(\nu)\right]d\nu-\sigma\left[\nu,m(\nu),\tilde X(\nu)\right]dB(\nu)\bigg]d\lambda(\nu)\bigg\},
	\end{align}
	with the constant initial condition $\tilde x(0)=\tilde x_0$.	This conditional expectation holds when the function \( m(\nu) \) governing a firm's markup dynamics is established at time \( \nu \), assuming the markup dynamics of all other firms are predetermined. The evolution proceeds as the action function remains stationary. Consequently, the conditional expectation over time is solely influenced by the expectation at the initial time point of the interval \([s, \tau]\).

    By Fubini's Theorem,
\begin{align}\label{action5}
	\mathcal{A}_{s,\tau}(\tilde X)&= \E_s\ \bigg\{ \int_{s}^{\tau}\exp(-\rho\nu)\Theta\left[\nu,m(\nu),\tilde X(\nu)\right] d\nu\notag\\
	&\hspace{1cm}+\bigg[\tilde x(\nu)-\tilde x_0-\mu\left[\nu,m(\nu),\tilde X(\nu)\right]d\nu-\sigma\left[\nu,m(\nu),\tilde X(\nu)\right]dB(\nu)\bigg]d\lambda(\nu) \bigg\}.
	\end{align}
	By It\^o's Theorem, there exists a function $\tilde h[\nu,\tilde X(\nu)]\in C^2([0,\infty)\times\mathbb{R})$ such that  $\Upsilon(\nu)=\tilde h[\nu,\tilde X(\nu)]$, where $\Upsilon(\nu)$ is an It\^o~process. Assuming 
	\[
	\tilde h[\nu+\Delta \nu,\tilde X(\nu)+\Delta \tilde X(\nu)]= \tilde x(\nu)-\tilde x_0-\mu\left[\nu,m(\nu),\tilde X(\nu)\right]d\nu-\sigma\left[\nu,m(\nu),\tilde X(\nu)\right]dB(\nu),
	\]
 Equation~(\ref{action5}) yields,
\begin{align}\label{action6}
	\mathcal{A}_{s,\tau}(\tilde X)&=\E_s \bigg\{ \int_{s}^{\tau}\exp(-\rho\nu)\Theta\left[\nu,m(\nu),\tilde X(\nu)\right] d\nu+ \tilde h\left[\nu+\Delta \nu,\tilde X(\nu)+\Delta \tilde X(\nu)\right]d\lambda(\nu)\bigg\}.
	\end{align}

    By It\^o's Lemma,
\begin{align}\label{action7}
	\varepsilon\mathcal{A}_{s,\tau}(\tilde X)&= \E_s \bigg\{\varepsilon \exp(-\rho s)\Theta\left[\nu,m(\nu),\tilde X(\nu)\right]+ \varepsilon \tilde h[s,\tilde X(s)]d\lambda(s)+ \varepsilon \tilde h_s[s,\tilde X(s)]d\lambda(s) \notag\\
	&\hspace{.25cm}+\varepsilon \tilde h_{\tilde X}[s,\tilde X(s)]\mu\left[\nu,m(\nu),\tilde X(\nu)\right]d\lambda(s) \notag\\
    &\hspace{.5cm}+\varepsilon \tilde h_{\tilde X}[s,\tilde X(s)]\sigma\left[\nu,m(\nu),\tilde X(\nu)\right]d\lambda(s) dB(s)\notag\\
	&\hspace{1cm}+\mbox{$\frac{1}{2}$}\varepsilon\left(\sigma\left[s,x(s),u(s)\right]\right)^2h_{\tilde X\tilde X}[s,\tilde X(s)]d\lambda(s)+o(\varepsilon)\bigg\},
	\end{align}
	where $\tilde h_s=\frac{\partial}{\partial s} h$, $h_{\tilde X}=\frac{\partial}{\partial \tilde X} h$ and $h_{\tilde X\tilde X}=\frac{\partial^2}{\partial (\tilde X)^2} h$, and~we use the condition $[d \tilde X(s)]^2\approx\varepsilon$ with 
    \[
d\tilde X(s)\approx\varepsilon\mu\left[s,m(s),\tilde X(s)\right]+\sigma\left[s,m(s),\tilde X(s)\right]dB(s).
\]
We apply It\^o's Lemma along with a comparable approximation to estimate the integral. As \(\varepsilon\) approaches zero, dividing by \(\varepsilon\) and taking the conditional expectation results in
\begin{align}\label{action8}
	\varepsilon\mathcal{A}_{s,\tau}(\tilde X)&= \E_s \bigg\{\varepsilon \exp(-\rho s)\Theta\left[s,m(s),\tilde X(s)\right]+\varepsilon \tilde h[s,\tilde X(s)]d\lambda(s)+ \varepsilon \tilde h_s[s,\tilde X(s)]d\lambda(s)\notag\\
	&\hspace{.25cm}+\varepsilon \tilde h_{\tilde X}[s,\tilde X(s)]\mu\left[s,m(s),\tilde X(s)\right]d\lambda(s)\notag\\
	&\hspace{.5cm}+\mbox{$\frac{1}{2}$}\varepsilon\sigma^{2}\left[s,m(s),\tilde X(s)\right]h_{\tilde X\tilde X}[s,\tilde X(s)]d\lambda(s)+o(1)\bigg\},
	\end{align}
	since $\E_s[dB(s)]=0$ and $\E_s[o(\varepsilon)]/\varepsilon\ra 0$ for all $\varepsilon\downarrow 0$. For~$\varepsilon\downarrow 0$, denote a transition probability at $s$ as $\Psi_s(\tilde X)$. By Equation~(\ref{w16}),
	\begin{multline}\label{action9}
	\Psi_{s,\tau}(\tilde X)=\frac{1}{L_\epsilon^i}\int_{\mathbb{R}} \exp\biggr\{-\varepsilon \big[\exp(-\rho s)\Theta\left[s,m(s),\tilde X(s)\right]+\tilde h[s,\tilde X(s)]d\lambda(s)\\
	 +\tilde h_s[s,\tilde X(s)]d\lambda(s) +h_{\tilde X}[s,\tilde X(s)]\mu\left[s,m(s),\tilde X(s)\right]d\lambda(s)\\
	+\mbox{$\frac{1}{2}$}\left(\sigma\left[s,m(s),\tilde X(s)\right]\right)^2\tilde h_{\tilde X\tilde X}[s,\tilde X(s)]d\lambda(s)\big]\biggr\} \Psi_s(\tilde X) d\tilde X(s)+o(\varepsilon^{1/2}).
	\end{multline}

    Since $\varepsilon\downarrow 0$, first-order Taylor series expansion on the left-hand side of Equation~(\ref{action9}) gives
	\begin{multline}\label{action10}
	\Psi_{s}(\tilde X)+\varepsilon  \frac{\partial \Psi_{s}(\tilde X) }{\partial s}+o(\varepsilon)=\frac{1}{L_\varepsilon}\int_{\mathbb{R}} \exp\biggr\{-\varepsilon \big[\exp(-\rho s)\Theta\left[s,m(s),\tilde X(s)\right]+\tilde h[s,\tilde X(s)]d\lambda(s) \\
+\tilde h_s[s,\tilde X(s)]d\lambda(s)+\tilde h_{\tilde X}[s,\tilde X(s)]\mu\left[s,m(s),\tilde X(s)\right]d\lambda(s)\\
	+\mbox{$\frac{1}{2}$}\left(\sigma\left[s,m(s),\tilde X(s)\right]\right)^2\tilde h_{\tilde X\tilde X}[s,\tilde X(s)]d\lambda(s)\big]\biggr\} \Psi_s(\tilde X) d\tilde X(s)+o(\varepsilon^{1/2}).
	\end{multline}
    Now define $\tilde X(s)-\tilde X(\tau):=\tilde \xi$ such that $\tilde X(s)=\tilde X(\tau)+\tilde\xi$. For $\tilde\xi\neq 0$, and for~a positive number $\eta<\infty$ assume $|\tilde\xi|\leq\sqrt{\frac{\eta\varepsilon}{\tilde X(s)}}$ such that for $\varepsilon\downarrow 0$, $\tilde\xi$ attains smaller values and the markup dynamics $0<\tilde X(s)\leq\eta\varepsilon/(\tilde\xi)^2$. Therefore,
	\begin{multline*}
	\Psi_{s}(\tilde X)+\varepsilon\frac{\partial \Psi_{s}(\tilde X)}{\partial s}=\frac{1}{L_\epsilon}\int_{\mathbb{R}} \left[\Psi_{s}(\tilde X)+\tilde\xi\frac{\partial \Psi_{is}(\tilde X)}{\partial \tilde X}+o(\epsilon)\right]\\
	\times \exp\biggr\{-\varepsilon \big[\exp(-\rho s)\Theta\left[s,m(s),\tilde X\right]+\tilde h[s,\tilde X(s)]d\lambda(s)\\
    +h_{\tilde X}[s,\tilde X(s)]\mu\left[s,m(s),\tilde X(s)\right]d\lambda(s)\\
	+\mbox{$\frac{1}{2}$}\left(\sigma\left[s,m(s),\tilde X(s)\right]\right)^2h_{\tilde X\tilde X(s)}[s,\tilde X(s)]d\lambda(s)\big]\biggr\} d\tilde\xi+o(\varepsilon^{1/2}).
	\end{multline*}
    Before computing Gaussian integral of  each term of the right-hand side of the above Equation, define a $C^2$ function 
	\begin{align*}
	\ell[s,\tilde\xi,\lambda(s),m(s)]&=\exp(-\rho s)\Theta\left[s,m(s),\tilde X(s)+\tilde\xi\right]+\tilde h\left[s,\tilde X(s)+\tilde\xi\right]d\lambda(s) +\tilde h_s\left[s,\tilde X(s)+\tilde\xi\right]d\lambda(s)\notag\\
&\hspace{.25cm}+\tilde h_{\tilde X}\left[s,\tilde X(s)+\tilde\xi\right]\mu\left[s,m(s),\tilde X(s)+\tilde\xi\right]d\lambda(s)\\
	&\hspace{1cm}+\mbox{$\frac{1}{2}$}\sigma^{2}\left[s,m(s),\tilde X(s)+\tilde\xi\right]h_{\tilde X\tilde X}\left[s,\tilde X(s)+\tilde\xi\right]d\lambda(s)+o(1).
	\end{align*}
     Therefore,
\begin{align}\label{action13}
	\Psi_{s}(\tilde X)+\varepsilon \frac{\partial \Psi_{s}(\tilde X) }{\partial s}&=\Psi_{s}(\tilde X)\frac{1}{L_\epsilon}\int_{\mathbb{R}}\exp\left\{-\varepsilon \ell\left[s,\tilde\xi,\lambda(s),m(s)\right] \right\}d\tilde\xi\notag\\
	&+\frac{\partial \Psi_{s}(\tilde X)}{\partial \tilde X}\frac{1}{L_\epsilon}\int_{\mathbb{R}}\tilde\xi\exp\left\{-\varepsilon \ell\left[s,\tilde\xi,\lambda(s),m(s)\right] \right\}d\tilde\xi+o(\varepsilon^{1/2}).
	\end{align}
After taking $\varepsilon\downarrow 0$, $\Delta m\downarrow0$ and a Taylor series expansion with respect to $\tilde X$ of $\ell\left[s,\tilde\xi,\lambda(s),m(s)\right]$ yields, 
	\begin{align*}
	\ell\left[s,\tilde\xi,\lambda(s),m(s)\right]&=\ell[s, \tilde X(\tau),\lambda(s),m(s)]+\ell_{\tilde X}\left[s, \tilde X(\tau),\lambda(s),m(s)\right]\left[\tilde\xi-\tilde X(\tau)\right]\notag\\
	&\hspace{1cm}+\mbox{$\frac{1}{2}$}\ell_{\tilde X\tilde X}\left[s, \tilde X(\tau),\lambda(s),m(s)\right]\left[\tilde\xi-\tilde X(\tau)\right]^2+o(\varepsilon).
	\end{align*}
	Define $\tilde Y:=\tilde \xi-\tilde X(\tau)$ so that $ d\tilde \xi=d\tilde Y$. The first integral on the right-hand side of Equation~(\ref{action13}) yields
\begin{align}\label{action14}
	&\int_{\mathbb{R}} \exp\big\{-\varepsilon \ell\left[s,\tilde\xi,\lambda(s),m(s)\right]\} d\tilde\xi\notag\\
	&=\exp\big\{-\varepsilon \ell[s, \tilde X(\tau),\lambda(s),m(s)]\big\}\notag\\
	&\hspace{1cm}\int_{\mathbb{R}} \exp\biggr\{-\varepsilon \biggr[\ell_{\tilde X}[s, \tilde X(\tau),\lambda(s),m(s)]y+\mbox{$\frac{1}{2}$}\ell_{\tilde X\tilde X}[s,\tilde X(\tau),\lambda(s),m(s)] \tilde Y^2\biggr]\biggr\} d\tilde Y.
	\end{align}
Assuming  $\tilde a=\frac{1}{2} \ell_{\tilde X\tilde X}[s, \tilde X(\tau),\lambda(s),m(s)]$ and $\tilde b=\ell_{\tilde X}[s, \tilde X(\tau),\lambda(s),m(s)]$ the argument of the exponential function in Equation~(\ref{action14}) becomes,
\begin{align}\label{action15}
	\tilde a\tilde Y^2+\tilde b\tilde Y&=\tilde a\left[(\tilde Y)^2+\frac{\tilde b}{\tilde a}\tilde Y\right]\approx \tilde a\left(\tilde Y+\frac{\tilde b}{2\tilde a}\right)^2-\frac{(\tilde b)^2}{4(\tilde a)^2},
	\end{align}
    as $\tilde a>0$ and $\tilde a\ra 0$.

    Therefore,
\begin{align}\label{action16}
	&\exp\bigg\{-\varepsilon \ell[s, \tilde X(\tau),\lambda(s),m(s)]\bigg\}\int_{\mathbb{R}} \exp\big\{-\varepsilon [\tilde a(\tilde Y)^2+\tilde b\tilde Y]\big\}d\tilde Y\notag\\
	&=\exp\left\{\varepsilon \left[\frac{\tilde b^2}{4\tilde a^2}-\ell[s, \tilde X(\tau),\lambda(s),m(s)]\right]\right\}\int_{\mathbb{R}} \exp\left\{-\left[\varepsilon \tilde a\left(\tilde Y+\frac{\tilde b}{2\tilde a}\right)^2\right]\right\} d\tilde Y\notag\\
	&=\sqrt{\frac{\pi}{\varepsilon \tilde a}}\exp\left\{\varepsilon \left[\frac{\tilde b^2}{4\tilde a^2}-\ell[s, \tilde X(\tau),\lambda(s),m(s)]\right]\right\},
	\end{align}
	and
\begin{align}\label{action17}
	&\Psi_{s}(\tilde X) \frac{1}{L_\varepsilon} \int_{\mathbb{R}} \exp\big\{-\varepsilon \ell\left[s,\tilde\xi,\lambda(s),m(s)\right]\} d\tilde\xi\notag\\
	 &=\Psi_{s}(\tilde X) \frac{1}{L_\varepsilon} \sqrt{\frac{\pi}{\varepsilon \tilde a}}\exp\left\{\varepsilon \left[\frac{\tilde b^2}{4\tilde a^2}-\ell[s, \tilde X(\tau),\lambda(s),m(s)]\right]\right\}. 
	\end{align}

Substituting $\tilde\xi=\tilde X(\tau)+\tilde Y$ into the second integrand of the right-hand side of Equation~(\ref{action13}) yields
\begin{align}\label{action18}
	& \int_{\mathbb{R}} \tilde\xi \exp\left[-\varepsilon\left \{\ell\left[s,\tilde\xi,\lambda(s),m(s)\right]\right\}\right] d\tilde\xi\notag\\
	&=\exp\{-\varepsilon \ell[s, \tilde X(\tau),\lambda(s),m(s)]\}\int_{\mathbb{R}} [\tilde X(\tau)+\tilde Y] \exp\left[-\varepsilon \left[\tilde a\tilde Y^2+\tilde b\tilde Y\right]\right] d\tilde Y\notag\\
	&=\exp\left\{\varepsilon \left[\frac{\tilde b^2}{4\tilde a^2}-\ell[s, \tilde X(\tau),\lambda(s),m(s)]\right]\right\} \biggr[\tilde X(\tau)\sqrt{\frac{\pi}{\varepsilon \tilde a}}\notag\\
	&\hspace{1cm}+\int_{\mathbb{R}} \tilde Y \exp\left\{-\varepsilon \left[\tilde a\left(\tilde Y+\frac{\tilde b}{2\tilde a}\right)^2\right]\right\} d\tilde Y\biggr].
	\end{align}

    Substituting $k=\tilde Y+\tilde b/(2\tilde a)$ in Equation~(\ref{action18}) yields,
\begin{align}\label{action19}
	&\exp\left\{\varepsilon \left[\frac{\tilde b^2}{4\tilde a^2}-\ell[s, \tilde X(\tau),\lambda(s),m(s)]\right]\right\} \biggr[\tilde X(\tau)\sqrt{\frac{\pi}{\varepsilon \tilde a}}+\int_{\mathbb{R}} \left(k-\frac{\tilde b}{2\tilde a}\right) \exp[-\tilde a\varepsilon k^2] dk\biggr]\notag\\
	&=\exp\left\{\varepsilon \left[\frac{\tilde b^2}{4\tilde a^2}-\ell[s, \tilde X(\tau),\lambda(s),m(s)]\right]\right\} \biggr[\tilde X(\tau)-\frac{\tilde b}{2\tilde a}\biggr]\sqrt{\frac{\pi}{\varepsilon\tilde a}}.
	\end{align}
	Hence,
\begin{align}\label{action20}
	&\frac{1}{L_\varepsilon}\frac{\partial \Psi_{s}(\tilde X)}{\partial \tilde X}\int_{\mathbb{R}} \tilde\xi \exp\left[-\varepsilon \ell\left[s,\tilde\xi,\lambda(s),m(s)\right]\right] d\tilde\xi\notag\\
	&=\frac{1}{L_\varepsilon}\frac{\partial \Psi_{s}(\tilde X)}{\partial \tilde X} \exp\left\{\varepsilon \left[\frac{\tilde b^2}{4\tilde a^2}-\ell[s, \tilde X(\tau),\lambda(s),m(s)]\right]\right\} \biggr[\tilde X(\tau)-\frac{\tilde b}{2\tilde a}\biggr]\sqrt{\frac{\pi}{\varepsilon \tilde a}}.
	\end{align}

    Plugging in Equations~(\ref{action17}) and~(\ref{action20})  into Equation~(\ref{action13}) yields,
\begin{align}\label{action24}
	&\Psi_{s}(\tilde X)+\varepsilon \frac{\partial \Psi_{s}(\tilde X)}{\partial s}\notag\\
	&=\frac{1}{L_\varepsilon} \sqrt{\frac{\pi}{\varepsilon \tilde a}}\Psi_{s}(\tilde X) \exp\left\{\varepsilon \left[\frac{\tilde b^2}{4\tilde a^2}-\ell[s, \tilde X(\tau),\lambda(s),m(s)]\right]\right\}\notag\\
	&+\frac{1}{L_\varepsilon}\frac{\partial \Psi_{s}(\tilde X)}{\partial \tilde X} \sqrt{\frac{\pi}{\varepsilon \tilde a}} \exp\left\{\varepsilon \left[\frac{\tilde b^2}{4\tilde a^2}-\ell[s, \tilde X(\tau),\lambda(s),m(s)]\right]\right\} \biggr[\tilde X(\tau)-\frac{\tilde b}{2\tilde a}\biggr]+o(\varepsilon^{1/2}).
	\end{align}
	Since $\ell$ be in Schwartz space,the derivatives  are rapidly falling.  Moreover, assuming $0<|\tilde b|\leq\eta\varepsilon$, $0<|\tilde a|\leq\mbox{$\frac{1}{2}$}\left[1-\tilde\xi^{-2}\right]^{-1}$ and $\tilde X(s)-\tilde X(\tau)=\tilde \xi$ yields,
	\begin{align*}
	\tilde X(\tau)-\frac{\tilde b}{2\tilde a}=\tilde X(s)-\tilde\xi-\frac{\tilde b}{2\tilde a}=\tilde X(s)-\frac{\tilde b}{2\tilde a},\ \forall\ \tilde\xi\downarrow 0,
	\end{align*}
	such that 
	\begin{align*}
	\bigg|\tilde X(s)-\frac{\tilde b}{2\tilde a}\bigg|=\biggr|\frac{\eta\varepsilon}{\tilde\xi^2}-\eta\varepsilon\left[1-\frac{1}{\tilde\xi^2}\right]\biggr|\leq\eta\varepsilon.
	\end{align*}

    Therefore, the~Wick rotated Schr\"odinger-type Equation for the atomistic firm is,
\begin{align}\label{action25.4}
	\frac{\partial \Psi_{s}(\tilde X)}{\partial s}&=\left[\frac{\tilde b^2}{4\tilde a^2}-\ell[s, \tilde X(\tau),\lambda(s),m(s)]\right]\Psi_{s}(\tilde X).
	\end{align}
	Differentiating  Equation~(\ref{action25.4}) with respect to $m$ yields
\begin{align}\label{w18}
	\left\{\frac{2\ell_{\tilde X}}{\ell_{\tilde X\tilde X}}\left[\frac{\ell_{\tilde X\tilde X}\ell_{\tilde Xm}-\ell_{\tilde X}\ell_{\tilde X\tilde X m}}{(\ell_{\tilde X\tilde X})^2}\right]-\ell_m\right\}\Psi_{s}(\tilde X)=0,
	\end{align}
	where $\ell_{\tilde X}=\frac{\partial}{\partial \tilde X} \ell$, $\ell_{\tilde X\tilde X}=\frac{\partial^2}{\partial (\tilde X)^2} \ell$, $\ell_{\tilde X m}=\frac{\partial^2}{\partial \tilde X\partial m} \ell$ and $\ell_{\tilde X\tilde X m}=\frac{\partial^3}{\partial (\tilde X)^2\partial m} \ell=0$. Thus, optimal complementarity strategy of a player due to random monetary shock in stochastic markup dynamics is represented as $m^{*}(s,\tilde X)$ and is found by setting Equation~(\ref{w18}) equal to zero. Hence, $m^{*}(s,\tilde X)$ is the solution of the following Equation
\begin{align}\label{w21}
	\ell_u (\ell_{\tilde X\tilde X})^2=2\ell_{\tilde X} \ell_{\tilde X m}.
	\end{align}
	This completes the proof. $\square$
	
\subsection*{Availability of data}
Together with the CPI data, the stock price data for  Nestlé, Palmolive, Westrock Coffee, and Dover are used for 2022 and 2023.
\subsection*{Competing interests}
No potential conflict of interest was reported by the authors.	
\subsection*{Funding}
No funding has been used to write this paper.

	\bibliographystyle{apalike}
	\bibliography{bib}
\end{document}